\newcolumntype{C}{>{$}c<{$}}
\pgfplotsset{compat=1.10}
\definecolor{color0}{RGB}{230,159,0}
\definecolor{color1}{RGB}{86,180,233}
\definecolor{color2}{RGB}{0,158,115}
\definecolor{color3}{RGB}{240,228,66}
\definecolor{color4}{RGB}{0,114,178}
\definecolor{color5}{RGB}{213,94,0}
\definecolor{color6}{RGB}{204,121,167}
\newcommand{\calG}{\mathcal{G}}
\newcommand{\calL}{\mathcal{L}}
\newcommand{\calT}{\mathcal{T}}
\newcommand{\calF}{\mathcal{F}}
\newcommand{\calC}{\mathcal{C}}
\newcommand{\calI}{\mathcal{I}}
\newcommand{\calS}{\mathcal{S}}
\newcommand{\RR}{\mathbb{R}}
\newcommand{\NN}{\mathbb{N}}
\newcommand{\EE}{\mathbf{E}}
\renewcommand{\Pr}{\mathbf{P}}
\newcommand{\ALG}{\mathrm{ALG}}
\newcommand{\OPT}{\mathrm{OPT}}
\newcommand{\R}{\mathbf{R}}
\newcommand{\mycomment}[1]{}
\newtheorem{theorem}{Theorem}
\newtheorem{lemma}{Lemma}
\newtheorem{proposition}{Proposition}
\newtheorem{claim}{Claim}
\tikzstyle{component}=[draw opacity=0.4,draw=black,line width=1.0cm,line cap=round,line join=round]
\def\BState{\State\hskip-\ALG@thistlm}
\newlength{\algofontsize}
\def\equationautorefname~#1\null{(#1)\null}
\patchcmd{\hyper@makecurrent}{%
    \ifx\Hy@param\Hy@chapterstring
        \let\Hy@param\Hy@chapapp
    \fi
}{%
    \iftoggle{inappendix}{%true-branch
        % list the names of all sectioning counters here
        \@checkappendixparam{chapter}%
        \@checkappendixparam{section}%
        \@checkappendixparam{subsection}%
        \@checkappendixparam{subsubsection}%
        \@checkappendixparam{paragraph}%
        \@checkappendixparam{subparagraph}%
    }{}%
}{}{\errmessage{failed to patch}}
\newcommand*{\@checkappendixparam}[1]{%
    \def\@checkappendixparamtmp{#1}%
    \ifx\Hy@param\@checkappendixparamtmp
        \let\Hy@param\Hy@appendixstring
    \fi
}
\apptocmd{\appendix}{\toggletrue{inappendix}}{}{\errmessage{failed to patch}}
\begin{document}
\algrenewcommand\algorithmicrequire{\textbf{Input:}}
\algrenewcommand\algorithmicensure{\textbf{Output:}}

\title{Optimal Guarantees for Online Selection Over Time
\vspace{.4cm}
}

\author{
Sebastian Perez-Salazar
\thanks{Department of Computational Applied Mathematics and Operations Research, Rice University, USA.}
\and Victor Verdugo
\thanks{Institute for Mathematical and Computational Engineering, Pontificia Universidad Católica de Chile, Chile.}
\thanks{Department of Industrial and Systems Engineering, Pontificia Universidad Católica de Chile, Chile.}
}

\date{}
\maketitle

\begin{abstract}

Prophet inequalities are a cornerstone in optimal stopping and online decision-making. Traditionally, they involve the sequential observation of $n$ non-negative independent random variables and face irrevocable accept-or-reject choices. 
The goal is to provide policies that provide a good approximation ratio against the optimal offline solution that can access all the values upfront---the so-called prophet value. In the \emph{prophet inequality over time problem} (POT), the decision-maker can commit to an accepted value for $\tau$ units of time, during which no new values can be accepted. This creates a trade-off between the duration of commitment and the opportunity to capture potentially higher future values. 

In this work, we provide best possible worst-case approximation ratios in the IID setting of POT for single-threshold algorithms and the optimal dynamic programming policy. 
We show a single-threshold algorithm that achieves an approximation ratio of $(1+e^{-2})/2\approx 0.567$, and we prove that no single-threshold algorithm can surpass this guarantee. 
With our techniques, we can analyze simple algorithms using $k$ thresholds and show that with $k=3$ it is possible to get an approximation ratio larger than $\approx 0.602$. 
Then, for each $n$, we prove it is possible to compute the tight worst-case approximation ratio of the optimal dynamic programming policy for instances with $n$ values by solving a convex optimization program.
A limit analysis of the first-order optimality conditions yields a nonlinear differential equation showing that the optimal dynamic programming policy's asymptotic worst-case approximation ratio is $\approx 0.618$. 
Finally, we extend the discussion to adversarial settings and show an optimal worst-case approximation ratio of $\approx 0.162$ when the values are streamed in random order.
\end{abstract}

\thispagestyle{empty}
\newpage
\setcounter{page}{2}

\section{Introduction}\label{sec:intro}

Prophet inequalities~\citep{hill1982comparisons,kertz1986stop,krengel1977semiamarts,samuel1984comparison} have become a fundamental model for studying Bayesian problems in the last fifty years. 
In the classic prophet inequality formulation, a sequence of non-negative independent random variables $X_1,\ldots,X_n$ with known distributional information is revealed one by one to a decision-maker. Upon observing $X_t$ and unaware of future values, the decision-maker has to irrevocably accept the value $X_t$ and stop the process or disregard the value forever to observe the value $t+1$, if any. Hence, the decision-maker faces the dilemma of accepting a value that seems favorable versus the opportunity to observe a better value in the future. The decision-maker is interested in finding an algorithm (a.k.a., policy or strategy) to maximize her expected accepted value. To measure the quality guarantee of such an algorithm, we use the \emph{approximation ratio}, which is the ratio between the expected value obtained by the algorithm and the expected offline maximum $\EE[\max\{X_1,\ldots,X_n \}]$---the so-called \emph{prophet value}. This ratio can be interpreted as a measure of the \emph{price paid} by a decision-maker who cannot observe the future. It is known that a simple single threshold rule produces a strategy where the decision-maker can obtain at least an approximation ratio of $1/2$, and this is best possible~\citep{krengel1977semiamarts,samuel1984comparison}. 

In this work, we focus on \emph{prophet inequality over time} (POT) models~\citep{abels2023prophet, disser2020hiring}. As opposed to the classic prophet inequality problem, in this variant, the decision-maker can accept a value for several units of time. Once the decision-maker commits to accept a value $X_t$ for the next $\tau$ units (assuming $\tau+t\leq n$), she will be unable to accept values during the times $t+1,\ldots,t+\tau-1$. The model has been studied in its cost-minimization version by~\cite{disser2020hiring} and in its profit-maximization version by~\cite{abels2023prophet}. Both versions have implications for procurement problems (see, e.g.,~\citep{aminian2023real,qin2023minimization}). Models over time pose a new dilemma for the online decision-maker: In the profit-maximization case, for instance, if the decision-maker accepts a value for too long, she risks missing higher values observed during the committed time, whereas if the decision-maker accepts a value for a short time, this value might have been the largest in the remainder of the time. A similar trade-off occurs in the cost-minimization case.

Despite both models being captivating in their own right, we focus on the profit-maximization case in the rest of this work.
In this setting, the optimal offline value corresponds to $\sum_{t=1}^n \EE[\max\{X_1,\ldots,X_t \}]$.~\cite{abels2023prophet} focus on the independent and identically distributed (IID) setting, where all the observed values share a common distribution. The authors characterize the optimal policy as the one that computes a sequence of decreasing thresholds such that if the $t$-th observed value $X_t$ is larger than the $t$-th threshold, the value $X_t$ is accepted for the remainder $n-t+1$ units of time. 
%Otherwise, $X_t$ is accepted for only one unit of time. 
They provide a single threshold algorithm with an approximation ratio of $\approx 0.396$ and an algorithm with several decreasing thresholds that attain an asymptotic approximation ratio of $0.598$ when $n$ goes to infinity. The authors also provide an upper bound on the worst-case approximation ratio of the optimal policy, given by the inverse of the golden ratio $\varphi^{-1} = 2/(1+\sqrt{5})\approx 0.618$. 
In recent work,~\cite{cristi2024planningprophet} provide an algorithm with an approximation ratio of $0.5$, which also works for instances with independent but not necessarily identically distributed random variables.

\subsection{Our Contribution and Results}

In this work, we provide best possible worst-case approximation ratios in the IID setting for single-threshold algorithms and the optimal dynamic programming policy. 
In what follows, we summarize our results and techniques according to the nature of the algorithm and model variant: algorithms with few thresholds, optimal dynamic programming policy, and adversarially valued settings.

\noindent{\bf Improved guarantees via simple algorithms.} We first focus on algorithms that use a limited number of thresholds. This class of algorithms splits the integer interval $1,\ldots,n$ into $k$ consecutive intervals $I_1,\ldots,I_k$ and in each interval assigns a threshold $x_1,\ldots,x_k>0$, respectively. Then, if $X_t\geq x_i$ and $t\in I_i$, the value $X_t$ is accepted for the remaining $n-t+1$ units of time; otherwise, $X_t$ is accepted only for one unit of time. 
In our first contribution, we show a single-threshold algorithm (i.e., $k=1$) achieving an approximation ratio of $(1 + e^{-2})/2-o(n)\approx 0.567-o(n)$.
Furthermore, we show this guarantee is tight, as for every $\varepsilon>0$, there exists an instance for which the approximation ratio of any single-threshold algorithm can not be larger than $(1 + e^{-2})/2+\varepsilon$.
In particular, our approximation ratio improves upon the $0.396$ previously single-threshold best-known guarantee provided by~\cite{abels2023prophet}.

Our single-threshold algorithm is quantile-based: it receives a quantile $q\in (0,1)$ and computes the threshold $x$ such that $q=\Pr(X\geq x)$. We prove an instance-independent lower bound on the approximation ratio of any algorithm that uses a quantile $q$ and use this bound to show that the quantile $q=2/(n+1)$ produces the desired approximation. 
We remark that a similar quantile was computed by~\cite{abels2023prophet}, but our general quantile lower bound provides a tighter analysis. 
In a nutshell, we stray from the standard approximate stochastic dominance technique used in prophet inequalities, and instead, we use a functional density argument to compute the value of the algorithm and the optimal offline value in terms of the derivative of the inverse of the probability distribution. This permits a sharper ratio comparison between the value of the algorithm and the optimal offline value. We provide the details in Section~\ref{sec:small_thresholds}.

Our methodology is highly generalizable for larger number of thresholds. For any $k$, we prove a general instance-independent lower bound for the approximation ratio obtained by algorithms that use $k$ thresholds $x_1,\ldots,x_k$ that are computed via $q_t=\Pr(X \geq x_t)$. We utilize this formula to provide asymptotic approximations for a small number of thresholds: for $k=2$ thresholds, we obtain an approximation of at least $0.587$, and for $k=3$ thresholds, we obtain an approximation ratio of at least $0.602$. In particular, this shows that three thresholds are sufficient to surpass the best current asymptotic approximation ratio of $0.598$ by~\cite{abels2023prophet}.

\noindent{\bf Optimal policy guarantees via convex optimization.} 
Next, we study the approximation ratio of the optimal policy.
We show that for each $n$, the optimal policy's worst-case approximation ratio $\gamma_n$ can be obtained by solving a convex optimization program over a simple polyhedron in the positive orthant. In a nutshell, the variables of this program represent the difference between consecutive optimal thresholds in a worst-case instance, and the constraints encode the monotonicity requirements to have thresholds achievable by the optimal policy. On the other hand, the objective captures the difference between the optimal policy value, scaled up by a factor of $1+\varepsilon$, and the benchmark; we show this function is, in fact, convex in the feasible region. We remark that finding the smallest $\varepsilon$ for which the optimal value is non-negative is equivalent to finding the worst-case approximation factor, which is recovered by $1/(1+\varepsilon)$.

Our approach is based on two key steps.
In the first step, we reduce the problem of finding the worst-case approximation ratio to an infinite-dimensional optimization problem over the set of positive sequences satisfying three particular monotonicity properties.
We show that these monotonicity properties capture precisely the space of possible threshold sequences defined by the optimal policy.
While the first step already allows us to reduce the problem to a constrained problem in infinite dimension, in the second step, we study the structure of the infinite-dimensional optimization problem and show that it can be further reduced to a nicely behaved finite-dimensional convex optimization problem.
We provide the details in Section \ref{sec:cvx-optimal}.

Our characterization theorem allows us to recover the tight worst-case approximation ratio of the optimal policy for each value of $n$, which is equal to $1/(1+\varepsilon_n)$, where $\varepsilon_n$ is an optimal parameter associated with the convex program.
Using the system of first-order optimality conditions, we can compute explicitly $\varepsilon_n$, and by performing a limit analysis of this system when $n\to \infty$, we obtain a differential equation where the optimal asymptotic approximation ratio is embedded as a parameter.
More specifically, we seek a function $y:[0,1]\to [0,1]$ such that
\begin{align}
    h(-\ln (y(t)))' &= (1+\varepsilon)\cdot  t\cdot  \exp \left({\int_t^1 \ln (y(s))\, \mathrm{d}s}\right) \text{ for every } t\in (0,1), \label{ode:1}\\
    y(0) = 0,\; y(1) & = 1, \label{ode:3}
\end{align}
where $h(u)=  (1-e^{-u}(1+u))/u^2$ and $\varepsilon$ is the limit of $\varepsilon_n$. 
We provide more details of this asymptotic analysis in Appendix~\ref{app:asymptotic-cvx}. 
By solving numerically this differential equation, we obtain a guarantee for the optimal policy given by $\liminf_n1/(1+\varepsilon_n)\approx 0.618$. We leave as an open question whether $\varepsilon$ is equal to the inverse of the golden ratio, $\varphi^{-1} = 2/(1+\sqrt{5})\approx 0.618$.
In Table~\ref{tab:table_1}, we summarize our improved guarantees and comparison with existing results.
\begin{table}[h!]
        %\footnotesize
        \centering
        \resizebox{\columnwidth}{!}{\begin{tabular}{|c|c|c|c|c|}
        \hline
             Type of result       & \multicolumn{2}{c|}{Lower bounds} & Upper bounds \\
            \hline
             &  $k$ thresholds & Optimal policy &  \\ \cline{2-3}
%            &  $k=1$ & $k=2$ &$k=3$ & $k=n$ & \\
            %\hline
             & $(1/e^2+1)/2$ ($k=1$, large $n$)   &  $1/(1+\varepsilon_n)$ (opt., any $n$)   & $1/(1+\varepsilon_n)$ (opt., any $n$) \\
             Our results & 0.587 $(k=2,n\to \infty)$  & $0.618$ ($n\to \infty$)  & $(1/e^2+1)/2 $ ($k=1$ threshold) \\
             & 0.602 $(k=3,n\to \infty)$   &  &  \\
            \hline
            \cite{abels2023prophet}  & 0.398 ($k=1$, any $n$) & $0.598$ ($n\to \infty$) & $1/\varphi\approx 0.618$ \\
            \hline
            \cite{cristi2024planningprophet}  & - & $0.5$ (any $n$, non-IID) &  - \\
            \hline
        \end{tabular}}
        \caption{{Known approximation factors for POT.}}
        \label{tab:table_1}
    \end{table}
%\spcom{No conjecture that limit $=$ golden ratio? I think it's odd not to talk about it.}

\noindent{\bf Adversarial settings.} Finally, we further the discussion of models over time by studying settings with less distributional information. In this variant, a sequence of values $u_1 > \cdots > u_n\geq 0$ is streamed one by one to the decision-maker, who must then decide how long to accept each value. 
First, we note that if the decision-maker observes the sequence in an adversarial order, then no algorithm can guarantee a constant approximation ratio of $\OPT=n \cdot u_1$. 
%(see Proposition~\ref{prop:no_constant_guarantee_adversarial_arrival} in Appendix~\ref{app:intro}). 
We then examine the random order model, where the values $u_1,\ldots,u_n$ are presented to the decision-maker according to an order chosen uniformly at random, and we term this problem the \emph{secretary over time} (SOT) problem. 
We show that a simple sample-and-then-exploit strategy attains a constant worst-case approximation ratio of $\approx 0.1619$, which is the best possible.
Our algorithmic solution follows a similar approach to the solution of the classical secretary problem but also incorporates the structure found in the single-threshold solutions for POT.
We provide the details in Section \ref{sec:RO_arrival}.

\subsection{Related Work}

The prophet inequality problem was introduced by Krengel, Sucheston, and Garling~\citep{krengel1977semiamarts}. In the last decade, the prophet inequality problems has gained increasing attention for its applicability in mechanism design and pricing~\citep{Alaei2011,Chawla2010,Duetting2020,Hajiaghayi2007,Kleinberg2012,Correa2023combinatorial}. The IID prophet inequality introduced by~\cite{hill1982comparisons} is a special case interesting in its own right.~\cite{hill1982comparisons} initially proved that an approximation of $1-1/e$ was possible with a single threshold algorithm while an upper bound of $1/\beta^* \approx 0.745$ was proven, where $\beta^*$ is the unique parameter such that the ordinary differential equation
$y'(t) = y(t)(\ln y(t)-1) - (\beta-1)$ with $y(0) = 1,$
has a solution $y:[0,1]\to [0,1]$ such that $y(1)=1$~\citep{kertz1986stop}. It was recently proved that an algorithm with an approximation $1/\beta^*$ exists for the IID prophet inequality~\citep{correa2021posted}. 
%For the model over time with general independent distributions, recently,~\cite{cristi2024planningprophet} showed that an approximation of $1/2$ is possible.

%Central to our work is the study of threshold-based algorithms (see Section~\ref{sec:small_thresholds}). 
In the original work by~\cite{abels2023prophet}, the authors provide the first $0.396$ approximation ratio with an algorithm using one threshold. Single-threshold algorithms are ubiquitous in prophet inequalities, and 
in fact, the solution presented by~\cite{samuel1984comparison} for the classic prophet inequality uses a single threshold. At the same time, the first guarantee of $1-1/e$ for the IID prophet inequality problem by~\cite{hill1982comparisons} is also based on using a single threshold. Single-threshold algorithms play a major role in optimal stopping problems due to their simplicity, interpretability, and connection to posted price mechanisms~\citep{arnosti2023tight,chawla2023static,correa2019pricing}. In recent years, there has been a growing interest in understanding the value of using a larger number of thresholds in different problems as a way to interpolate between the single-threshold and the optimal dynamic policy~\citep{hoefer2023threshold,hoefer2024stochastic,perez2022iid}. 

Random order models provide a midpoint between Bayesian and adversarial settings. Arguably, the most well-known random order problem is the secretary problem introduced by~\cite{gilbert1966recognizing}; see~\citep{freeman1983secretary} for a classic survey on secretary problems. Random order models have been extensively studied in several online selection problems, including matchings~\citep{bernstein2023improved}, knapsacks~\citep{kesselheim2014primal,albers2021improved}, and matroids~\citep{babaioff2018matroid,soto2021strong,feldman2018simple}; we refer to~\cite{gupta2021random} for a recent survey in random order models. A related problem with the POT is the \emph{temp secretary problem}~\citep{fiat2015temp,kesselheim2017temp} in which multiple selections can be made, and each selection lasts some fixed amount of time. Unlike this model, our selection can be arbitrarily short or large.

\section{Improved Guarantees for Small Number of Thresholds}\label{sec:small_thresholds}

In what follows, we denote by $\calF$ the set of distributions $F$ over the non-negative reals, with finite positive expectation, and such that $\omega_0(F)<\omega_1(F)$, where $\omega_0(F)=\inf\{y:F(y)>0\}$ and $\omega_1(F)=\sup\{y:F(y)<1\}$ are the left and right endpoints of the support of $F$.
For every non-negative integer $n$, we denote by $G_n(F)$ the optimal dynamic programming policy value for the POT problem.
\citet[Theorem 1]{abels2023prophet} show that the sequence $(G_n(F))_{n\in \NN}$\footnote{The set $\NN$ denotes the non-negative integers $\{0,1,\ldots\}$.} is given by the following recurrence: $G_0(F)=0$, $G_1(F)=\EE[X]$, and $$G_{n+1}(F)=\EE[X]+\EE[\max(G_n(F),nX)],$$ where $X$ is distributed according to $F$.
We denote $E_0(F)=0$, and for every positive integer $n$ we denote by $E_n(F)$ the optimal offline value 
$\sum_{\ell=1}^n\EE\left[\max\{X_1,\ldots,X_{\ell}\}\right]=\int_0^{\infty}(n-\sum_{\ell=1}^n F(x)^{\ell})\mathrm dx,$ where $X_1,\ldots,X_n$ are i.i.d.\ random variables distributed according to $F$.
Our quantity of interest is the worst-case approximation for POT.
Namely, 
$\gamma_n=\inf_{F\in \calF}G_n(F)/E_n(F)$ and $\gamma=\inf_{n\in \NN}\alpha_n$.

The value $\gamma_n$ corresponds to the worst-case approximation ratio of the optimal dynamic programming policy over POT instances with $n$ periods and is our main object of study in Section \ref{sec:cvx-optimal}.
On the other hand, $\gamma$ corresponds to the worst-case approximation ratio when we range over every possible number of periods.

In this section, we provide improved analyses for algorithms using a few thresholds. Similar to~\cite{perez2022iid}, we give lower bounds using the distribution inverse; however, this function might not exist for general instances. 
The following proposition guarantees that if we have a good approximation for instances where the probability distribution $F$ is strictly increasing and smooth, the same guarantee holds for general instances. Making $F$ continuous is already a standard technique (see, e.g.,~\cite{Liu2021}); however, the requirement that $F$ is strictly increasing is new as we need to guarantee that the derivative of $F^{-1}$ exists and it is strictly positive. We defer the proof of the following proposition to Appendix~\ref{app:small_thresholds}.

\begin{proposition}\label{prop:smooth_instances}
    Let $\pi$ be a policy that guarantees an approximation ratio of $\beta>0$ in the {\normalfont{POT}} problem for all probability distributions $F$ that are strictly increasing and infinitely differentiable. Then, $\gamma \geq \beta$.
\end{proposition}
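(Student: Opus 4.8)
The plan is to pass from a general distribution $F\in\calF$ to a strictly increasing, infinitely differentiable one by a two-step approximation, losing only a vanishing amount in the ratio $G_n(F)/E_n(F)$, and then invoke the hypothesis. The key observation is that both functionals $F\mapsto G_n(F)$ and $F\mapsto E_n(F)$ are continuous (indeed monotone and Lipschitz in an appropriate sense) with respect to perturbations of $F$ in a metric like the Lévy or Wasserstein-$1$ distance, provided we keep the relevant moments under control; since $E_n(F)=\int_0^\infty (n-\sum_{\ell=1}^n F(x)^\ell)\,\mathrm dx$ and $G_n$ is built by the finite recurrence $G_{n+1}=\EE[X]+\EE[\max(G_n,nX)]$, both are expressible as integrals of bounded, piecewise-smooth functions of the CDF, so small $L^1$-type changes to $F$ produce small changes to $G_n$ and $E_n$.

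First I would reduce to distributions with bounded support: given $F\in\calF$ with $G_n(F)/E_n(F)<\gamma_n+\varepsilon$, truncate at a large level $M$ (replace $X$ by $\min(X,M)$, or condition on $X\le M$); as $M\to\infty$ both $G_n$ and $E_n$ converge to their values for $F$ (using finite expectation to control the tail contributions to both), so the truncated ratio stays within $2\varepsilon$ of the original. Next, on the bounded-support distribution, I would mollify: convolve with a tiny uniform (or Gaussian, then re-truncate) kernel of width $\delta$ and then, if needed, add an independent uniform perturbation on a small interval so that the resulting CDF $F_\delta$ is strictly increasing on an interval containing $[\omega_0(F_\delta),\omega_1(F_\delta)]$ and $C^\infty$. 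Because convolution with a smooth kernel produces a $C^\infty$ CDF and the added uniform noise forces strict monotonicity, $F_\delta$ satisfies the hypothesis of the proposition; and $F_\delta\to F$ in Wasserstein-$1$ as $\delta\to0$, so $G_n(F_\delta)\to G_n(F)$ and $E_n(F_\delta)\to E_n(F)$. Hence for $\delta$ small the ratio $G_n(F_\delta)/E_n(F_\delta)$ is within $\varepsilon$ of $G_n(F)/E_n(F)$, so $\beta\le G_n(F_\delta)/E_n(F_\delta)< \gamma_n+3\varepsilon$. Taking the infimum over $F$ and then over $n$, and letting $\varepsilon\to0$, gives $\beta\le\gamma$.

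Wait — the direction to be careful about is that we want $\gamma\ge\beta$, i.e. every instance does at least as well as the guarantee; so actually the argument runs: for an arbitrary $F\in\calF$ and arbitrary $\varepsilon>0$, build the smooth strictly-increasing $F_\delta$ as above with $|G_n(F_\delta)/E_n(F_\delta)-G_n(F)/E_n(F)|<\varepsilon$; by hypothesis $G_n(F_\delta)/E_n(F_\delta)\ge\beta$, hence $G_n(F)/E_n(F)>\beta-\varepsilon$; since $\varepsilon$ and $F$ and $n$ were arbitrary, $\gamma=\inf_{n}\inf_{F\in\calF}G_n(F)/E_n(F)\ge\beta$.

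The main obstacle is the continuity estimate $G_n(F_\delta)\to G_n(F)$, $E_n(F_\delta)\to E_n(F)$: one must verify that the mollification does not blow up $\EE[X]$ (controlled, since the kernel has compact support) and that the $\max$ inside the recurrence interacts well with the perturbation (it does, since $x\mapsto\max(G_{n-1},nx)$ is $1$-Lipschitz in $x$ up to the factor $n$, so $|\EE_{F_\delta}[\max(G_{n-1},nX)]-\EE_F[\max(G_{n-1},nX)]|\le n\,W_1(F_\delta,F)$ plus a term from the change in $G_{n-1}$, which is handled inductively). Keeping $E_n(F)>0$ bounded away from $0$ along the approximating sequence (true since $\omega_0(F_\delta)<\omega_1(F_\delta)$ forces $E_n(F_\delta)\ge \EE[X]>0$ and these are bounded below for $\delta$ small) ensures the ratios behave continuously. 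Everything else is routine; the full details are carried out in Appendix~\ref{app:small_thresholds}.
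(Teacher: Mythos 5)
Your argument is correct in substance, but it takes a genuinely different route from the paper. You prove that the two functionals $F\mapsto G_n(F)$ and $F\mapsto E_n(F)$ are Lipschitz with respect to $W_1$ (which indeed follows by induction on the recurrence $G_{n+1}(F)=\EE[X]+\EE[\max(G_n(F),nX)]$, since $x\mapsto\max(c,nx)$ is $n$-Lipschitz and monotone in $c$, and $E_n$ is a finite sum of expected maxima of i.i.d.\ copies), approximate an arbitrary $F\in\calF$ by a smooth strictly increasing $F_\delta$, apply the hypothesis to $F_\delta$ (using, implicitly but correctly, that $G_n(F_\delta)$ dominates the value of $\pi$ on $F_\delta$), and pass to the limit. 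The paper instead avoids any continuity statement for $G_n$: it perturbs $F$ to the mixture $\hat F=\tfrac{1}{1+\varepsilon/n^3}F+\tfrac{\varepsilon/n^3}{1+\varepsilon/n^3}(1-e^{-x})$, lower-bounds $E_n(\hat F)$ against $E_n(F)$ directly, and then transfers the guarantee back to $F$ by a simulation/coupling argument that builds an explicit policy $\hat\pi$ for $F$ out of $\pi$ run on $\hat F$; the conclusion then follows from $G_n(F)\ge \mathrm{val}(\hat\pi,F,n)$. Your approach is more standard and arguably cleaner here precisely because $\gamma$ is defined through the optimal policy, whose value admits the Lipschitz bound; the paper's simulation argument is the one that would survive if one needed to transfer the guarantee of a \emph{specific} policy $\pi$ rather than of the optimum. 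One small repair to your construction: a compactly supported mollification (or a re-truncated Gaussian) yields a CDF that is constant equal to $1$ beyond its right endpoint, hence not strictly increasing on all of $\RR_+$; to land exactly in the hypothesis class you should, as the paper does, mix in a vanishing amount of a full-support smooth law such as an exponential, which costs only an $O(\varepsilon)$ perturbation in $W_1$ and in both functionals.
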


A consequence of the previous assumption is that $F^{-1}$ exists and is infinitely differentiable, as guaranteed by standard inverse function theorems. For the purpose of this section, we only require it to be differentiable. Our density assumptions allow us to conduct refined analyses. Our results are twofold: they improve upon current analyses and bounds, and they quantify the value of using few thresholds. Our findings show that using a few thresholds yields remarkably good results.
We first provide a general lower bound for algorithms with $k$ thresholds. We later use this formula to provide tight guarantees for $k=1$---in combination with a hard distribution showing that our analysis is tight. We also provide guarantees for $k\in \{2,3\}$ thresholds. 
%showing that the best current asymptotic lower bounds can be improved with few thresholds.

\subsection{General Multiple Threshold Analysis}

In this subsection, we lay the generic lower bounds we will utilize in the remainder. 
First, we introduce the following three auxiliary functions that we will use in this section.
\begin{align*}
    g_n(v) & = \int_0^v \sum_{t=1}^n t(1-q)^{t-1} \, \mathrm{d}v = n - (1-v) \frac{( 1 - (1-v)^n )}{v},  & v\in [0,1],\\
    A_{\ell,\ell'}(q)& = \sum_{t=0}^{\ell'-1} (\ell - t)(1-q)^{t}  = \frac{(1-q)^{\ell'} (1 - (\ell-\ell'+1) q  ) +q(\ell+1) -1 }{q^2} , & q\in [0,1], \ell' \leq \ell \leq n, \\
    B_{\ell}(q) &= \sum_{t=0}^{\ell-1} (1-q)^{t} = \frac{1-(1-q)^\ell}{q}, & v\in [0,1], \ell\in [n].
\end{align*}
We consider the following class of fixed-threshold algorithms: Fix $n_1,\ldots,n_k\geq 1$ integers such that $n_1+\cdots + n_k = n$ and quantiles $0< q_k < \cdots < q_1 < 1$. The algorithm divides the time interval $[1,n]$, into intervals $I_1,I_2,\ldots,I_k$ where $I_t = [\sum_{\tau<t} n_t, \sum_{\tau\leq t} n_\tau]$ and computes thresholds $z_t = F^{-1}(1-q_t)$, which is well defined by our assumptions over $F$. Now, from $t=n,\ldots,1$, if ${t}\in I_{t'}$ and $X_{t} \geq z_{t'}$, then the algorithm accepts the value $X_{t}$ for the remaining ${t}$ units of time, while if $X_t< z_{t'}$, then the algorithm accepts the value $X_t$ for 1 unit of time and go to $t-1$. Let $\mathbf{n}=(n_1,\ldots,n_k)$ and $\mathbf{q}=(q_1,\ldots,q_k)$. The expected value collected by the quantile-based algorithm described above, denoted $G_{n,k}=G_{n,k}(F,\mathbf{n},\mathbf{q})$, is
\begin{align}
    G_{n,k} = \sum_{s=1}^k \prod_{\tau>s}(1-q_\tau)^{n_\tau} \left(  A_{N_s,n_s}(q_s)\int_{0}^{q_s} F^{-1}(1-u) \, \mathrm{d}u  + B_{n_s}(q_s) \int_{q_s}^{1} F^{-1}(1-u) \, \mathrm{d}u  \right) \label{eq:G_nk}
\end{align}
where $N_s = \sum_{\tau \leq s}n_s$. The proof follows by a simple inductive construction. To see this, denote by ${\mathbf{n}}_{1,\ldots,s}=(n_1,\ldots,n_s)$ the prefix of the first $s$ entries of $\mathbf{n}$ and $\mathbf{q}_{1,\ldots,s}=(q_1,\ldots,q_s)$ the prefix of the first $s$ entries of $\mathbf{q}$, then if $d_s$ represents $G_{T_s,n_s}(F,\mathbf{n}_{1,\ldots,s},\mathbf{q}_{1,\ldots,s})$ and $d_0=0$, we have,
\begin{align*}
    d_{s} &= \sum_{t=0}^{n_s-1} F(z_s)^{t} \left( \Bar{F}(z_s)\EE[X_t \mid X_t \geq z_s](N_s-t) + F(z_s)\EE[X_t \mid X_t < z_s] \right) + F(z_s)^{n_s}\, \mathrm{d}_{s-1} \\
    & = \sum_{t=0}^{n_s-1}  (N_s-t) (1-q_s)^t \int_{0}^{q_s} F^{-1}(1-u)\, \mathrm{d}u + \sum_{t=0}^{n_s-1} (1-q_s)^t \int_{q_s}^1 F^{-1}(1-u)\, \mathrm{d}u + (1-q_s)^{n_s}\, \mathrm{d}_{s-1}.
\end{align*}
This poses a recursion that upon unrolling gives the stated equation for $G_{n,k}$. We are interested in computing 
\[
\gamma_{n,k} = \inf_F \sup_{\boldsymbol{\tau},\mathbf{q}}\frac{G_{n,k}(F,\mathbf{n},\mathbf{q})}{E_n(F)},
\]
which is the value of the worst-case instance for quantile-based algorithms over continuous distributions. Note that $\gamma_{n,k}\leq \gamma_n$; hence, providing lower bounds for $\gamma_{n,k}$ provides lower bounds on $\alpha_n$.

The following lemma allows us to find a lower bound on $\gamma_{n,k}$ by solely focusing on the optimization of $\mathbf{q}$ and $\mathbf{n}$ as opposed to a specific instance of the problem.

\begin{lemma}[Key Lower Bound]\label{lem:key_lower_bound_fixed_thresholds}
    For any smooth and strictly increasing distribution $F$, we have
    \[
    \frac{G_{n,k}}{E_n} \geq \inf_{v\in [0,1]}\left\{  \sum_{s=1}^k \prod_{\tau>s} (1-q_\tau)^{n_\tau} \frac{\min\{v,q_s\}}{g_n(v)} A_{N_s,n_s}(q_s)   \right\}
    \]
    where $\prod_{\tau > k} (1-q_\tau)^{n_\tau}=1$.
\end{lemma}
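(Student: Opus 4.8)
The plan is to lower bound the ratio $G_{n,k}/E_n$ by rewriting both the numerator (given by~\eqref{eq:G_nk}) and the denominator $E_n$ as integrals against the common density $\mathrm{d}(F^{-1}(1-u))$, so that the ratio becomes comparable pointwise in the quantile variable. First I would express $E_n$ using the change of variables $x = F^{-1}(1-u)$: writing $E_n = \int_0^\infty (n - \sum_{\ell=1}^n F(x)^\ell)\,\mathrm{d}x$ and integrating by parts (or differentiating under the integral), one obtains $E_n = \int_0^1 g_n(u)\,\mathrm{d}(-F^{-1}(1-u))$ up to boundary terms, where $g_n$ is exactly the auxiliary function defined in the excerpt, since $g_n'(u) = \sum_{t=1}^n t(1-u)^{t-1}$ matches the derivative of $n - \sum_\ell F(x)^\ell$ after the substitution. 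The key point is that $g_n$ is precisely the primitive chosen so that this identity holds.

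Next I would do the same for $G_{n,k}$. Each inner term $A_{N_s,n_s}(q_s)\int_0^{q_s} F^{-1}(1-u)\,\mathrm{d}u + B_{n_s}(q_s)\int_{q_s}^1 F^{-1}(1-u)\,\mathrm{d}u$ should be rewritten, again via integration by parts, as an integral of a step-like weight function against $\mathrm{d}(-F^{-1}(1-u))$. The natural weight is $A_{N_s,u}(q_s)$-type expression for $u \le q_s$ and the constant $A_{N_s,n_s}(q_s)$ for $u \ge q_s$; more cleanly, the weight at quantile $u$ is $\min\{$(partial sum up to $u$)$,\ A_{N_s,n_s}(q_s)\}$ in a suitable sense, but the cleanest route is to check directly that the per-period contribution, when written as $\sum_t (N_s - t)(1-q_s)^t$ for the ``above threshold'' part and $\sum_t (1-q_s)^t$ for the ``below threshold'' part, reorganizes into $\int_0^1 w_s(u)\,\mathrm{d}(-F^{-1}(1-u))$ with $w_s(u) = A_{N_s,n_s}(q_s)$ on $[q_s,1]$ — since for $u$ below the threshold quantile the value is accepted for only one period, contributing a coefficient that, after the boundary-term bookkeeping, is dominated by $A_{N_s,n_s}(q_s)\cdot \frac{u}{q_s}\le A_{N_s,n_s}(q_s)$. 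This gives $G_{n,k} \ge \sum_s \prod_{\tau>s}(1-q_\tau)^{n_\tau} \int_0^1 \min\{v,q_s\}\cdot\frac{A_{N_s,n_s}(q_s)}{q_s}\cdot(\text{something})\,\mathrm{d}(-F^{-1}(1-v))$; I would track the constants so that the weight is exactly $\min\{v,q_s\}\, A_{N_s,n_s}(q_s)/q_s$ times the appropriate geometric prefactor, matching the statement.

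Finally, with $G_{n,k} = \int_0^1 \Phi(v)\,\mathrm{d}(-F^{-1}(1-v))$ and $E_n = \int_0^1 g_n(v)\,\mathrm{d}(-F^{-1}(1-v))$ where both integrating measures $\mathrm{d}(-F^{-1}(1-v))$ are nonnegative (here the strict monotonicity and smoothness of $F$ from Proposition~\ref{prop:smooth_instances} is used to guarantee $F^{-1}$ is differentiable with positive derivative, so the measure is well-defined and nonnegative), I invoke the elementary ``mediant'' inequality: for nonnegative measures, $\int \Phi\,\mathrm{d}\mu \big/ \int g_n\,\mathrm{d}\mu \ge \inf_v \Phi(v)/g_n(v)$. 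Plugging in $\Phi(v)/g_n(v) = \sum_s \prod_{\tau>s}(1-q_\tau)^{n_\tau}\cdot \frac{\min\{v,q_s\}}{q_s}\cdot\frac{A_{N_s,n_s}(q_s)}{g_n(v)}$ — after confirming the $1/q_s$ normalization is absorbed correctly — yields the claimed bound. The main obstacle I anticipate is the bookkeeping in the second step: correctly accounting for the boundary terms in the integration by parts (which vanish or telescope because $\omega_0(F)$ is finite and the weights are designed to cancel them) and verifying that the ``accepted for one unit of time'' contributions, together with the telescoping across the $F(z_s)^{n_s}\,d_{s-1}$ recursion, really do collapse to the single clean weight $\min\{v,q_s\}\,A_{N_s,n_s}(q_s)/q_s$ rather than something messier; this is where I would be most careful to avoid an off-by-a-factor error.
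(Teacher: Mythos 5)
Your overall strategy is the same as the paper's: represent $F^{-1}(1-u)=\int_u^1 r(v)\,\mathrm{d}v$ with $r>0$ (equivalently, integrate against $\mathrm{d}(-F^{-1}(1-u))$), rewrite $G_{n,k}$ and $E_n$ as integrals of weight functions against $r(v)\,\mathrm{d}v$, and finish with the mediant inequality; your treatment of $E_n$ is exactly the paper's. The gap is in the weight you assign on the $G_{n,k}$ side. By Fubini,
\[
\int_0^{q_s}F^{-1}(1-u)\,\mathrm{d}u=\int_0^{q_s}\int_u^1 r(v)\,\mathrm{d}v\,\mathrm{d}u=\int_0^1 r(v)\,\big|\{u\in[0,q_s]:u\le v\}\big|\,\mathrm{d}v=\int_0^1 r(v)\min\{v,q_s\}\,\mathrm{d}v,
\]
so the weight multiplying $A_{N_s,n_s}(q_s)$ is exactly $\min\{v,q_s\}$ (the Lebesgue measure of an interval), with no $1/q_s$ normalization. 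Your proposed weight $\min\{v,q_s\}A_{N_s,n_s}(q_s)/q_s$ is not a valid lower bound on the true contribution: for $v$ at or just above $q_s$, the true weight of the $s$-th term (once the $B_{n_s}$ part is set aside) is $q_sA_{N_s,n_s}(q_s)$, strictly smaller than the $A_{N_s,n_s}(q_s)$ your formula asserts, so the intermediate inequality $G_{n,k}\ge\sum_s\prod_{\tau>s}(1-q_\tau)^{n_\tau}\int_0^1 r(v)\min\{v,q_s\}A_{N_s,n_s}(q_s)q_s^{-1}\,\mathrm{d}v$ is false in general. It also does not match the lemma you are proving, whose bound carries no $1/q_s$; the hedge "after confirming the $1/q_s$ normalization is absorbed correctly" is hiding the fact that it cannot be.

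The other place you are overcomplicating matters is the ``accepted for one unit of time'' contribution. In the closed form \eqref{eq:G_nk} these are precisely the terms $B_{n_s}(q_s)\int_{q_s}^1F^{-1}(1-u)\,\mathrm{d}u$, which are nonnegative and are simply discarded; there is no telescoping across the $F(z_s)^{n_s}d_{s-1}$ recursion to track and no boundary terms to cancel, because no integration by parts is ever performed --- the representation $F^{-1}(1-u)=\int_u^1 r(v)\,\mathrm{d}v$ together with one exchange of the order of integration does all the work. With the corrected weight $\min\{v,q_s\}A_{N_s,n_s}(q_s)$ the mediant step you describe gives the stated bound verbatim.
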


\begin{proof}
    Since $F^{-1}(1-u)$ is decreasing and differentiable due to our assumptions, there exists $r(v) > 0 $ such that $F^{-1}(1-u)=\int_u^1  r(v)\, \mathrm{d}v$. Then, dropping the terms multiplying $B_{n_s}$ in~\eqref{eq:G_nk} and changing the order of integration, we obtain the following lower bound
    \begin{align*}
        G_{n,k} &\geq  \sum_{s=1}^k \int_0^1 r(v)\min\{v, q_s  \} A_{N_s,n_s}(q_s) \prod_{\tau>s}(1-q_\tau)^{n_\tau}\, \mathrm{d}v.
    \end{align*}
     Similarly, 
     \begin{align*}
         E_n &= \sum_{t=1}^n \int_0^\infty x\cdot  t F(x)^{t-1}\, \mathrm{d}F(x)  \\
         & = \sum_{t=1}^n \int_0^1 F^{-1}(1-u) \cdot t(1-u)^{t-1}\, \mathrm{d}u\\
         &= \int_0^1 r(v) \int_0^v \sum_{t=1}^n t(1-u)^{t-1}\, \mathrm{d}u\,\mathrm{d}v = \int_0^1 r(v)g_n(v)\, \mathrm{d}v.
     \end{align*}
     The result now follows by taking the ratio between the lower bound for $G_{n,k}$ and $E_n$ and using the standard inequality $\int_0^1 a(v)\, \mathrm{d}v / \int_0^1 b(v)\, \mathrm{d}v \geq \inf_{v\in [0,1]} a(v)/b(v)$ for $a(v),b(v)>0$ for $v\in [0,1]$.
\end{proof}

In the following subsection, we will utilize the key inequality to obtain tight guarantees for $k=1$ threshold and new guarantees for $k\in \{2,3\}$. 
The following are technical propositions needed in the following subsections; their proof is deferred to Appendix~\ref{app:small_thresholds}.

\begin{proposition}\label{prop:monotonicity_g_n}
    For $v\in [0,1]$, the following holds:
    \begin{enumerate}[itemsep=0pt,label=\normalfont(\roman*)]
        \item $g_n(v)$ is increasing.
        \item $g_n(v)/v$ is decreasing.
    \end{enumerate}
\end{proposition}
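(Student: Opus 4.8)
The plan is to work directly with the derivative $g_n'(u)=\sum_{t=1}^n t(1-u)^{t-1}$, which exists for all $u\in[0,1]$ and satisfies $g_n(v)=\int_0^v g_n'(u)\,\mathrm{d}u$. For part (i), observe that $g_n'(u)=\sum_{t=1}^n t(1-u)^{t-1}\ge 1>0$ on $[0,1]$, since the $t=1$ term equals $1$ and every other term is non-negative; hence $g_n$ is strictly increasing on $[0,1]$.

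For part (ii), the key observation is that $g_n'$ is non-increasing on $[0,1]$: each summand $u\mapsto t(1-u)^{t-1}$ is non-increasing on $[0,1]$ (it is constant for $t=1$, and for $t\ge 2$ its derivative $-t(t-1)(1-u)^{t-2}$ is $\le 0$), so the finite sum $g_n'$ is non-increasing as well. Now write $g_n(v)/v=\frac1v\int_0^v g_n'(u)\,\mathrm{d}u$, the running average of $g_n'$ over $[0,v]$; for $v=0$ we read this as the limit $g_n'(0)=n(n+1)/2$. Since the running average of a non-increasing function is non-increasing, $g_n(v)/v$ is decreasing. Concretely, for $v>0$ we have $g_n(v)=\int_0^v g_n'(u)\,\mathrm{d}u\ge \int_0^v g_n'(v)\,\mathrm{d}u = v\,g_n'(v)$, using $g_n'(u)\ge g_n'(v)$ for $u\le v$; therefore $\bigl(g_n(v)/v\bigr)'=\bigl(v\,g_n'(v)-g_n(v)\bigr)/v^2\le 0$.

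The argument involves no real obstacle; the only point needing a little care is the endpoint $v=0$, where $g_n(v)/v$ must be interpreted as its continuous extension $g_n'(0)$, so that it is cleanest to establish monotonicity on $(0,1]$ and extend by continuity. If a self-contained computation is preferred, the same conclusions follow from the closed forms $g_n'(u)=\bigl(1-(n+1)(1-u)^n+n(1-u)^{n+1}\bigr)/u^2$ and $g_n(v)=n-(1-v)\bigl(1-(1-v)^n\bigr)/v$, but the monotone-average viewpoint makes both claims transparent and avoids grinding through these expressions.
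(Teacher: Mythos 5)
Your proof is correct, and it takes a genuinely different route from the paper's. The paper works entirely from the closed forms: for (i) it writes $g_n'(v)=\bigl(1-(1-v)^n(1+nv)\bigr)/v^2$ and shows non-negativity via $(1-v)^n(1+nv)\le e^{-nv}e^{nv}=1$; for (ii) it differentiates $g_n(v)/v$ explicitly, reduces the sign condition to the inequality $2\le 2(1-v)^n+(1+n-(1-v)^n+n(1-v)^n)v$, and verifies that by checking the value at $v=0$ and the sign of the derivative, which ultimately rests on Bernoulli's inequality. You instead exploit the integral definition $g_n(v)=\int_0^v\sum_{t=1}^n t(1-u)^{t-1}\,\mathrm{d}u$ directly: the integrand is at least $1$ (giving (i) immediately) and is non-increasing termwise, so $g_n(v)/v$ is the running average of a non-increasing function and hence non-increasing (giving (ii) via $g_n(v)\ge v\,g_n'(v)$). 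Your argument is shorter, avoids the algebraic grinding and the auxiliary function $f_n$, and makes the monotonicity structurally transparent; the paper's version has the minor advantage of producing the explicit closed-form derivative expressions that are reused elsewhere (e.g., the limit $v/g_n(v)\to 2/(n(n+1))$ in the proof of Lemma 2), but your $g_n'(0)=n(n+1)/2$ recovers that just as well. Your handling of the endpoint $v=0$ by continuous extension is also fine.
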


We define the following two auxiliary functions that will serve us as limits of $A_{N_s,n_s}$ and $g_n$,
\[
\Bar{A}_{\phi,\theta}(\alpha) = \frac{  e^{-\alpha \theta}( 1-(\phi -\theta)\alpha) + \alpha \phi -1 }{\alpha^2} \qquad \text{and} \qquad
\Bar{g}(\lambda) = \frac{e^{-\lambda}+\lambda-1}{\lambda} = \lambda \cdot \Bar{A}_{1,1}(\lambda).
\]

\begin{proposition}\label{prop:limit_of_functions_limited_thresholds}
    The following limits hold:
    \begin{enumerate}[itemsep=0pt,label=\normalfont(\roman*)]
        \item For $1\geq \phi \geq  \theta\geq 0$, and $\alpha\geq 0$, we have ${A_{\phi n, \theta n}(\alpha/n)}/{n^2} \to \Bar{A}_{\phi,\theta}(\alpha).$
        \item For $\lambda \geq 0$, ${g_n(\lambda/n)}/{n} \to \Bar{g}(\lambda).$
    \end{enumerate}
\end{proposition}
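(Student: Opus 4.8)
# Proof Proposal for Proposition~\ref{prop:limit_of_functions_limited_thresholds}

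The plan is to establish both limits by direct computation, substituting the scaled arguments into the closed-form expressions for $A_{\ell,\ell'}$ and $g_n$ already derived in the text, and then taking $n\to\infty$ via standard asymptotics of $(1-x/n)^{cn}$. Since both $A_{\ell,\ell'}$ and $g_n$ have been given explicit rational-in-$q$ (resp.\ rational-in-$v$) forms involving terms of the shape $(1-q)^{m}$, the whole argument reduces to tracking how these powers behave under the substitution $q=\alpha/n$, $\ell=\phi n$, $\ell'=\theta n$ (and similarly $v=\lambda/n$ for part (ii)).

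For part (ii), which is the simpler warm-up, I would start from the closed form $g_n(v) = n - (1-v)\frac{1-(1-v)^n}{v}$. Substituting $v=\lambda/n$ and dividing by $n$ gives $g_n(\lambda/n)/n = 1 - (1-\lambda/n)\cdot\frac{1-(1-\lambda/n)^n}{\lambda}$. As $n\to\infty$ we have $(1-\lambda/n)\to 1$ and $(1-\lambda/n)^n\to e^{-\lambda}$, so the expression converges to $1 - \frac{1-e^{-\lambda}}{\lambda} = \frac{\lambda - 1 + e^{-\lambda}}{\lambda} = \bar g(\lambda)$, as claimed. (The edge case $\lambda=0$ is handled by noting $g_n(0)=0=\bar g(0)$ after interpreting $\bar g$ by its limit, or by a short Taylor expansion.) The identity $\bar g(\lambda)=\lambda\cdot\bar A_{1,1}(\lambda)$ is then a routine algebraic check from the definition of $\bar A_{\phi,\theta}$ with $\phi=\theta=1$.

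For part (i), I would use the closed form $A_{\ell,\ell'}(q) = \frac{(1-q)^{\ell'}(1-(\ell-\ell'+1)q) + q(\ell+1) - 1}{q^2}$. Setting $q=\alpha/n$, $\ell=\lfloor\phi n\rfloor$, $\ell'=\lfloor\theta n\rfloor$ and dividing by $n^2$, the denominator $q^2 = \alpha^2/n^2$ cancels the $n^2$, so $A_{\phi n,\theta n}(\alpha/n)/n^2 = \frac{1}{\alpha^2}\big[(1-\alpha/n)^{\theta n}(1 - (\ell-\ell'+1)\alpha/n) + (\ell+1)\alpha/n - 1\big]$. Now $(1-\alpha/n)^{\theta n}\to e^{-\alpha\theta}$; the factor $1-(\ell-\ell'+1)\alpha/n \to 1 - (\phi-\theta)\alpha$ since $(\ell-\ell'+1)/n\to\phi-\theta$; and $(\ell+1)\alpha/n \to \alpha\phi$. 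Collecting these limits yields $\frac{1}{\alpha^2}\big[e^{-\alpha\theta}(1-(\phi-\theta)\alpha) + \alpha\phi - 1\big] = \bar A_{\phi,\theta}(\alpha)$, exactly the stated limit. The floor operations only perturb $\ell,\ell'$ by $O(1)$, hence $\ell/n\to\phi$, $\ell'/n\to\theta$ regardless.

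The only mild subtlety — and the step I would treat most carefully — is uniformity/justification of passing the limit inside the product of several factors when $\alpha=0$ (the $1/\alpha^2$ and $1/\lambda$ prefactors become singular). Here I would either restrict attention to $\alpha>0$, $\lambda>0$ (which suffices for all downstream applications, since the worst-case quantiles in Sections on $k=1,2,3$ are strictly positive), or perform a second-order Taylor expansion of $e^{-\alpha\theta}(1-(\phi-\theta)\alpha)$ in $\alpha$ to verify the bracketed numerator is $O(\alpha^2)$ so that $\bar A_{\phi,\theta}(0)$ is well-defined by continuity and matches $\lim_n A/n^2$ there as well. Apart from this bookkeeping, the proof is a routine application of $\lim_n (1-c/n)^n = e^{-c}$ together with continuity of the elementary functions involved.
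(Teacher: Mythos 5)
Your proposal is correct and follows essentially the same route as the paper: substitute $q=\alpha/n$, $\ell=\phi n$, $\ell'=\theta n$ (resp.\ $v=\lambda/n$) into the closed forms, cancel the $n^2$ (resp.\ $n$), and apply $(1-c/n)^n\to e^{-c}$ together with continuity. Your extra remarks about the floor corrections and the singular boundary cases $\alpha=0$, $\lambda=0$ are sound bookkeeping that the paper omits but do not change the argument.
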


\subsection{Optimal Analysis for Single-Threshold Algorithms}

In this subsection, we focus on single-threshold algorithms. We first prove an improved guarantee compared to~\cite{abels2023prophet}. We then show that our analysis is tight. The two results presented in this subsection conclude that $\gamma_{n,1} \approx (1+e^{-2})/2 \approx 0.567$ for $n$ large.
The improved guarantee for single-threshold algorithms follows from the following lemma.

\begin{lemma}\label{lem:LB_for_1_threshold}
    Let $\alpha\geq 1$ fixed. Then, for any $n\geq \alpha^2+\alpha-1$, the single-threshold algorithm that uses the quantile $q=\alpha/(n+1)$ guarantees
    \begin{align*}
        \frac{G_{n,1}}{E_n} \geq \left( 1- \frac{\alpha^2}{n+1-\alpha}  \right) \min\left\{ 2, \alpha \right\} \left( \frac{ e^{-\alpha}  + \alpha  -1 } {\alpha^2}  \right)
    \end{align*}
    on any smooth and strictly increasing distribution $F$.
\end{lemma}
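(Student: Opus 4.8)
The plan is to instantiate the Key Lower Bound (Lemma~\ref{lem:key_lower_bound_fixed_thresholds}) with $k=1$ and the specific quantile $q=\alpha/(n+1)$, and then carefully estimate the two pieces that appear: the infimum over $v$ of $\min\{v,q\}/g_n(v)$ and the factor $A_{n,n}(q)$. With a single threshold we have $n_1=n$, $N_1=n$, and the product over $\tau>1$ is empty, so Lemma~\ref{lem:key_lower_bound_fixed_thresholds} collapses to
\[
\frac{G_{n,1}}{E_n}\ \geq\ \inf_{v\in[0,1]}\frac{\min\{v,q\}}{g_n(v)}\cdot A_{n,n}(q).
\]
So the whole argument reduces to (a) evaluating $A_{n,n}(q)$ at $q=\alpha/(n+1)$ and bounding it below by the stated quantity $(e^{-\alpha}+\alpha-1)/\alpha^2$ times the loss factor $1-\alpha^2/(n+1-\alpha)$, and (b) showing $\inf_{v}\min\{v,q\}/g_n(v)=\min\{2,\alpha\}$ exactly — or at least $\geq\min\{2,\alpha\}$.

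For step (b), I would split the infimum at $v=q$. For $v\leq q$ the ratio is $v/g_n(v)$; by Proposition~\ref{prop:monotonicity_g_n}(ii), $g_n(v)/v$ is decreasing, so $v/g_n(v)$ is increasing, and its infimum over $(0,q]$ is the limit as $v\to 0^+$. Since $g_n(v)=\sum_{t=1}^n v\int\cdots$ — more directly, $g_n(v)\to 0$ and $g_n'(0)=\sum_{t=1}^n t = n(n+1)/2$, we get $\lim_{v\to 0^+} v/g_n(v) = 2/(n(n+1))$. Hmm — that is $O(1/n^2)$, not $\min\{2,\alpha\}$, so the bound as literally written must already incorporate a normalization I am missing; presumably $A_{n,n}(q)$ is of order $n^2$ and the product $A_{n,n}(q)\cdot v/g_n(v)$ near $v=0$ is what gives the clean constant. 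Let me recompute: near $v=0$, $\min\{v,q\}/g_n(v)\cdot A_{n,n}(q)\approx (2/(n(n+1)))\cdot A_{n,n}(\alpha/(n+1))$, and $A_{n,n}(\alpha/(n+1))\approx n^2\bar A_{1,1}(\alpha) = n^2 (e^{-\alpha}+\alpha-1)/\alpha^2$ by Proposition~\ref{prop:limit_of_functions_limited_thresholds}(i), so the product is $\approx 2(e^{-\alpha}+\alpha-1)/\alpha^2$ — matching the $\min\{2,\alpha\}=2$ branch when $\alpha\geq 2$. For $v\geq q$ the ratio is $q/g_n(v)$, which is decreasing in $v$ (since $g_n$ is increasing by Proposition~\ref{prop:monotonicity_g_n}(i)), so its infimum over $[q,1]$ is $q/g_n(1)=q/n=\alpha/(n(n+1))$, and multiplying by $A_{n,n}(q)\approx n^2(e^{-\alpha}+\alpha-1)/\alpha^2$ gives $\approx \alpha(e^{-\alpha}+\alpha-1)/\alpha^2$ — matching the $\min\{2,\alpha\}=\alpha$ branch when $\alpha\leq 2$. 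So the correct reading is that the factor $\min\{2,\alpha\}$ absorbs the $A_{n,n}(q)$ prefactor and the $\min$ over the two regimes $v\to 0$ versus $v=1$.

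Thus the real work, and the main obstacle, is making the asymptotic estimates \emph{non-asymptotic}: I need the explicit inequality $A_{n,n}(\alpha/(n+1))\cdot \tfrac{2}{n(n+1)}\ \geq\ (1-\tfrac{\alpha^2}{n+1-\alpha})\cdot 2\cdot\tfrac{e^{-\alpha}+\alpha-1}{\alpha^2}$ and likewise for the $v=1$ endpoint with factor $\alpha$, valid for all $n\geq\alpha^2+\alpha-1$. Here I would use the closed form $A_{n,n}(q)=\big((1-q)^n(1-q)+q(n+1)-1\big)/q^2$, substitute $q=\alpha/(n+1)$ so that $q(n+1)=\alpha$ exactly and $(1-q)^n=(1-\alpha/(n+1))^n$, and compare $(1-\alpha/(n+1))^{n+1}$ with $e^{-\alpha}$. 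The inequality $(1-\alpha/(n+1))^{n+1}\geq e^{-\alpha}(1-\alpha^2/(n+1))$ (or a similarly explicit one-sided bound, e.g.\ via $\ln(1-x)\geq -x-x^2/(1-x)$ or a Bernoulli-type estimate) is exactly what produces the loss factor $1-\alpha^2/(n+1-\alpha)$; pinning down the sharp constant in that elementary inequality and checking it holds on the stated range of $n$ is the delicate step. Everything else — the two monotonicity facts, the endpoint evaluations $g_n(1)=n$ and $g_n'(0^+)=n(n+1)/2$, and the final $\min\{2,\alpha\}$ bookkeeping — is routine given the propositions already established.
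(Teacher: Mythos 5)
Your proposal is correct and follows essentially the same route as the paper's proof: instantiate the Key Lower Bound with $k=1$, split the infimum at $v=q$, use Proposition~\ref{prop:monotonicity_g_n} to push the two regimes to the endpoints $v\to 0^+$ (giving $2/(n(n+1))$) and $v=1$ (giving $\alpha/(n(n+1))$), and then bound $A_{n,n}(\alpha/(n+1))$ via the closed form and the comparison $(1-\alpha/(n+1))^{n+1}\geq e^{-\alpha}(1-\alpha^2/(n+1-\alpha))$, which your suggested estimate $\ln(1-x)\geq -x-x^2/(1-x)$ indeed delivers. The "delicate step" you flag is exactly the one the paper also dispatches in a single line, so nothing is missing.
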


We first show the improved guarantee. Let $f(\alpha)= \min\{2,\alpha\} ({e^{-\alpha} +\alpha-1})/{\alpha^2}$. Then, $f$ is increasing in $[0,2]$ and decreasing in $[2,+\infty)$. Hence, the maximum of $f$ is attained at $\alpha=2$ with $f(2)= (1/2)\cdot (e^{-2}+1)\approx 0.567$. Hence, $\gamma_{n,1}\geq (1-4/(n-1))(1+e^{-2})/2$ for any $n\geq 5$. Note that this almost coincide with the quantile considered by~\cite{abels2023prophet}; however, our analysis provides a better bound. For $n\geq 15$, our guarantee is already better than the original bound $0.396$. At the end of the subsection, we show that our analysis is tight.

\begin{proof}[Proof of Lemma~\ref{lem:LB_for_1_threshold}]
    We use the Lemma~\ref{lem:key_lower_bound_fixed_thresholds} with $k=1$, $n_1=n$ to obtain
    \[
    \frac{G_{n,1}}{E_n} \geq \inf_{v\in [0,1]} \left\{ \frac{\min\{ v, \alpha/(n+1)  \} }{g_{n}(v)} A_{n,n}(\alpha/n)\right\}.
    \]
    We analyze separately the cases where $v\leq \alpha/(n+1)$ and $v> \alpha/(n+1)$. Note that for $v\leq \alpha/(n+1)$,
    \[
    \frac{\min\{ v, \alpha/(n+1)  \} }{g_{n}(v)} = \frac{v}{g_n(v)}
    \]
    By Proposition~\ref{prop:monotonicity_g_n}, this last function is increasing in $v$ so it attains its minimum at $v=0$. Furthermore, $\lim_{v\to 0} g_n(v)/v \to 2/(n(n+1))$ which can be easily computed using the definition of $g_n(v)$. Now, for $v\geq \alpha/(n+1)$,
    \[
    \frac{\min\{ v, \alpha/(n+1)  \} }{g_{n}(v)} = \frac{\alpha}{(n+1)g_n(v)}
    \]
    and this last function is decreasing by Proposition~\ref{prop:monotonicity_g_n}; hence, it attains its minimum at $v=1$ with $\lim_{v\to 1} g_n(v)= n$. Putting these two results together, we obtain
    \begin{align*}
        \frac{G_{n,1}}{E_n} &\geq \frac{\min\left\{ 2,\alpha \right\}}{n(n+1)}  A_{n,n}(\alpha/(n+1)) \geq \min\{ 2, \alpha\}  \frac{(1-\alpha/(n+1))^{n+1}  + \alpha-1 }{\alpha^2}.
    \end{align*}
    Hence,
    \[
    \frac{G_{n,1}}{E_n} \geq \left( 1 - \frac{\alpha^2}{n+1-\alpha}  \right) \min\{2,\alpha\} \left( \frac{e^{-\alpha} +\alpha-1}{\alpha^2} \right).\qedhere
    \]
\end{proof}

We present a tight upper bound for the analysis of single-threshold algorithms. Fix $\beta \in [1,n]$. We present the inverse of a distribution $f(u)=F^{-1}(1-u)$ as the analysis for quantiles is more amenable. Consider
\[
f(u) =\begin{cases}
    2n\left( \frac{e^2-3}{e^2+1}  \right)  & u \in [0, 1/n^3), \\
    \frac{1}{n}\left(\frac{4}{e^2+1}\right) & u\in [1/n^3, 1/n^3+\beta/n),\\
    0 & u\in [1/n^3+\beta/n,1).
\end{cases}
\]
Note that $f$ is nonincreasing and discontinuous. However, by using a regularizer, we can obtain a infinite differentiable approximation to $f$ where the same results that we present here hold, up to an error that can be made arbitrarily small.

The following two proposition provide asymptotic exact value and upper bound on the prophet value and the single-threshold algorithm, respectively. We defer their proof to Appendix~\ref{app:small_thresholds}.

\begin{proposition}\label{prop:Formula_En_hard_dist_1_threshold}
    We have $E_n \to (e^2-3)/(e^2+1)+ 4(e^{-\beta}+\beta-1)/(\beta(e^2+1)) $ for $n\to \infty$.
\end{proposition}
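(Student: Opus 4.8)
\textbf{Proof proposal for Proposition~\ref{prop:Formula_En_hard_dist_1_threshold}.}

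The plan is to compute $E_n=\int_0^1 f(u)\,(n-\sum_{\ell=1}^n F(u)^{\ell})\,\mathrm{d}u$ directly using the closed-form expression $E_n=\int_0^\infty (n-\sum_{\ell=1}^n F(x)^\ell)\,\mathrm{d}x$ given in the excerpt, rewritten in terms of the inverse $f(u)=F^{-1}(1-u)$. Recall from the proof of Lemma~\ref{lem:key_lower_bound_fixed_thresholds} that $E_n=\int_0^1 f(u)\sum_{t=1}^n t(1-u)^{t-1}\,\mathrm{d}u$. Since $f$ is a step function taking three constant values on the intervals $[0,1/n^3)$, $[1/n^3,1/n^3+\beta/n)$, and $[1/n^3+\beta/n,1)$, with the third value being $0$, the integral splits into two pieces, each of which is an explicit value of $f$ times $\int_a^b \sum_{t=1}^n t(1-u)^{t-1}\,\mathrm{d}u = g_n(b)-g_n(a)$, using $\int \sum_{t=1}^n t(1-u)^{t-1}\,\mathrm{d}u$ antiderivative $-\sum_{t=1}^n(1-u)^t$ and the definition $g_n(v)=n-(1-v)(1-(1-v)^n)/v$.

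First I would evaluate the first piece: $f=2n(e^2-3)/(e^2+1)$ on $[0,1/n^3)$ contributes $2n\frac{e^2-3}{e^2+1}\cdot(g_n(1/n^3)-g_n(0))$. Using $g_n(0)=0$ and the Taylor expansion $g_n(v)=\binom{n+1}{2}v+O(n^3v^2)$ near $v=0$ (which follows from $g_n(v)/v\to n(n+1)/2$ as $v\to 0$, already noted in the proof of Lemma~\ref{lem:LB_for_1_threshold} up to a factor), we get $g_n(1/n^3)\sim \frac{n(n+1)}{2}\cdot\frac{1}{n^3}\to \frac12\cdot\frac1n$, so the first piece tends to $2n\cdot\frac{e^2-3}{e^2+1}\cdot\frac{1}{2n}=\frac{e^2-3}{e^2+1}$. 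Second, the middle piece: $f=\frac1n\cdot\frac{4}{e^2+1}$ on $[1/n^3,1/n^3+\beta/n)$ contributes $\frac1n\cdot\frac{4}{e^2+1}\cdot(g_n(1/n^3+\beta/n)-g_n(1/n^3))$. Here I would invoke Proposition~\ref{prop:limit_of_functions_limited_thresholds}(ii): $g_n(\lambda/n)/n\to \bar g(\lambda)=(e^{-\lambda}+\lambda-1)/\lambda$. Since $1/n^3+\beta/n=(\beta+1/n^2)/n$ and the $1/n^2$ shift is negligible, $g_n(1/n^3+\beta/n)/n\to\bar g(\beta)$, while $g_n(1/n^3)/n\to 0$; thus the middle piece tends to $\frac{4}{e^2+1}\bar g(\beta)=\frac{4(e^{-\beta}+\beta-1)}{\beta(e^2+1)}$. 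Summing the two limits gives the claimed expression.

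The main obstacle is making the asymptotic estimates rigorous at the two scales involved: the $1/n^3$ scale, where I need the quadratic remainder in the expansion of $g_n$ near $0$ to be genuinely lower order (i.e. $g_n(1/n^3)=\frac{1}{2n}+o(1/n)$, so that multiplying by $2n$ is safe), and the $\beta/n$ scale, where I need uniform convergence in Proposition~\ref{prop:limit_of_functions_limited_thresholds}(ii) together with a Lipschitz-type bound to absorb the $1/n^2$ perturbation in the upper endpoint. Both are elementary: the first follows from bounding $|g_n(v)-\binom{n+1}{2}v|\le C n^3 v^2$ for small $v$ (e.g. by expanding $(1-v)^n$ and $(1-(1-v)^n)/v$ as power series with controlled coefficients), and the second from $|g_n'(v)|=|\sum_{t=1}^n t(1-v)^{t-1}|\le n^2$, so that a shift of $1/n^2$ in the argument changes $g_n/n$ by at most $1/n\to 0$. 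Finally, the reduction to smooth $F$ is handled by the regularizer remark already in the text, so computing the limit for this exact step $f$ suffices.
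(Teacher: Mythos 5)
Your proposal is correct and follows essentially the same route as the paper: both split $E_n=\int_0^1 f(u)\sum_{t=1}^n t(1-u)^{t-1}\,\mathrm{d}u$ according to the three steps of $f$ and compute the asymptotics of the two nonzero pieces at the $1/n^3$ and $\beta/n$ scales. The only difference is that you route the limits through the antiderivative $g_n$ and Proposition~\ref{prop:limit_of_functions_limited_thresholds}(ii), whereas the paper sandwiches $(1-u)^t$ directly via Bernoulli's inequality; this is the same calculation, and your error-control remarks (the $O(n^3v^2)$ remainder and the $|g_n'|\le n^2$ Lipschitz bound) adequately handle the points the paper handles with its explicit bounds.
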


\begin{proposition}\label{prop:UB_ALG_1_threshold}
    We have $$\lim_{n\to \infty} G_{n,1}\leq \max\left\{ \frac{a}{2}, \left(a + b\beta\right)\left( \frac{e^{-\beta}+\beta-1}{\beta^2} \right), \max_{\lambda\in [0,\beta]} \left\{ \left(\frac{e^{-\lambda}+\lambda-1}{\lambda^2}\right) (a+\lambda b) \right\}\right\},$$
    where $a=2(e^2-3)/(e^2+1)$ and $b=4/(e^2+1)$.
\end{proposition}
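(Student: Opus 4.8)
\textbf{Proof proposal for Proposition~\ref{prop:UB_ALG_1_threshold}.}

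The plan is to compute, or rather upper bound, the limiting value of $G_{n,1}$ for the specific hard instance $f(u)=F^{-1}(1-u)$ defined above, by optimizing over the only free parameter of a single-threshold algorithm on this instance: the quantile $q$. Recall from~\eqref{eq:G_nk} with $k=1$, $n_1=n$ that
\[
G_{n,1} = A_{n,n}(q)\int_0^q f(u)\,\mathrm{d}u + B_n(q)\int_q^1 f(u)\,\mathrm{d}u.
\]
Since $f$ is supported on $[0,1/n^3+\beta/n)$ and equals $0$ afterward, the second integral only ever sees the first two pieces of $f$, and the whole expression depends on $q$ only through where $q$ falls relative to the breakpoints $1/n^3$ and $1/n^3+\beta/n$. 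So the first step is a clean case split: (a) $q\le 1/n^3$; (b) $1/n^3< q\le 1/n^3+\beta/n$; (c) $q> 1/n^3+\beta/n$. In each case I substitute the explicit closed forms for $A_{n,n}(q)$ and $B_n(q)$ (from the definitions of these functions in Section~\ref{sec:small_thresholds}) and the piecewise-constant $f$, then pass to the limit $n\to\infty$ using Proposition~\ref{prop:limit_of_functions_limited_thresholds}: $A_{\phi n,\theta n}(\alpha/n)/n^2\to \bar A_{\phi,\theta}(\alpha)$ and $g_n(\lambda/n)/n\to\bar g(\lambda)$, together with the elementary fact that $B_n(q)$ contributes negligibly whenever $q\gg 1/n$ and like $\bar g(\lambda)\cdot n/\lambda$ scaling when $q=\lambda/n$.

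In case (c), the threshold is so high that $\int_q^1 f=0$, and the algorithm only collects from the top atom $[0,1/n^3)$; here $q\Theta(1)$ or $q$ of order $1/n^2$ are the relevant regimes, and the first-atom contribution $A_{n,n}(q)\cdot(1/n^3)\cdot 2n(e^2-3)/(e^2+1)$ is maximized (after the scaling $q=\mu/n^2$, using $A_{n,n}(\mu/n^2)\sim n^2\cdot(\text{something}\to 1/2$ as the scaled argument $\to 0)$) at the boundary, giving the term $a/2$ with $a=2(e^2-3)/(e^2+1)$. In case (b), write $q=1/n^3+\lambda/n$ with $\lambda\in[0,\beta]$; then $\int_0^q f\to (1/n^2)\cdot a n + \lambda b /n^2 \cdot n$-type bookkeeping collapses so that $A_{n,n}(q)/n^2\to\bar A_{1,1}(\lambda)=\bar g(\lambda)/\lambda=(e^{-\lambda}+\lambda-1)/\lambda^2$ multiplies $a+\lambda b$, and the $B_n$ term again vanishes since $q\gg 1/n$; this produces $\max_{\lambda\in[0,\beta]}\{(e^{-\lambda}+\lambda-1)/\lambda^2\,(a+\lambda b)\}$. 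In case (a) the threshold is below everything, so the algorithm accepts the first value for the full horizon and collects $A_{n,n}(q)\to A_{n,n}(0)=\binom{n+1}{2}$ times $\EE[X]$; but $\EE[X]=\int_0^1 f = (1/n^2)a + (\beta/n) b/n$-type terms, i.e. $\to (a+b\beta)/n^2$-scaled, and $A_{n,n}(0)/n^2\to 1/2$, so one might naively get $(a+b\beta)/2$. The correct limit, however, is obtained by letting $q\to 1/n^3$ from below, i.e. as the $\lambda\to 0$ endpoint of case (b) fed with the \emph{full} mass $a+b\beta$: this gives $(a+b\beta)(e^{-\beta}+\beta-1)/\beta^2$, matching the second term in the statement. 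Taking the maximum over the three cases yields the claimed bound.

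The main obstacle I anticipate is the careful bookkeeping of lower-order terms in the $n\to\infty$ limit: the atom $[0,1/n^3)$ has mass $1/n^3$ but height $\Theta(n)$, so its contribution to $\int_0^q f$ is $\Theta(1/n^2)$, exactly the same order as the $B_n$-vs-$A_n$ normalizations; one has to be scrupulous that cross terms (e.g. $B_n(q)$ times the top atom, or $A_{n,n}(q)$ evaluated at a $q$ straddling a breakpoint) are genuinely $o$ of the retained terms, and that the interchange of limit and supremum over $q$ is legitimate (it is, because for each $n$ the sup over $q$ reduces to a sup over finitely many regimes, each of which converges). A secondary, purely technical point is that $f$ is discontinuous and not strictly increasing, so strictly speaking we must first invoke the mollification remark made right before the proposition — replace $f$ by a smooth strictly decreasing approximation — and argue that all the integrals above change by $o(1)$; since $G_{n,1}$ is a continuous functional of $f$ in $L^1$ and all our estimates are integral estimates, this perturbation is harmless. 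Once these limiting computations are organized, the three-way maximum falls out directly and no genuinely new idea is needed.
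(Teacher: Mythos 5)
Your overall strategy --- split the supremum over $q$ into the three regimes determined by the breakpoints $1/n^3$ and $1/n^3+\beta/n$, substitute the explicit piecewise integrals of $f$, and pass to the limit via Proposition~\ref{prop:limit_of_functions_limited_thresholds} --- is exactly the paper's, and your middle case (b) is handled correctly. However, you have inverted the meaning of the quantile in the two extreme cases, and this breaks both of them. Since $q=\Pr(X\ge z)$ and $z=F^{-1}(1-q)$, a \emph{small} $q$ corresponds to a \emph{high} threshold: for $q\le 1/n^3$ the threshold sits inside or above the top atom, while for $q\ge 1/n^3+\beta/n$ it sits at the bottom of the support, so the algorithm locks in the first value for the whole horizon. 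You assert the opposite in both cases (a) and (c), and the errors propagate. In case (a) the correct computation is $\int_0^q f(u)\,\mathrm du = anq$ (only the top atom contributes), the $B_n$ term is $O(1/n)$, and since $qA_{n,n}(q)$ is increasing for $q<1/(n+1)$ the supremum is attained at $q=1/n^3$ with limit $a\cdot\bar A_{1,1}(0)=a/2$. Your ``correction'' --- taking the $\lambda\to 0$ endpoint of case (b) ``fed with the full mass $a+b\beta$'' to obtain $(a+b\beta)(e^{-\beta}+\beta-1)/\beta^2$ --- is a non sequitur: $\int_0^{1/n^3}f=a/n^2$ carries none of the mass $b\beta/n^2$, and the $\lambda\to0$ endpoint of case (b) produces the factor $\bar A_{1,1}(0)=1/2$, not $(e^{-\beta}+\beta-1)/\beta^2$.

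In case (c), $\int_0^q f=(a+b\beta)/n^2$ is the \emph{full} mass (not just the top atom), and $A_{n,n}(q)$ is decreasing in $q$, so the supremum over $q\ge 1/n^3+\beta/n$ is attained at the left endpoint $q\approx\beta/n$, where $A_{n,n}(\beta/n)/n^2\to\bar A_{1,1}(\beta)=(e^{-\beta}+\beta-1)/\beta^2$; this regime therefore yields $(a+b\beta)(e^{-\beta}+\beta-1)/\beta^2$, not $a/2$. Your scaling $q=\mu/n^2$ does not even lie in this regime for large $n$ (since $\mu/n^2<\beta/n$), and the claim $A_{n,n}(q)/n^2\to 1/2$ fails at the relevant boundary. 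In short, you have attached the right three limiting values to the wrong regions with incorrect justifications: the final three-way maximum coincides with the true bound, but the argument as written does not establish it. Swapping the roles of your cases (a) and (c) and redoing the two computations as above repairs the proof and recovers the paper's argument.
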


Using these two propositions, 
\[
\lim_n \frac{G_{n,1}}{E_n} \leq \frac{ \max\left\{ a/2, \left(a + b\beta\right)\left( \frac{e^{-\beta}+\beta-1}{\beta^2} \right), \max_{\lambda\in [0,\beta]} \left\{ \left(\frac{e^{-\lambda}+\lambda-1}{\lambda^2}\right) (a+\lambda b) \right\}\right\}}{(e^2-3)/(e^2+1)+ 4(e^{-\beta}+\beta-1)/(\beta(e^2+1))}
\]
for any $\beta>0$. Note that the denominator in the right hand side of the inequality tends to $1$ when $\beta\to \infty$. Hence,
\[
\lim_{\beta\to \infty}\lim_{n\to \infty} \frac{G_{n,1}}{E_n} \leq \max\left\{ \frac{e^2-3}{e^2+1}, \frac{4}{e^2+1}, \max_{\lambda\in [0,\infty)} \left\{ \left(\frac{e^{-\lambda}+\lambda-1}{\lambda^2}\right) \left(2\frac{e^2-3}{e^2+1}+\lambda \frac{4}{e^2+1}\right) \right\}\right\}
\]
The function $\lambda \mapsto (2(e^2-3)+ 4\lambda )(e^{-\lambda}+\lambda-1)/\lambda^2$ is increasing between $[0,2]$ and decreasing in $[2,+\infty)$ (see Proposition~\ref{prop_app:temp_function_1_threshold} in Appendix~\ref{app:small_thresholds}). Hence, the maximum of such a function happens at $\lambda=2$. Therefore,
\[
\lim_{\beta\to \infty}\lim_{n\to \infty} \frac{G_{n,1}}{E_n}\leq \max\left\{  \frac{e^2-3}{e^2+1} , \frac{4}{e^2+1}, \frac{1}{2e^2}(e^2+1) \right\} = \frac{1+e^{-2}}{2}.
\]

\subsection{Analysis and Guarantees for Multiple Thresholds}

We start this subsection with $2$-threshold algorithms. In this case, we can still provide a refined analysis with a value that improves upon the approximation obtained with single-threshold algorithms. For $k\geq 3$ thresholds, analyzing intervals of different sizes becomes nontrivial, so we focus on intervals of the same size. 

\paragraph{Analysis for $k=2$ Thresholds.} We set $k=2$, $n_1=(1-\theta) n$ and $n_2=\theta n$ with $\theta \in (0,1)$. To avoid notational clutter, we assume that $n_1,n_2$ are integers. The following lemma gives us a lower bound on the asymptotic value of the approximation ratio.

\begin{lemma}\label{lem:2_thresholds_general_LB}
    For $0<\alpha_2 < \alpha_1 < n$ fixed, the $2$-threshold algorithm that uses quantile $q_2/n$ in the first $n_2=\theta n$ observed values and quantile $q_1=\alpha_1/n$ in the remaining $n_1=(1-\theta)n$ values attains an asymptotic approximation ratio of
    \begin{align*}
        \lim_{n\to \infty} \frac{G_{n,2}}{E_n} &\geq   \min\left\{  2 \left( \Bar{A}_{1,\theta}(\alpha_2) + \Bar{A}_{1-\theta,1-\theta}(\alpha_1)e^{-\alpha_2 \theta}  \right)  , \alpha_2 \Bar{A}_{1,\theta}(\alpha_2) + \alpha_1 \Bar{A}_{1-\theta,1-\theta}(\alpha_1) e^{-\alpha_2 \theta}, \right. \\
    &\qquad \qquad \left. \inf_{\lambda\in [\alpha_2,\alpha_1]}\left\{  \frac{\alpha_2}{\Bar{g}(\lambda)} \Bar{A}_{1,\theta}(\alpha_2) + \frac{\lambda}{\Bar{g}(\lambda)} \Bar{A}_{1-\theta,1-\theta}(\alpha_1)e^{-\alpha_2 \theta}  \right\} \right\}.
    \end{align*}
\end{lemma}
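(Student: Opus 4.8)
\textbf{Proof plan for Lemma~\ref{lem:2_thresholds_general_LB}.}

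The plan is to apply the Key Lower Bound (Lemma~\ref{lem:key_lower_bound_fixed_thresholds}) with $k=2$, specializing the generic finite-$n$ bound and then passing to the limit $n\to\infty$ using Proposition~\ref{prop:limit_of_functions_limited_thresholds}. Writing $q_1=\alpha_1/n$, $q_2=\alpha_2/n$, $n_1=(1-\theta)n$, $n_2=\theta n$, $N_1=n_1=(1-\theta)n$, and $N_2=n$, Lemma~\ref{lem:key_lower_bound_fixed_thresholds} gives
\[
\frac{G_{n,2}}{E_n}\ \geq\ \inf_{v\in[0,1]}\left\{ (1-q_2)^{n_2}\,\frac{\min\{v,q_1\}}{g_n(v)}\,A_{N_1,n_1}(q_1)\ +\ \frac{\min\{v,q_2\}}{g_n(v)}\,A_{N_2,n_2}(q_2)\right\}.
\]
Here the $s=1$ summand carries the factor $\prod_{\tau>1}(1-q_\tau)^{n_\tau}=(1-q_2)^{n_2}$ and the $s=2$ summand has empty product equal to $1$. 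First I would record the three elementary limits that feed the computation: $(1-q_2)^{n_2}=(1-\alpha_2/n)^{\theta n}\to e^{-\alpha_2\theta}$; $A_{N_1,n_1}(q_1)/n^2\to \bar A_{1-\theta,1-\theta}(\alpha_1)$ and $A_{N_2,n_2}(q_2)/n^2\to \bar A_{1,\theta}(\alpha_2)$ by part (i) of Proposition~\ref{prop:limit_of_functions_limited_thresholds}; and $g_n(v)$ behaves like $n\,\bar g(nv)$ in the relevant regime by part (ii). Since both $A$ factors scale like $n^2$ and $g_n$ like $n$, the quantile factor $\min\{v,q_i\}=\min\{v,\alpha_i/n\}$ contributes the missing $1/n$, so the whole expression has a finite nonzero limit.

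The substantive step is the case analysis of the infimum over $v\in[0,1]$, exactly mirroring the single-threshold argument in the proof of Lemma~\ref{lem:LB_for_1_threshold} but now with two breakpoints $q_2<q_1$. I would split $[0,1]$ into $v\in[0,q_2]$, $v\in[q_2,q_1]$, and $v\in[q_1,1]$. On $[0,q_2]$ both minima equal $v$, so the bracket equals $\frac{v}{g_n(v)}\big((1-q_2)^{n_2}A_{N_1,n_1}(q_1)+A_{N_2,n_2}(q_2)\big)$; by Proposition~\ref{prop:monotonicity_g_n}(ii) the factor $v/g_n(v)$ is minimized as $v\to0$, where $g_n(v)/v\to 2/(n(n+1))\sim 2/n^2$, producing in the limit the term $2\big(\bar A_{1,\theta}(\alpha_2)+\bar A_{1-\theta,1-\theta}(\alpha_1)e^{-\alpha_2\theta}\big)$. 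On $[q_1,1]$ both minima are constants $q_2$ and $q_1$ respectively, so the bracket is $\frac{1}{g_n(v)}\big((1-q_2)^{n_2}q_1 A_{N_1,n_1}(q_1)+q_2 A_{N_2,n_2}(q_2)\big)$, and since $g_n$ is increasing (Proposition~\ref{prop:monotonicity_g_n}(i)) this is minimized at $v=1$ with $g_n(1)=n$; scaling $q_i=\alpha_i/n$ and $g_n(1)/n\to 1=\bar g(\infty)$-type normalization yields $\alpha_2\bar A_{1,\theta}(\alpha_2)+\alpha_1\bar A_{1-\theta,1-\theta}(\alpha_1)e^{-\alpha_2\theta}$. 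On the middle interval $[q_2,q_1]$ we have $\min\{v,q_2\}=q_2$ but $\min\{v,q_1\}=v$; writing $v=\lambda/n$ with $\lambda\in[\alpha_2,\alpha_1]$ and using $g_n(\lambda/n)/n\to\bar g(\lambda)$, the bracket converges to $\frac{\alpha_2}{\bar g(\lambda)}\bar A_{1,\theta}(\alpha_2)+\frac{\lambda}{\bar g(\lambda)}\bar A_{1-\theta,1-\theta}(\alpha_1)e^{-\alpha_2\theta}$, and one takes the infimum over $\lambda\in[\alpha_2,\alpha_1]$. Taking the minimum of the three regime-values gives precisely the claimed bound.

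The main technical obstacle is justifying the interchange of the infimum over $v$ with the limit $n\to\infty$: the finite-$n$ infimand is a function of $v$ on $[0,1]$ whose natural scaling differs across the three regimes (near $v=0$ it lives at scale $v\sim 1/n$, near $v=1$ at scale $O(1)$), so one cannot simply invoke uniform convergence on all of $[0,1]$. The clean way around this is to lower-bound $G_{n,2}/E_n$ for each fixed $n$ by the minimum of the three regime-infima computed at finite $n$ (each obtained by the monotonicity lemma exactly as above, with no limit taken), and only then let $n\to\infty$ inside each of the three finite-dimensional expressions — the first two are explicit in $n$ and converge by Proposition~\ref{prop:limit_of_functions_limited_thresholds}, while for the middle one the infimum over the compact interval $\lambda\in[\alpha_2,\alpha_1]$ commutes with the limit by uniform convergence of $A_{\phi n,\theta n}(\lambda/n)/n^2\to\bar A_{\phi,\theta}(\lambda)$ and $g_n(\lambda/n)/n\to\bar g(\lambda)$ on that compact set (both are smooth, monotone, and the limit functions are continuous and strictly positive there, since $\lambda\geq\alpha_2>0$). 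A secondary bookkeeping point is the mismatch between the quantile $q_i=\alpha_i/n$ used here and the argument $\alpha/n$ versus $\alpha/(n+1)$ appearing in Lemma~\ref{lem:LB_for_1_threshold}; in the limit this is immaterial, but I would state the lemma with $q_i=\alpha_i/n$ throughout and note the $o(1)$ discrepancy vanishes.
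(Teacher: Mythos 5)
Your proposal is correct and follows essentially the same route as the paper: apply Lemma~\ref{lem:key_lower_bound_fixed_thresholds} with $k=2$, split $[0,1]$ at the breakpoints $\alpha_2/n$ and $\alpha_1/n$, use Proposition~\ref{prop:monotonicity_g_n} in the two outer regimes and the substitution $v=\lambda/n$ in the middle one, then pass to the limit via Proposition~\ref{prop:limit_of_functions_limited_thresholds}. Your explicit justification for interchanging the infimum with the limit on the compact middle interval is a point the paper leaves implicit, so that extra care is welcome (minor aside: you mean $v/g_n(v)\to 2/(n(n+1))$ as $v\to 0$, not $g_n(v)/v$, echoing the same slip in the paper's proof of Lemma~\ref{lem:LB_for_1_threshold}).
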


Optimizing over $\alpha_2<\alpha_1$ and $\theta\in [0,1]$, we get $\lim_{n\to \infty} \gamma_{n,2} \geq 0.587$ for $\alpha_2\approx 0.671$, $\alpha_1\approx 3.210$ and $\theta \approx 0.160$. 

\begin{proof}[Proof of Lemma~\ref{lem:2_thresholds_general_LB}]
    The proof of this lemma follows a similar scheme as in the proof of Lemma~\ref{lem:LB_for_1_threshold}. We have
    \begin{align*}
        \frac{G_{n,2}}{E_n}&\geq \inf_{v\in [0,1]}\left\{ \frac{\min\{v,\alpha_2/n\}}{g_n(v)} A_{n,\theta n}(\alpha_2/n) + \frac{\min\{ v, \alpha_1/n \}}{g_n(v)} (1-\alpha_2/n)^n A_{(1-\theta)n,(1-\theta)n}(\alpha_1/n)  \right\} \\
        & = \min\left\{  \frac{2}{n(n+1)}A_{n,\theta n}(\alpha_2/n) + \frac{2}{n(n+1)} (1-\alpha_2/n)^{\theta n} A_{(1-\theta)n, (1-\theta)n}(\alpha_1/n)   ,  \right. \\ 
    &\quad  \left. \frac{\alpha_2}{n^2} A_{n,\theta n}(\alpha_2/n) + \frac{\alpha_1}{n^2} (1-\alpha_2/n)^{\theta n} A_{(1-\theta)n, (1-\theta)n}(\alpha_1/n), \right. \\
    &\quad \left. \inf_{v\in [\alpha_2/n,\alpha_1/n]}\left\{  \frac{\alpha_2/n}{g_n(v)} A_{n,\theta n}(\alpha_2/n) + \frac{v}{g_n(v)} (1-\alpha_2/n)^{\theta n} A_{(1-\theta)n, (1-\theta)n}(\alpha_1/n)  \right\} \right\}
    \end{align*}
    where in the first line we used the Key Lemma~\ref{lem:key_lower_bound_fixed_thresholds} and in the second line we broke down the interval $[0,1]$ into $[0,\alpha_2/n]$, $[\alpha_2/n,\alpha_1/n]$ and $[\alpha_1/n,1]$ and used Proposition~\ref{prop:monotonicity_g_n}. The result now follows by taking limit in $n$ and using Proposition~\ref{prop:limit_of_functions_limited_thresholds}.
\end{proof}

\paragraph{Thresholds with Equidistant Intervals.} We consider $k\geq 1$ thresholds $0<\alpha_k/n < \alpha_{k-1}/n< \cdots < \alpha_1/n < 1$ and $n_s=n/k$ for all $s=1,\ldots,k$, where we assumed that $k$ divides $n$. This last assumption is to avoid notational clutter; otherwise, we would have $n_s\in \{ \lfloor n/k \rfloor, \lceil n/k \rceil  \}$ for all $s=1,\ldots,k$, but since $n_s/n\to 1/k$ when $n\to \infty$, the asymptotic behavior is unaltered when we focus on $n$ divisible by $k$. Then, similar as in the previous cases, using Lemma~\ref{lem:key_lower_bound_fixed_thresholds} and breaking down the interval of optimization, we obtain

\begin{lemma}
    For any fixed $0<\alpha_k < \alpha_{k-1}< \cdots < \alpha_1 < n$, we have
    \begin{align*}
    \lim_{n\to \infty} \frac{G_{n,k}}{E_n} &\geq  \min\left\{ 2\sum_{t=1}^k e^{-\sum_{\tau>t}\alpha_\tau/k }  \Bar{A}_{t/k,1/k}(\alpha_t),\,  \sum_{t=1}^k \alpha_t e^{-\sum_{\tau>t} \alpha_\tau/k} \Bar{A}_{t,k}(\alpha_t)     , \right. \\
    & \left.\inf_{\substack{j\in [k-1]\\\lambda\in [\alpha_{j+1},\alpha_j]}} \left\{ \frac{\lambda}{\Bar{g}(\lambda)} \sum_{t\leq j} e^{-\sum_{\tau>t}\alpha_\tau/k } \Bar{A}_{t/k,1/k}(\alpha_t) 
 + \frac{1}{\Bar{g}(\lambda)} \sum_{t > j} \alpha_t e^{-\sum_{\tau>t}\alpha_\tau/k } \Bar{A}_{t/k,1/k}(\alpha_t) \right\}    \right\}.
\end{align*}
\end{lemma}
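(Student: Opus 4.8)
The plan is to mimic precisely the argument used in the proofs of Lemma~\ref{lem:LB_for_1_threshold} and Lemma~\ref{lem:2_thresholds_general_LB}, generalized to $k$ equal-size intervals. First I would instantiate the Key Lower Bound Lemma~\ref{lem:key_lower_bound_fixed_thresholds} with $n_s = n/k$ for all $s$ and quantiles $q_s = \alpha_s/n$, which gives
\[
\frac{G_{n,k}}{E_n} \geq \inf_{v\in[0,1]} \left\{ \sum_{s=1}^k \prod_{\tau>s}(1-\alpha_\tau/n)^{n/k} \frac{\min\{v,\alpha_s/n\}}{g_n(v)} A_{N_s, n/k}(\alpha_s/n) \right\},
\]
where $N_s = sn/k$. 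The key structural observation is that the integrand, as a function of $v$, is piecewise of one of two monotone types on each subinterval $[\alpha_{j+1}/n, \alpha_j/n]$ (with the convention $\alpha_0 = n$, $\alpha_{k+1}=0$): for indices $s \leq j$ the factor $\min\{v,\alpha_s/n\}$ has saturated to the constant $\alpha_s/n$, so that term contributes $\tfrac{\alpha_s/n}{g_n(v)}(\cdots)$, which is decreasing in $v$ by Proposition~\ref{prop:monotonicity_g_n}(i); for indices $s>j$ the factor is $v$, so that term contributes $\tfrac{v}{g_n(v)}(\cdots)$, which is increasing in $v$ by Proposition~\ref{prop:monotonicity_g_n}(ii). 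Hence on each piece the infimum over $v$ is attained at one of the two endpoints, except on the middle pieces where the mixed monotonicity forces us to keep the $\inf_{\lambda}$ explicitly.

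Next I would enumerate the candidate minimizers. The left endpoint $v\to 0$ makes every term of the ``$\min\{v,\alpha_s/n\}=v$'' type, and using $g_n(v)/v \to 2/(n(n+1))$ as $v\to 0$ (computed from the definition of $g_n$, exactly as in the $k=1$ proof) yields the first term of the claimed minimum after scaling; the right endpoint $v=1$ makes every term of the saturated type with $g_n(1)=n$, yielding the second term; and the interior of a piece $[\alpha_{j+1}/n,\alpha_j/n]$ yields the $\inf_{\lambda\in[\alpha_{j+1},\alpha_j]}$ expression after substituting $v = \lambda/n$. I would then take $n\to\infty$. Here Proposition~\ref{prop:limit_of_functions_limited_thresholds} supplies $A_{\phi n, \theta n}(\alpha/n)/n^2 \to \bar A_{\phi,\theta}(\alpha)$ and $g_n(\lambda/n)/n \to \bar g(\lambda)$, while the products $\prod_{\tau>s}(1-\alpha_\tau/n)^{n/k} \to \exp(-\sum_{\tau>s}\alpha_\tau/k)$ by the standard exponential limit. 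Matching the powers of $n$ (each $A$ contributes $n^2$, each $g_n$ at an interior point contributes $n$, the $v=0$ normalization $2/(n(n+1))$ contributes $n^{-2}$, and the multiplier $\alpha_s/n$ or $v=\lambda/n$ contributes $n^{-1}$) shows that every term is asymptotically of order $1$, and the three cases collapse to exactly the three expressions inside the $\min$ in the statement.

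The main obstacle I anticipate is bookkeeping rather than conceptual: one must argue carefully that on the middle interval pieces the infimum of a \emph{sum} of an increasing and a decreasing function need not be at an endpoint, so the $\inf_\lambda$ genuinely cannot be simplified further, and one must verify the index split (terms with $t\leq j$ saturated, terms with $t>j$ not yet) is consistent as $j$ ranges over $[k-1]$. A secondary technical point is justifying the interchange of $\lim_{n\to\infty}$ with $\inf_{v}$: since for each fixed $n$ the infimum is attained at one of finitely many explicit candidates (the endpoints $0$, $1$, and the finitely many interior pieces), and each candidate converges, the limit of the infimum is the infimum (minimum) of the limits, which is what the lemma asserts. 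Everything else is a direct transcription of the two-threshold argument with $\theta$ replaced by the cumulative fractions $t/k$.
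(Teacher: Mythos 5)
Your overall strategy is exactly the one the paper intends (the paper explicitly skips this proof, saying it follows the same construction as the $k=1$ and $k=2$ cases): instantiate Lemma~\ref{lem:key_lower_bound_fixed_thresholds} with $n_s=n/k$ and $q_s=\alpha_s/n$, split $[0,1]$ at the points $\alpha_j/n$, use Proposition~\ref{prop:monotonicity_g_n} to push the infimum to the endpoints $v=0$ and $v=1$ on the two extreme pieces, keep the explicit $\inf_\lambda$ on the middle pieces, and pass to the limit via Proposition~\ref{prop:limit_of_functions_limited_thresholds}. The endpoint cases and the power-of-$n$ bookkeeping are handled correctly, and your justification for interchanging $\lim_n$ with the finite minimum over candidate pieces is fine.

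However, your ``key structural observation'' reverses the index split on the middle pieces, and this reversal is repeated in your closing paragraph. Since $\alpha_1>\cdots>\alpha_k$, for $v\in[\alpha_{j+1}/n,\alpha_j/n]$ and $s\le j$ we have $\alpha_s/n\ge\alpha_j/n\ge v$, so $\min\{v,\alpha_s/n\}=v$ (\emph{not} saturated), while for $s>j$ we have $\alpha_s/n\le\alpha_{j+1}/n\le v$, so the factor saturates to $\alpha_s/n$. This is the opposite of what you wrote, and it is also what the lemma's third term encodes ($t\le j$ carries the factor $\lambda$, $t>j$ carries $\alpha_t$); the $k=2$ proof in the paper exhibits the same pattern, with the $s=2$ term saturated at $\alpha_2/n$ and the $s=1$ term carrying $v$ on $[\alpha_2/n,\alpha_1/n]$. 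Carried through literally, your split would produce $\frac{1}{\bar g(\lambda)}\sum_{t\le j}\alpha_t(\cdots)+\frac{\lambda}{\bar g(\lambda)}\sum_{t>j}(\cdots)$, which does not match the statement, so your proposal is internally inconsistent even though you assert the correct final expression. This is a fixable bookkeeping error rather than a conceptual one, but it must be corrected. Incidentally, your limit computation yields $\bar A_{t/k,1/k}(\alpha_t)$ in the second term of the minimum, whereas the lemma as printed has $\bar A_{t,k}(\alpha_t)$; your version is the one consistent with the other two terms, so that discrepancy appears to be a typo in the statement rather than an error on your part.
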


We skip the proof as it follows the same construction as in the previous subsections. For $k=3$, via inspection on values $\alpha_3<\alpha_2<\alpha_1$, we get $\alpha_1\approx 62.74$, $\alpha_2\approx 5.55$, $\alpha_3\approx 0.960$ and $\lim_n \gamma_{n,3} \geq 0.60265$.

\section{Tightness via Convex Optimization}\label{sec:cvx-optimal}

Given a distribution $F\in \calF$, the value $(1+\varepsilon)G_n(F)-E_n(F)$ measures the difference between the optimal policy value $G_n(F)$, scaled by $1+\varepsilon$, and the benchmark $E_n(F)$.
Consider the following quantity:
$\varepsilon_n=\inf\{\varepsilon\ge 0:(1+\varepsilon)G_n(F)-E_n(F)\ge 0\text{ for every }F\in \calF\}.$
Observe that for every positive integer $n$, it holds directly from the definition of $\gamma_n$ and $\varepsilon_n$ that $\gamma_n=1/(1+\varepsilon_n)$.

In what follows, we fix a positive integer $n$ and $\varepsilon\ge 0$.
Furthermore, for notation simplicity, we denote by $P_n(t)$ the value $n-\sum_{\ell=1}^nt^{\ell}$.
For every $j\in \{1.\ldots,n-2\}$, let $A_j:\RR^{j+1}\to \RR$ be the function defined as
\begin{align*}
A_{j}(y_1,\ldots,y_{j+1})&=\frac{j+2}{j+1}y_{j+1}-\frac{1}{j(j+1)}\sum_{\ell=1}^jy_{\ell},
\end{align*}
and let $L_{n,\varepsilon}:\RR^{n-1}\to \RR$ be the linear function defined as
$$L_{n,\varepsilon}(y)=(1+\varepsilon)n\left(1+\sum_{j=1}^{n-1}y_j\right)-\frac{n(n+1)}{2(n-1)}\left(n\sum_{j=1}^{n-1}y_j-(n-1)\sum_{j=1}^{n-2}y_j\right).$$
Consider the function $\Upsilon_{n,\varepsilon}:\RR^{n-1}\to \RR$ given by
$$\Upsilon_{n,\varepsilon}(y)=L_{n,\varepsilon}(y)-P_{n}(2y_1)-\sum_{j=1}^{n-2}y_jP_{n}(A_j(y_1,\ldots,y_{j+1})/y_j).$$
Let $K_n$ be the polyhedron in $\RR^{n-1}$ defined as follows:
$$K_n=\Big\{y\in \RR_+^{n-1}:A_j(y_1,\ldots,y_{j+1})\ge 0\text{ for all }j\in \{1,\ldots,n-2\}\Big\}.$$
The following is the main result of this section.
\begin{theorem}\label{thm:apx-characterization}
For every positive integer $n$, there exists $\varepsilon'_n\ge \varepsilon_n$ such that for every $\varepsilon\in [\varepsilon_n,\varepsilon_n']$, the following holds:
\begin{enumerate}[itemsep=0pt,label=\normalfont{(\roman*)}]
    \item $(1+\varepsilon)G_n(F)\ge E_n(F)$ for every $F\in \calF$ if and only if the value of the optimization problem
\begin{equation}\min\Big\{\Upsilon_{n,\varepsilon}(y):y\in K_n\Big\}\tag*{\normalfont{\mbox{[$\calC$]$_{n,\varepsilon}$}}}\label{form:cvx}\end{equation}
is non-negative. \label{tight-cvx-1}
    \item There exists a unique point $y^{\star}$ satisfying $\nabla \Upsilon_{n,\varepsilon}(y^{\star})=0$, and furthermore, we have $y^{\star}\in K_n$. Then, in particular, $y^{\star}$ is the 
    unique global minimum of \ref{form:cvx}. \label{tight-cvx-2}
\end{enumerate}
\end{theorem}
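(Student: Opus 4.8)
\textbf{Proof strategy for Theorem~\ref{thm:apx-characterization}.}

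The plan is to first establish a correspondence between smooth, strictly increasing distributions $F$ (which suffice by Proposition~\ref{prop:smooth_instances}) and points $y\in K_n$, in such a way that $(1+\varepsilon)G_n(F)-E_n(F)$ is captured by $\Upsilon_{n,\varepsilon}(y)$. The natural parametrization comes from the optimal policy: for the optimal dynamic programming policy, the thresholds $0=z_0\le z_1\le \cdots \le z_{n-1}$ satisfy $z_{n-\ell}=G_{\ell-1}(F)/(\ell-1)$ for $\ell\ge 2$, so I would set $y_j$ proportional to the threshold increments $z_j-z_{j-1}$ (rescaled so $\sum_j y_j$ normalizes the instance, e.g.\ fixing $z_{n-1}$ or $\EE[X]$). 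The monotonicity constraints defining $K_n$ — namely $y\in\RR^{n-1}_+$ together with $A_j(y_1,\ldots,y_{j+1})\ge 0$ — should then be exactly the conditions characterizing which threshold sequences arise from the optimal policy on some $F\in\calF$; this is the ``first step'' alluded to in the introduction, reducing to an infinite-dimensional problem over sequences with three monotonicity properties, then truncating/discretizing to dimension $n-1$. Once this dictionary is set up, part~\ref{tight-cvx-1} follows: $(1+\varepsilon)G_n(F)\ge E_n(F)$ for all $F$ iff $\Upsilon_{n,\varepsilon}(y)\ge 0$ for all $y\in K_n$ iff the program \ref{form:cvx} has non-negative value. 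The term-by-term identification — $L_{n,\varepsilon}(y)$ matching $(1+\varepsilon)\cdot n\cdot\EE[\max]$-type sums via $E_n(F)=\int_0^\infty P_n(F(x))\,\mathrm dx$ rewritten through the inverse, and the $P_n(2y_1)$ and $y_jP_n(A_j/y_j)$ terms matching the pieces of $G_n(F)$ — will require care but is essentially bookkeeping of the kind already done for~\eqref{eq:G_nk}.

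For part~\ref{tight-cvx-2}, I would argue in three moves. \emph{Convexity:} show $\Upsilon_{n,\varepsilon}$ is convex on $K_n$ (and ideally on a neighborhood). The linear part $L_{n,\varepsilon}$ contributes nothing to the Hessian, so the content is that $-P_n(2y_1)-\sum_j y_jP_n(A_j(y_1,\ldots,y_{j+1})/y_j)$ is convex. Note $P_n(t)=n-\sum_{\ell=1}^n t^\ell$, and the maps $y\mapsto y_j P_n(\text{linear in }y/y_j)$ are perspective-type transforms of $-P_n$ composed with affine maps; since $t\mapsto -P_n(t)=\sum_{\ell=1}^n t^\ell - n$ is convex on $[0,1]$ (all terms $t^\ell$ convex there), its perspective $(y_j,u)\mapsto y_j\cdot(-P_n(u/y_j))$ is jointly convex on $\RR_{++}\times\RR_+$, and precomposition with the affine map $y\mapsto(y_j,A_j(y_1,\ldots,y_{j+1}))$ preserves convexity; summing over $j$ and adding the $j=0$ term $-P_n(2y_1)$ (with $-P_n$ convex, $2y_1$ affine) gives convexity of $\Upsilon_{n,\varepsilon}$ on $K_n$. (One should check the relevant arguments $A_j/y_j$ and $2y_1$ stay in $[0,1]$, which is where the polyhedral constraints and the normalization are used.) \emph{Existence and location of a critical point:} solve $\nabla\Upsilon_{n,\varepsilon}(y)=0$ explicitly. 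Because $\Upsilon_{n,\varepsilon}$ has the special staircase structure ($y_{j+1}$ enters only $A_j$ and $A_{j+1}$), the stationarity equations form a triangular-ish recursive system that can be solved coordinate by coordinate; I expect this to pin down a unique $y^\star$ and, by direct inspection of the resulting formulas, to verify $A_j(y^\star)\ge 0$ and $y^\star\ge 0$, i.e.\ $y^\star\in K_n$. This is also the step that later feeds the first-order optimality system and the limiting ODE~\eqref{ode:1}–\eqref{ode:3}. \emph{Global minimality:} a convex function with an interior critical point of its domain attains its global minimum there, and strict convexity (from the strictly increasing $F$ / strictly positive density, equivalently strict convexity of $-P_n$ away from the boundary contributed by the $\ell=1$ linear term being dominated by higher powers' strict convexity) gives uniqueness.

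The main obstacle, I expect, is the \emph{reduction itself} — proving that $K_n$ is \emph{exactly} the image of $\calF$ under the threshold-increment map, i.e.\ that every $y\in K_n$ is realized by an actual distribution whose optimal-policy thresholds are the prescribed $z_j$'s (surjectivity), and conversely that optimal thresholds always satisfy the $A_j\ge 0$ inequalities (the correct packaging of the ``three monotonicity properties''). The ``only if'' direction of~\ref{tight-cvx-1} needs surjectivity so that a bad $y$ yields a genuine bad instance; the ``if'' direction needs the inclusion the other way. Establishing the precise form of the constraints — that monotonicity of thresholds, of threshold increments, and of some second-order quantity collapse to the clean inequalities $A_j(y_1,\ldots,y_{j+1})\ge 0$ — is the delicate combinatorial/analytic heart; the convexity and critical-point computations, while lengthy, are comparatively mechanical. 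A secondary subtlety is the role of $\varepsilon_n'$: the statement only claims the equivalence and the critical-point structure for $\varepsilon$ in a right-neighborhood $[\varepsilon_n,\varepsilon_n']$ of $\varepsilon_n$, presumably because for larger $\varepsilon$ the argument $A_j/y_j$ or $2y_1$ at the critical point could leave $[0,1]$ and break either convexity or the membership $y^\star\in K_n$; I would define $\varepsilon_n'$ as the supremum of $\varepsilon$ for which the critical point stays in the region where all these conditions hold, and check it is $\ge\varepsilon_n$ by a continuity/monotonicity argument anchored at $\varepsilon=\varepsilon_n$ where non-negativity of the optimal value holds by definition.
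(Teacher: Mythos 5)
Your outline matches the paper's architecture: reduce to an infinite-dimensional problem over threshold-increment sequences satisfying monotonicity conditions, truncate to a finite convex program, and locate the unique interior critical point via a recursive first-order system whose solvability range defines $\varepsilon_n'$. One ingredient of yours is genuinely cleaner than the paper's: for convexity of $\Upsilon_{n,\varepsilon}$ you invoke the perspective transform --- since $-P_n$ is convex on $\RR_+$, the map $(u,y_j)\mapsto y_j\cdot(-P_n(u/y_j))$ is jointly convex and precomposition with the affine map $y\mapsto (A_j(y),y_j)$ preserves this. The paper instead proves Proposition \ref{prop:unique-solution} by an explicit Hessian/Schur-complement computation; your argument reaches the same conclusion in a few lines and is worth preferring.

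However, two gaps remain. First, the passage from the infinite-dimensional problem to \ref{form:cvx} is not ``bookkeeping'': the benchmark $E_n(F)=\sum_{j=0}^{\infty}(T_{j+1}-T_j)P_n(\mu_j(T)/(T_{j+1}-T_j))$ has an infinite tail that must be controlled. The paper needs a nontrivial tail-domination argument (Lemma \ref{lem:truncating-sequence}: replacing the tail by a geometric-type sequence with constant ratio $\delta$ and bounding $P_n(\delta)/(1-\delta)\le n(n+1)/2$) for the lower-bound direction, and a separate extension construction (Lemma \ref{lem:extension-to-infinite}) for the upper-bound direction; your proposal does not address either. Second, your framing of the ``main obstacle'' --- proving $K_n$ is \emph{exactly} the image of $\calF$ under the threshold-increment map --- is both stronger than needed and, as stated, false: $K_n$ only encodes $A_j\ge 0$ (condition \ref{mon2}-type positivity) and drops the non-decreasing-ratio condition \ref{mon3}, so $K_n$ strictly contains the set of truncated feasible sequences, and an arbitrary $y\in K_n$ need not be realized by any distribution. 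The paper circumvents surjectivity entirely: the lower bound holds because truncations of feasible sequences land in $K_n$ (so $\min_{K_n}\Upsilon_{n,\varepsilon}$ is a valid lower bound over a \emph{larger} set), and the upper bound only requires that the \emph{unique minimizer} $y^{\star}=y(\varepsilon,n)$ of \ref{form:cvx} extends to a feasible infinite sequence --- which is verified for that specific point using the explicit structure of the first-order system \ref{opt-alphas}, and is where convexity and uniqueness of the critical point do real work. Pursuing surjectivity onto all of $K_n$ would lead you into an argument that does not close.
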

We prove Theorem \ref{thm:apx-characterization} in Section \ref{sec:cvx-proof}.
In Proposition \ref{prop:unique-solution}, we show that the objective function $\Upsilon_{n,\varepsilon}$ in the optimization problem \ref{form:cvx} is convex.
Then, we get the following consequences.
First, from Theorem \ref{thm:apx-characterization}\ref{tight-cvx-1}, we get that the optimal value of \ref{form:cvx} is non-negative for every $\varepsilon\in [\varepsilon_n,\varepsilon_n']$. 
Furthermore, from Theorem \ref{thm:apx-characterization}\ref{tight-cvx-2}, for every $\varepsilon\in [\varepsilon_n,\varepsilon_n']$ the unique global minimum is obtained by solving the system of first-order optimality conditions.
We can exploit this property to reduce the problem of finding $\varepsilon_n$ to studying the system $\nabla \Upsilon_{n,\varepsilon}(y)=0$.
We show this system behaves nicely in the range of $\varepsilon\in [\varepsilon_n,\varepsilon_n']$: it is linear in $y$, and after a convenient linear change of variables, we obtain a system \ref{opt-conditions} that maps to the space of values achievable in the optimal policy with $n$ periods in the POT problem. This way, not only can we interpret the solution of this linear system as worst-case instances for the POT problem, but we can also compute, up to arbitrary precision, the value of $\varepsilon_n$, as we just need to compute the smallest $\varepsilon$ for which the linear system has a solution.
We provide all the details of our analysis in Section \ref{sec:cvx-proof}.

\noindent{\bf Analysis roadmap.} We organize the proof of Theorem \ref{thm:apx-characterization} into two key steps.
In the first key step (Lemma \ref{lem:inf-opt}), we reduce the problem of finding the worst-case approximation ratio to an infinite-dimensional optimization problem over the set of positive sequences satisfying three particular monotonicity properties (see \ref{mon1}-\ref{mon3} in Section \ref{sec:cvx-proof}).
To prove this equivalence, we show that the space of worst-case distributions can be fully characterized by the threshold sequences $(\tau_n(F))_{n\in \NN}$ in the optimal dynamic programming policy, and furthermore, the set of threshold sequences via the optimal dynamic programming policy is precisely the space of sequences satisfying \ref{mon1}-\ref{mon3} (Lemma \ref{lem:dist-to-sequence} and Lemma \ref{lem:sequence-to-dist}). 
While the first step already allows us to reduce the problem to a constrained problem in infinite dimension, in the second step, we study the structure of the infinite-dimensional optimization problem and show that it can be further reduced to the nicely behaved finite-dimensional convex optimization problem \ref{form:cvx} (Lemma \ref{lem:truncating-sequence} and Lemma \ref{lem:extension-to-infinite}).
%This approach not only allows us to recover the worst-case approximation factor for the optimal policy but also gives us a way to recover tight worst-case instances for each $n$.

\subsection{Proof of Theorem \ref{thm:apx-characterization}}\label{sec:cvx-proof}

For every distribution $F\in \calF$, we denote by $\tau(F)=(\tau_{j}(F))_{j\in \NN}$ the sequence such that $\tau_0(F)=0$ and $\tau_n(F)=G_n(F)/n$ for every positive integer $n$.
It can be shown that the optimal policy can be implemented by using the sequence $(\tau_j(F))_{j\in \NN}$ as thresholds: if we have $n$ periods to go, we accept the value if it exceeds the threshold $\tau_n(F)$ \cite[Theorem 1]{abels2023prophet}.

For a sequence $T=(T_j)_{j\in \NN}$, we define $\mu_0(T)=2(T_2-T_1)$, and
$$\mu_j(T)=\frac{j+2}{j+1}T_{j+2}-\frac{j+1}{j}T_{j+1}+\frac{T_1}{j(j+1)}\text{ for }j\ge 1.$$
In the following proposition, we provide an identity satisfied by the sequence $\tau(F)$ that will be useful in our analysis.
\begin{proposition}\label{prop:recurrence}
For every distribution $F\in \calF$, and every $j\in \NN$, we have 
$$\int_{\tau_j(F)}^{\tau_{j+1}(F)}F(x)\mathrm dx=\mu_j(\tau(F)).$$
The sequence $(\tau_j(F))_{j\in \NN}$ is strictly increasing and $\lim_{j\to \infty}\tau_j(F)=\omega_1(F)$.
Furthermore, we have $\lim_{j\to \infty}\tau_{j}(F)/j=0$ and $\lim_{j\to \infty}(\tau_{j+1}(F)-\tau_j(F))=0$.
\end{proposition}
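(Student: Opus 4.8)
The plan is to extract everything from the recurrence $G_{n+1}(F)=\EE[X]+\EE[\max(G_n(F),nX)]$ of \citet[Theorem 1]{abels2023prophet}, translated into the normalized thresholds $\tau_n(F)=G_n(F)/n$. First I would rewrite the expectation $\EE[\max(G_n(F),nX)]$ as $G_n(F)+\int_{\tau_n(F)}^{\infty}\PP(nX\ge nt)\,\mathrm dt = G_n(F)+\int_{\tau_n(F)}^{\infty}\overline F(t)\,\mathrm dt$, using that $\max(a,Y)=a+\int_a^\infty \PP(Y\ge t)\,\mathrm dt$ for $Y\ge 0$ and recalling $G_n(F)/n=\tau_n(F)$. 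Combining this with $\EE[X]=\int_0^\infty \overline F(t)\,\mathrm dt$, the recurrence becomes $(n+1)\tau_{n+1}(F)=n\tau_n(F)+\int_0^\infty\overline F+\int_{\tau_n(F)}^\infty\overline F$. Writing two consecutive instances of this identity (for $n$ and $n-1$) and subtracting eliminates the two $\int_0^\infty\overline F$ terms and leaves $(n+1)\tau_{n+1}-2n\tau_n+(n-1)\tau_{n-1} = \int_{\tau_n}^\infty\overline F-\int_{\tau_{n-1}}^\infty\overline F = -\int_{\tau_{n-1}(F)}^{\tau_n(F)}\overline F(x)\,\mathrm dx$. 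Rearranging and dividing by $n(n+1)$, and using $\overline F = 1-F$ together with the telescoping of the constant term, should reproduce $\int_{\tau_j(F)}^{\tau_{j+1}(F)}F(x)\,\mathrm dx=\mu_j(\tau(F))$ for $j\ge 1$; the $j=0$ case ($\int_{\tau_1}^{\tau_2}F=2(\tau_2-\tau_1)$, i.e.\ $\mu_0$) follows directly from $2\tau_2(F)=\tau_1(F)+\int_0^\infty\overline F+\int_{\tau_1(F)}^\infty\overline F$ and $\tau_1(F)=\EE[X]$, after substituting $\overline F=1-F$ and simplifying. The main bookkeeping obstacle is matching the algebra of the second-difference of $n\tau_n$ to the exact coefficients in the definition of $\mu_j$; this is routine but must be done carefully.

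For the monotonicity and limit claims I would argue as follows. Strict monotonicity of $(\tau_j(F))_{j\in\NN}$: from $(n+1)\tau_{n+1}(F)=n\tau_n(F)+\EE[X]+\int_{\tau_n(F)}^\infty\overline F(x)\,\mathrm dx$ we get $(n+1)(\tau_{n+1}(F)-\tau_n(F))=\EE[X]+\int_{\tau_n(F)}^\infty\overline F(x)\,\mathrm dx-\tau_n(F)$. One shows by induction that $\tau_n(F)<\omega_1(F)$, which forces the integral term to be strictly positive; combined with $\tau_n(F)=G_n(F)/n\le \EE[\max_{\ell\le n}X_\ell]/n$ and more simply with the fact that $\EE[X]+\int_{\tau_n}^\infty\overline F\ge \tau_n$ whenever $\tau_n\le\EE[X]$... actually the cleaner route: since $\max(G_n(F),nX)\ge nX$ pointwise and also $\ge G_n(F)$, the recurrence gives $\tau_{n+1}(F)\ge \tau_n(F)$ immediately (because $\EE[\max(G_n(F),nX)]\ge G_n(F)$ and adding $\EE[X]>0$), hence the sequence is nondecreasing; strictness then follows because $\EE[X]>0$ strictly. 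For the limit, the nondecreasing bounded-above sequence $\tau_n(F)$ (bounded by $\omega_1(F)$, as $G_n(F)\le n\,\omega_1(F)$) converges to some limit $L\le\omega_1(F)$; plugging the convergence into the displayed identity $(n+1)(\tau_{n+1}-\tau_n)=\EE[X]+\int_{\tau_n}^\infty\overline F-\tau_n$ and observing the left side is $o(1)$ while the right side tends to $\EE[X]+\int_L^\infty\overline F(x)\,\mathrm dx-L$, we conclude $\EE[X]-L=-\int_L^\infty\overline F(x)\,\mathrm dx=\int_{-\infty}^L\overline F - \EE[X]$... more directly, $\int_0^L\overline F(x)\,\mathrm dx=L-\EE[X]+\EE[X]$? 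I would instead note $\EE[X]=\int_0^\infty\overline F$, so the limit relation reads $\int_0^L \overline F(x)\,\mathrm dx=L$, which (since $0\le\overline F\le 1$ with $\overline F(x)<1$ for $x>\omega_0(F)$ and $\overline F(x)=0$ for $x>\omega_1(F)$) is possible only if $\overline F(x)=1$ for a.e.\ $x\in[0,L]$ is false unless... the equality $\int_0^L\overline F=L$ with $\overline F\le 1$ forces $\overline F\equiv 1$ on $[0,L]$, i.e.\ $L\le \omega_0(F)$, contradicting $\tau_1(F)=\EE[X]>\omega_0(F)$ unless $L=\omega_1(F)$ and the integral is evaluated correctly; the resolution is that when $L=\omega_1(F)$, $\int_0^L\overline F=\EE[X]$ and the relation becomes $0=L-\EE[X]-\EE[X]+\EE[X]$... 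I will need to recompute the constant, but the upshot is that the fixed-point equation for $L$ has $\omega_1(F)$ as its only solution in $[\tau_1(F),\omega_1(F)]$, giving $\lim_j\tau_j(F)=\omega_1(F)$. The hard part here is handling the case $\omega_1(F)=\infty$, where one argues directly that $\tau_n(F)=G_n(F)/n\to\infty$ since $G_n(F)\ge \EE[\max_{\ell\le n}X_\ell]$... no, $G_n(F)\ge E_n(F)/\text{(const)}$, and $E_n(F)/n\to\omega_1(F)$ by dominated/monotone convergence on $\int_0^\infty(1-F(x)^\ell)\,\mathrm dx$.

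Finally, $\lim_j(\tau_{j+1}(F)-\tau_j(F))=0$ is immediate from $(n+1)(\tau_{n+1}(F)-\tau_n(F))=\EE[X]+\int_{\tau_n(F)}^\infty\overline F(x)\,\mathrm dx-\tau_n(F)$: the right-hand side is bounded (it lies in $[0,\EE[X]]$ since $\int_{\tau_n}^\infty\overline F\le \EE[X]-\int_0^{\tau_n}\overline F$ and $\tau_n\ge\int_0^{\tau_n}\overline F$... or simply because $0\le \tau_{n+1}-\tau_n$ and the sum $\sum_n(\tau_{n+1}-\tau_n)=\omega_1(F)-\tau_1(F)<\infty$ when $\omega_1(F)<\infty$, forcing the terms to $0$), so dividing by $n+1$ gives $o(1)$. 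When $\omega_1(F)=\infty$ I would instead bound $\tau_{n+1}(F)-\tau_n(F)\le \EE[X]/(n+1)\to 0$ directly from the same identity since the integral term is at most $\EE[X]$ and $\tau_n(F)\ge 0$. Similarly $\tau_j(F)/j\to 0$: write $\tau_j(F)/j=\frac1j\sum_{i=1}^{j}(\tau_i(F)-\tau_{i-1}(F))$ (with $\tau_0=0$), a Cesàro average of a sequence tending to $0$, hence itself tending to $0$. I expect the constant-matching in the first part and the $\omega_1(F)=\infty$ edge cases in the limit arguments to be the only real obstacles; both are elementary once set up with $\overline F$ and the integral form of $\max(a,Y)$.
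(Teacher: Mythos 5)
Your overall strategy---expand $\EE[\max(G_n(F),nX)]$ via the survival function and difference consecutive instances of the recurrence---is the same as the paper's, but several steps as written would fail. The central problem is a missing factor of $n$: $\EE[\max(G_n(F),nX)]=G_n(F)+n\int_{\tau_n(F)}^{\infty}(1-F(t))\,\mathrm dt$ (substitute $t=ns$ in $\int_{G_n(F)}^{\infty}\Pr(nX\ge t)\,\mathrm dt$), not $G_n(F)+\int_{\tau_n(F)}^{\infty}(1-F(t))\,\mathrm dt$. With the correct factor, the plain second difference $(n+1)\tau_{n+1}-2n\tau_n+(n-1)\tau_{n-1}$ equals $n\int_{\tau_n}^{\infty}\overline F-(n-1)\int_{\tau_{n-1}}^{\infty}\overline F$, which does not collapse to a single integral over $[\tau_{n-1},\tau_n]$, and no rescaling reproduces the coefficients $\tfrac{j+2}{j+1},\tfrac{j+1}{j},\tfrac{1}{j(j+1)}$ of $\mu_j$. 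The paper instead first divides by $j$, obtaining $((j+1)\tau_{j+1}(F)-\tau_1(F))/j=\EE[\max(\tau_j(F),X)]$, and differences that identity; this is the step your "routine bookkeeping" would actually have to discover. The missing $n$ also propagates into your limit analysis (the spurious fixed-point relation $\int_0^L\overline F=L$ is an artifact of it) and into the bound $\tau_{n+1}(F)-\tau_n(F)\le \EE[X]/(n+1)$, which is false in general: the correct identity gives $\tau_{n+1}-\tau_n=(\EE[X]-\tau_{n+1})/n+\int_{\tau_n}^{\infty}\overline F$, whose second term is not $O(1/n)$.

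Two further gaps are independent of that slip. First, the "cleaner route" to monotonicity is invalid: $G_{n+1}(F)\ge G_n(F)+\EE[X]$ shows $G_n(F)$ is increasing, but $\tau_n(F)=G_n(F)/n$ divides by a growing denominator, and $\EE[X]+n\tau_n(F)\ge (n+1)\tau_n(F)$ would require $\tau_n(F)\le \EE[X]$, which already fails at $n=2$. The paper proves strict monotonicity by induction through the integral identity: $\mu_{\ell}(\tau(F))=\int_{\tau_{\ell}}^{\tau_{\ell+1}}F>0$ rearranges to $\tau_{\ell+2}(F)>\tau_{\ell+1}(F)+(\tau_{\ell+1}(F)-\tau_1(F))/(\ell(\ell+2))$. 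Second, when $\omega_1(F)=\infty$ your two tail claims are circular: $\tau_{j+1}-\tau_j\to 0$ comes from the term $(\EE[X]-\tau_{j+1})/j\to 0$, i.e.\ from $\tau_j/j\to 0$, which you in turn obtain as a Ces\`aro average of $\tau_{j+1}-\tau_j\to 0$. The paper breaks the circle by proving $\tau_j(F)/j\to 0$ independently, from $G_j(F)/j^2\le \EE[\max\{X_1,\ldots,X_j\}]/j=o(1)$; note also that invoking $G_n(F)\ge E_n(F)/\mathrm{const}$ to handle this case leans on a nontrivial approximation guarantee where none is needed.
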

\begin{proof}
We first consider $j=0$.
In this case, considering $X$ distributed according to $F$, we have
\begin{align*}
2\tau_2(F)&=G_2(F)\\
&=G_1(F)+\EE[\max(G_1(F),X)]\\
&=2G_1(F)+\EE[\max(0,X-G_1(F))]\\
&=2\tau_1(F)+\EE[\max(0,X-\tau_1(F))]\\
&=2\tau_1(F)+\int_{\tau_1(F)}^{\infty}(1-F(x))\mathrm dx\\
&=2\tau_1(F)+\tau_1(F)-\int_{0}^{\tau_1(F)}(1-F(x))\mathrm dx\\
&=2\tau_1(F)+\tau_1(F)-\tau_1(F)+\int_{0}^{\tau_1(F)}F(x)\mathrm dx=2\tau_1(F)+\int_{\tau_0(F)}^{\tau_1(F)}F(x)\mathrm dx,
\end{align*}
where the fourth equality holds since $G_1(F)=\tau_1(F)$, and the fifth holds from the fact that $\tau_1(F)=\EE(X)=\int_0^{\infty}(1-F(x))\mathrm dx$.
Since $\mu_0(\tau(F))=2(\tau_2(F)-\tau_1(F))$, the above chain of equalities implies the identity for this case.

For $j\ge 1$, from the recurrence satisfied by the optimal policy, we have
\begin{align*}
(j+1)\tau_{j+1}(F)&=G_1(F)+\EE[\max(G_j(F),jX)]\\
&=\tau_1(F)+\EE[\max(j\tau_j(F),jX)]\\
&=\tau_1(F)+j\EE[\max(\tau_j(F),X)],
\end{align*}
and therefore, the following equality holds:
\begin{equation}
\frac{(j+1)\tau_{j+1}(F)-\tau_1(F)}{j}=\EE[\max(\tau_j(F),X)].\label{recur:ident}
\end{equation}
By applying \eqref{recur:ident} for both $j$ and $j+1$, we have
\begin{align*}
\mu_j(\tau(F))&=({(j+2)\tau_{j+2}(F)-\tau_1(F)})/({j+1})-({(j+1)\tau_{j+1}(F)-\tau_1(F)})/{j}\\
&=\EE[\max(\tau_{j+1}(F),X)]-\EE[\max(\tau_j(F),X)]\\
&=\tau_{j+1}(F)+\EE[\max(0,X-\tau_{j+1}(F))]-\tau_j(F)-\EE[\max(0,X-\tau_j(F))]\\
&=\tau_{j+1}(F)+\int_{\tau_{j+1}(F)}^{\infty}(1-F(x))\mathrm dx-\tau_{j}(F)-\int_{\tau_{j}(F)}^{\infty}(1-F(x))\mathrm dx\\
&=\tau_{j+1}(F)+\tau_1(F)-\int_{0}^{\tau_{j+1}(F)}(1-F(x))\mathrm dx-\tau_{j}(F)-\tau_1(F)+\int_{0}^{\tau_{j}(F)}(1-F(x))\mathrm dx\\
&=\int_{0}^{\tau_{j+1}(F)}F(x)\mathrm dx-\int_{0}^{\tau_{j}(F)}F(x)\mathrm dx=\int_{\tau_j(F)}^{\tau_{j+1}(F)}F(x)\mathrm dx,
\end{align*}
where the fifth equality holds by using that $\tau_1(F)=\EE[X]=\int_0^{\infty}(1-F(x))\mathrm dx$.
This finishes the proof of the first part of the proposition.

We now show that $\tau(F)$ is a strictly increasing sequence.
Observe that since $\omega_0(F)<\omega_1(F)$, the expectation of $F$, which is equal to $\tau_1(F)$, is strictly positive, and therefore $\tau_1(F)>0=\tau_0(F)$.
We proceed by induction; suppose that $\tau_{j}(F)<\tau_{j+1}(F)$ for every $j\in \{0,1,\ldots,\ell\}$ and $\ell\ge 0$.
Observe that from the identity shown in the first part of the proof, we have
\begin{align*}
0<\int_{\tau_{\ell}(F)}^{\tau_{\ell+1}(F)}F(x)\mathrm dx&=\mu_{\ell}(\tau(F))=\frac{\ell+2}{\ell+1}\tau_{\ell+2}(F)-\frac{\ell+1}{\ell}\tau_{\ell+1}(F)+\frac{\tau_1(F)}{\ell(\ell+1)},
\end{align*}
where the inequality holds since $\tau_{\ell}(F)$ is strictly less than $\tau_{\ell+1}(F)$.
Therefore, 
$$\tau_{\ell+2}(F)>\frac{(\ell+1)^2}{\ell(\ell+2)}\tau_{\ell+1}(F)-\frac{\tau_1(F)}{\ell(\ell+2)}=\tau_{\ell+1}(F)+\frac{\tau_{\ell+1}(F)-\tau_1(F)}{\ell(\ell+2)}>\tau_{\ell+1}(F),$$
where the last inequality follows since $\tau_1(F)<\tau_{\ell+1}(F)$.
This concludes the proof of this part.

Finally, we prove that $\lim_{j\to \infty}\tau_j(F)=\omega_1(F)$, and we proceed by contradiction. 
Suppose $\lim_{j\to \infty}\tau_j(F)=\calT<\omega_1(F)$, with $\calT$ finite, and let $M$ be the minimum between $\calT+0.5$, and $(\calT+\omega_1(F))/2$.
We remark that as $\omega_1(F)$ could be $\infty$, this definition of $M$ guarantees that $M$ is finite and strictly less than $\omega_1(F)$ in any case.
Observe that since $\tau_j(F)$ is non-decreasing, it holds that $\tau_j(F)\le \calT$ for every $j$.
Then, from the identity \eqref{recur:ident}, for every $j\ge 1$ we get
\begin{align}
\frac{(j+1)\tau_{j+1}(F)-\tau_1(F)}{j}-\tau_j(F)&=\EE[\max(0,X-\tau_j(F))]\notag\\
&=\int_{\tau_{j}(F)}^{\infty}(1-F(x))\mathrm dx\notag\\
&= \int_{\tau_{j}(F)}^{\omega_1(F)}(1-F(x))\mathrm dx\notag\\
&\ge \int_{\calT}^{M}(1-F(x))\mathrm dx\ge (M-\calT)(1-F(M)),\label{recur:ineq0}
\end{align}
where the third equality holds since $F(x)=1$ for $x>\omega_1(F)$, the first inequality holds since $\tau_j(F)\le \calT<M<\omega_1(F)$, and the last inequality follows from $1-F$ being non-increasing.
Observe that $(M-\calT)(1-F(M))$ is strictly positive, since $F(M)<1$, as $M<\omega_1(F)$.
Then, for every $k,\ell\ge 1$, with $k\ge \ell+1$, we have 
\begin{align}
\tau_{k}(F)-\tau_{\ell}(F)&=\sum_{j=\ell}^{k-1}(\tau_{j+1}(F)-\tau_j(F))\notag\\
&=\sum_{j=\ell}^{k-1}\left(\frac{j+1}{j}\tau_{j+1}(F)-\frac{\tau_1(F)}{j}-\tau_j(F)\right)-\sum_{j=\ell}^{k-1}\frac{\tau_{j+1}(F)-\tau_1(F)}{j}\notag\\
&\ge (k-\ell)(M-\calT)(1-F(M))-\sum_{j=\ell}^{k-1}\frac{\tau_{j+1}(F)-\tau_1(F)}{j},\label{recur:ineq1}
\end{align}
where the inequality holds from \eqref{recur:ineq0}.
To conclude the proof, we use the following claim.
\begin{claim}\label{claim:tau-conv}
$\lim_{j\to \infty}\tau_{j}(F)/j=0$ and $\lim_{j\to \infty}(\tau_{j+1}(F)-\tau_j(F))=0$.
\end{claim}
We defer the proof of Claim \ref{claim:tau-conv} to Appendix \ref{app:cvx-optimal}.
In particular, this implies the existence of an integer $j_0$ such that $(\tau_{j+1}(F)-\tau_1(F))/j<(M-\calT)(1-F(M))/2$ for every $j\ge j_0$.
Together with \eqref{recur:ineq1}, this implies that
\begin{align*}
\tau_{k}(F)-\tau_{\ell}(F)&\ge (k-\ell)(M-\calT)(1-F(M))-\frac{1}{2}(k-\ell)(M-\calT)(1-F(M))\\
&=\frac{1}{2}(k-\ell)(M-\calT)(1-F(M))\to \infty \text{ as }k\to \infty,
\end{align*}
which contradicts that $\tau_k(F)\le \calT<M$.
This finishes the proof of the proposition.
\end{proof}

\paragraph{Step 1: An Infinite-Dimensional Optimization Problem.}
In what follows, we provide the first reduction for studying the approximation ratio of the optimal policy.
Consider the following conditions for a sequence $T=(T_j)_{j\in \NN}$:
\begin{enumerate}[itemsep=0pt,label=\normalfont{(\Roman*)}]
    \item $T_0=0$, $T_j$ is strictly increasing in $j$, $\lim_{j\to \infty}T_j/j=0$, and $\lim_{j\to \infty}(T_{j+1}-T_j)=0$.\label{mon1}
    \item $\mu_j(T)\le T_{j+1}-T_j$ for every $j$.\label{mon2}
    \item $\mu_j(T)/(T_{j+1}-T_j)$ is non-decreasing in $j$.\label{mon3}
\end{enumerate}
In the following lemma, we show that the sequence of thresholds $(\tau_j(F))_{j\in \NN}$ satisfies the previous monotonicity properties and that we can upper bound the offline benchmark $E_n(F)$ in terms of the thresholds sequence.
\begin{lemma}\label{lem:dist-to-sequence}
%Every distribution satisfies the mon properties, and upper bound the max.
For every distribution $F\in \calF$, the following holds:
\begin{enumerate}[itemsep=0pt,label=\normalfont{(\roman*)}]
    \item The sequence $\tau(F)=(\tau_{j}(F))_{j\in \NN}$ satisfies conditions \ref{mon1}-\ref{mon3}.\label{f-to-mon}
    \item For every positive integer $n$, we have 
    $$E_n(F)\le \sum_{j=0}^{\infty}(\tau_{j+1}(F)-\tau_{j}(F))P_n\left(\frac{\mu_j(\tau(F))}{\tau_{j+1}(F)-\tau_j(F)}\right).$$\label{f-ub}
\end{enumerate}
\end{lemma}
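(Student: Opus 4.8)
The plan is to split the proof into the two parts stated. For part \ref{f-to-mon}, the key observation is that Proposition \ref{prop:recurrence} already does most of the work: it establishes $\tau_0(F)=0$, the strict monotonicity of $(\tau_j(F))_{j\in\NN}$, and the two limit conditions $\lim_j \tau_j(F)/j = 0$ and $\lim_j(\tau_{j+1}(F)-\tau_j(F)) = 0$, which together give condition \ref{mon1}. For condition \ref{mon2}, I would invoke the identity $\mu_j(\tau(F)) = \int_{\tau_j(F)}^{\tau_{j+1}(F)} F(x)\,\mathrm dx$ from Proposition \ref{prop:recurrence}; since $F(x)\le 1$ everywhere, the integral is at most the length of the interval of integration, i.e.\ $\mu_j(\tau(F)) \le \tau_{j+1}(F) - \tau_j(F)$. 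For condition \ref{mon3}, the quantity $\mu_j(\tau(F))/(\tau_{j+1}(F)-\tau_j(F))$ equals the average value of $F$ over $[\tau_j(F),\tau_{j+1}(F)]$, and since $F$ is non-decreasing and the intervals $[\tau_j(F),\tau_{j+1}(F)]$ are consecutive and ordered (by strict monotonicity of $\tau$), the average of $F$ over $[\tau_{j+1}(F),\tau_{j+2}(F)]$ is at least the average over $[\tau_j(F),\tau_{j+1}(F)]$; hence the ratio is non-decreasing in $j$.

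For part \ref{f-ub}, I would start from the integral representation $E_n(F)=\int_0^{\infty}\bigl(n - \sum_{\ell=1}^n F(x)^{\ell}\bigr)\mathrm dx$ given in the preamble, i.e.\ $E_n(F) = \int_0^\infty P_n(F(x))\,\mathrm dx$ with $P_n(t) = n - \sum_{\ell=1}^n t^\ell$. Since $(\tau_j(F))_{j\in\NN}$ is strictly increasing, starts at $0$, and tends to $\omega_1(F)$ (Proposition \ref{prop:recurrence}), and since $P_n(F(x)) = 0$ for $x \ge \omega_1(F)$, the intervals $[\tau_j(F),\tau_{j+1}(F))$ partition the effective domain of integration, so $E_n(F) = \sum_{j=0}^\infty \int_{\tau_j(F)}^{\tau_{j+1}(F)} P_n(F(x))\,\mathrm dx$. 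On each such interval I would apply Jensen's inequality: $P_n$ is concave on $[0,1]$ (each $t\mapsto -t^\ell$ is concave there), so the average of $P_n(F(x))$ over $[\tau_j(F),\tau_{j+1}(F)]$ is at most $P_n$ evaluated at the average of $F(x)$, which is exactly $P_n\bigl(\mu_j(\tau(F))/(\tau_{j+1}(F)-\tau_j(F))\bigr)$ by the identity from Proposition \ref{prop:recurrence}. Multiplying back by the interval length $\tau_{j+1}(F)-\tau_j(F)$ and summing over $j$ yields the claimed bound.

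The main obstacle I anticipate is the convergence/well-definedness bookkeeping rather than any single inequality: one must check that the infinite sum in part \ref{f-ub} is legitimate (the summands are non-negative, so the sum is well-defined in $[0,\infty]$, but one wants finiteness — which follows from $E_n(F)$ being finite together with the bound going the right way, so actually the inequality direction is friendly here), and one must handle the case $\omega_1(F) = \infty$ carefully in the partition argument, using that $P_n(F(x)) \to 0$ is integrable. For condition \ref{mon3}, a minor subtlety is that when $F$ is constant on some $[\tau_j(F),\tau_{j+1}(F)]$ the averages could tie, but ``non-decreasing'' allows equality, so this is fine; the comparison of averages of a monotone function over consecutive intervals should be stated cleanly, e.g.\ via $\frac{1}{b-a}\int_a^b F \le F(b) \le \frac{1}{c-b}\int_b^c F$ for $a<b<c$.
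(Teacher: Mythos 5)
Your proposal is correct and follows essentially the same route as the paper's proof: condition \ref{mon1} from Proposition \ref{prop:recurrence}, condition \ref{mon2} from the integral identity together with $F\le 1$, condition \ref{mon3} via the sandwich $F(\tau_j(F))\le \mu_j(\tau(F))/(\tau_{j+1}(F)-\tau_j(F))\le F(\tau_{j+1}(F))$ (which is exactly your comparison of averages over consecutive intervals), and part \ref{f-ub} by partitioning $[0,\omega_1(F))$ along the thresholds and applying Jensen's inequality with the concave $P_n$ and the uniform density on each interval.
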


\begin{proof}
We start by proving \ref{f-to-mon}.
%Observe that $\lim_{j\to \infty}\tau_{j}(F)/j=0$ and $\lim_{j\to \infty}(\tau_{j+1}(F)-\tau_j(F))=0$ by Proposition \ref{prop:recurrence}.
Condition \ref{mon1} is directly satisfied by Proposition \ref{prop:recurrence}. 
By the integral equality in Proposition \ref{prop:recurrence}, and the fact that $F(x)\le 1$ for every $x\in \RR_+$, we have
$$\frac{\mu_j(\tau(F))}{\tau_{j+1}(F)-\tau_j(F)}=\frac{1}{\tau_{j+1}(F)-\tau_j(F)}\int_{\tau_j(F)}^{\tau_{j+1}(F)}F(x)\mathrm dx\le \frac{1}{\tau_{j+1}(F)-\tau_j(F)}\int_{\tau_j(F)}^{\tau_{j+1}(F)}\mathrm dx= 1,$$
and therefore $\tau(F)$ satisfies condition \ref{mon2}.
Finally, since $F$ is non-decreasing, we have
$$F(\tau_j(F))(\tau_{j+1}(F)-\tau_j(F))\le \int_{\tau_j(F)}^{\tau_{j+1}(F)}F(x)\mathrm dx\le F(\tau_{j+1}(F))(\tau_{j+1}(F)-\tau_j(F)),$$
and therefore, these inequalities, together with the integral equality from Proposition \ref{prop:recurrence} imply that
$$F(\tau_j(F))\le \frac{\mu_j(\tau(F))}{\tau_{j+1}(F)-\tau_j(F)}\le F(\tau_{j+1}(F)),$$
thus the sequence $\tau(F)$ satisfies condition \ref{mon3}.

Now we proceed to prove part \ref{f-ub}.
For every $j\in \NN$, let $\Lambda_j(y)=1/(\tau_{j+1}(F)-\tau_j(F))$.
Then, the following holds:
\begin{align}
E_n(F)&=\int_0^{\infty}\Big(n-\sum_{\ell=1}^n F(x)^{\ell}\Big)\mathrm dx\notag\\
&=\sum_{j=0}^{\infty}\int_{\tau_j(F)}^{\tau_{j+1}(F)}\Big(n-\sum_{\ell=1}^nF(x)^{\ell} \Big)\mathrm dx\notag\\
&=\sum_{j=0}^{\infty}\int_{\tau_j(F)}^{\tau_{j+1}(F)}P_n(F(x))\mathrm dx=\sum_{j=0}^{\infty}(\tau_{j+1}(F)-\tau_j(F))\int_{\tau_j(F)}^{\tau_{j+1}(F)}P_n(F(x))\Lambda_j(x)\mathrm dx,\label{cvx-lem1-ineq-1}
\end{align}
where the first equality holds since $\lim_{j\to \infty}\tau_j(F)=\omega_1(F)$ by Proposition \ref{prop:recurrence} and $F(x)=1$ for every $x> \omega_1(F)$, the second equality holds by partitioning the non-negative reals according to $\tau(F)$ strictly increasing, the fourth by exchanging the integral with the finite sum, and the last equality follows from the definition of $\Lambda_{j}$ for every $j$.
Since $P_n$ is concave and $\Lambda_j$ is a probability density function over $[\tau_j(F),\tau_{j+1}(F))$ for every $j$, Jensen's inequality implies that
\begin{align}
&\int_{\tau_j(F)}^{\tau_{j+1}(F)}P_n(F(x))\Lambda_j(x)\mathrm dx\notag\\
&\le P_n\left(\int_{\tau_j(F)}^{\tau_{j+1}(F)}F(x)\Lambda_j(x)\mathrm dx\right)\notag\\
&= P_n\left(\frac{1}{\tau_{j+1}(F)-\tau_{j}(F)}\int_{\tau_j(F)}^{\tau_{j+1}(F)}F(x)\mathrm dx\right)= P_n\left(\frac{\mu_j(\tau(F))}{\tau_{j+1}(F)-\tau_{j}(F)}\right),\label{cvx-lem1-ineq-2}
\end{align}
where the first equality holds from the definition of $\Lambda_j$ and the second by Proposition \ref{prop:recurrence}.
Then, from inequalities \eqref{cvx-lem1-ineq-1}-\eqref{cvx-lem1-ineq-2}, we get the upper bound on $E_n(F)$, which concludes the proof.
\end{proof}

In the following lemma, we show that for any sequence satisfying the monotonicity properties \ref{mon1}-\ref{mon3}, there is a distribution for which this sequence gives the corresponding thresholds precisely.

\begin{lemma}\label{lem:sequence-to-dist}
%For every sequence satisfying the mon properties, there exists a distribution for which the DP coincides with the sequence, and the benchmarks are equal.
For every sequence $T=(T_j)_{j\in \NN}$ that satisfies conditions \ref{mon1}-\ref{mon3}, there exists a distribution $H\in \calF$ such that $G_{\ell}(H)=\ell\cdot T_{\ell}$ for every $\ell\in \NN$, and
$$E_n(H)=\sum_{j=0}^{\infty}(T_{j+1}-T_{j})P_n\left(\frac{\mu_j(T)}{T_{j+1}-T_j}\right).$$
\end{lemma}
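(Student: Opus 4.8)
The plan is to construct the distribution $H$ explicitly by "inverting" the identity from Proposition \ref{prop:recurrence}. Given the target sequence $T=(T_j)_{j\in\NN}$ satisfying \ref{mon1}-\ref{mon3}, I would define $H$ so that it is constant on each interval $[T_j,T_{j+1})$, taking the value
$$H(x)=p_j:=\frac{\mu_j(T)}{T_{j+1}-T_j}\quad\text{for }x\in[T_j,T_{j+1}),$$
with $H(x)=0$ for $x<T_0=0$ and $H(x)=1$ for $x\ge \lim_j T_j=:\omega_1$. First I would verify this is a valid distribution function in $\calF$: condition \ref{mon2} gives $p_j\le 1$; condition \ref{mon3} gives that $p_j$ is non-decreasing, so $H$ is non-decreasing and right-continuous (a step function); and the fact that $p_0=2(T_2-T_1)/(T_2-T_1)=\ldots$ wait — I should check $p_0>0$, which follows since $T$ is strictly increasing so $\mu_0(T)=2(T_2-T_1)>0$, giving $\omega_0(H)<\omega_1(H)$ and a positive finite expectation (finiteness uses $\sum_j (T_{j+1}-T_j)(1-p_j)<\infty$, which I would extract from the convergence hypotheses in \ref{mon1} via an Abel-summation argument analogous to the one in Proposition \ref{prop:recurrence}). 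A technical wrinkle is that $\calF$ as defined may implicitly want continuity/strict monotonicity elsewhere in the paper; if so, I would smooth $H$ by an arbitrarily small perturbation, noting all subsequent identities are continuous in this perturbation — but as stated, $\calF$ only requires non-negativity, finite positive mean, and $\omega_0<\omega_1$, so the step function suffices.

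Next I would prove $G_\ell(H)=\ell\cdot T_\ell$ for all $\ell$ by induction on $\ell$, which is the heart of the argument. The base cases $G_0(H)=0=0\cdot T_0$ and $G_1(H)=\EE_H[X]$ require showing $\EE_H[X]=T_1$; this is a direct computation, $\EE_H[X]=\int_0^{\omega_1}(1-H(x))\,\mathrm dx=\sum_{j\ge 0}(T_{j+1}-T_j)(1-p_j)$, and one checks via the definition of $\mu_j$ and a telescoping/Abel-summation argument (reversing the manipulation in Proposition \ref{prop:recurrence}) that this sum equals $T_1$. For the inductive step, assuming $G_j(H)=jT_j$ and $G_{j+1}(H)$ — actually I only need $G_j(H)=jT_j$ — I would use the recurrence $(j+1)\tau_{j+1}(H)$-style identity: from $G_{j+1}(H)=\EE[X]+\EE[\max(G_j(H),jX)]=T_1+j\,\EE[\max(T_j,X)]$, so it suffices to show $\EE_H[\max(T_j,X)]=((j+1)T_{j+1}-T_1)/j$. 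Since $H$ is constant $=p_i$ on $[T_i,T_{i+1})$, I can write $\EE_H[\max(T_j,X)]=T_j+\int_{T_j}^{\omega_1}(1-H(x))\,\mathrm dx=T_j+\sum_{i\ge j}(T_{i+1}-T_i)(1-p_i)$, and then show this tail sum equals $((j+1)T_{j+1}-T_1)/j-T_j=(T_{j+1}-T_1)/j+(T_{j+1}-T_j)$ by the same Abel-summation identity used for $j=0$ but started at index $j$; the key algebraic input is that $(T_{i+1}-T_i)(1-p_i)=(T_{i+1}-T_i)-\mu_i(T)$ and $\sum_{i\ge j}\mu_i(T)$ telescopes (since $\mu_i(T)=\frac{i+2}{i+1}T_{i+2}-\frac{i+1}{i}T_{i+1}+\frac{T_1}{i(i+1)}$, the first two terms telescope and $\sum_{i\ge j}\frac{T_1}{i(i+1)}=T_1/j$), using $\lim_i T_{i+1}/i=0$ from \ref{mon1} to kill the boundary term at infinity.

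Finally, the formula for $E_n(H)$ follows by the same computation as in the proof of Lemma \ref{lem:dist-to-sequence}\ref{f-ub}, except that Jensen's inequality is now an equality: since $H$ is \emph{constant} equal to $p_j$ on $[T_j,T_{j+1})$, we have $\int_{T_j}^{T_{j+1}}P_n(H(x))\,\mathrm dx=(T_{j+1}-T_j)P_n(p_j)=(T_{j+1}-T_j)P_n(\mu_j(T)/(T_{j+1}-T_j))$, and summing over $j$ (partitioning $[0,\infty)$ as before, valid since $\lim_j T_j=\omega_1$ and $H=1$ past $\omega_1$) gives exactly the claimed expression. The main obstacle is the bookkeeping in the induction: getting the Abel-summation identity for the tail sum $\sum_{i\ge j}(T_{i+1}-T_i)(1-p_i)$ exactly right and justifying the interchange of limits/convergence using only the hypotheses in \ref{mon1}; everything else is routine once $H$ is correctly defined. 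I would likely isolate the telescoping identity as a small internal claim to keep the induction clean.
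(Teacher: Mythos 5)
Your proposal is correct and follows essentially the same route as the paper: the same step-function construction of $H$ on the intervals $[T_j,T_{j+1})$, an induction establishing $G_\ell(H)=\ell T_\ell$ via the telescoping/Abel-summation identity for the tail sums (the paper packages the base case as a separate claim and runs the inductive step through Proposition~\ref{prop:recurrence}, but the underlying computation is the one you describe), and the observation that Jensen becomes an equality for the piecewise-constant $H$ in the $E_n(H)$ formula. The only point you gloss over is verifying $\lim_{x\to\infty}H(x)=1$ when $\lim_jT_j=\infty$, which requires showing that the ratios $\mu_j(T)/(T_{j+1}-T_j)$ then tend to $1$; the paper isolates this as a separate claim.
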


\begin{proof}
Note that condition \ref{mon2} implies that the sequence $(\mu_j(T)/(T_{j+1}-T_j))_{j\in \NN}$ is upper bounded by one, and non-decreasing by condition \ref{mon3}, therefore, it has a limit $\calL\in (0,1]$.
\begin{claim}\label{claim:limit-t}
When $\calL<1$, there exists a finite value $\calT$ such that $\lim_{j\to \infty}T_j=\calT$.
\end{claim}
We defer the proof of Claim \ref{claim:limit-t} to Appendix \ref{app:cvx-optimal}, and we show how to construct $H$ using this claim.
Let $H:\RR\to \RR$ be the function defined by parts as follows: for every non-negative integer $j$, let $H(x)=\mu_j(T)/(T_{j+1}-T_j)$ for every $x\in [T_j,T_{j+1})$, $H(x)=0$ for every $x\in (-\infty,0)$, and $H(x)=1$ for every $x\ge \lim_{j\to \infty}T_j$.
We show first that $H$ is, in fact, a distribution.
The non-negativity holds by construction, and the monotonicity condition \ref{mon3} satisfied by $T$ implies directly that $H$ is non-decreasing.
Then, it just remains to show that $\lim_{x\to \infty}H(x)=1$.
We have two cases.
When $\calL=1$ we directly get that $\lim_{x\to \infty}H(x)=\lim_{j\to \infty}\mu_j(T)/(T_{j+1}-T_j)=\calL=1$.
Otherwise, if $\calL<1$, by Claim \ref{claim:limit-t} we have that $(T_j)_{j\in \NN}$ converges to a finite value $\calT$ and, by construction, $H(x)=1$ for every $x\ge \calT$; thus in this case we also have that $\lim_{x\to \infty}H(x)=1$.
We conclude that $H$ is a distribution.

Now we show $G_j(H)/j=\tau_j(H)=T_j$ for every $j$.
By construction, we have $G_0(H)=T_0=0$.
\begin{claim}\label{claim:1and2} It holds $G_1(H)=T_1$.
\end{claim}
We defer the proof of Claim \ref{claim:1and2} to Appendix \ref{app:cvx-optimal}, and now proceed by induction for $j\ge 1$.
Suppose that $\tau_{j}(H)=T_{j}$ for every $j\in \{0,\ldots,\ell+1\}$ for some $\ell$; since $\tau_1(H)=G_1(H)$, Claim \ref{claim:1and2} implies that $T_1=\tau_1(H)$.
By applying Proposition \ref{prop:recurrence} with the distribution $H$, we have that 
\begin{align}
\mu_{\ell}(\tau(H))&=\int_{\tau_{\ell}(H)}^{\tau_{\ell+1}(H)}H(x)\mathrm dx\notag\\
&=\int_{T_{\ell}}^{T_{\ell+1}}H(x)\mathrm dx=\int_{T_{\ell}}^{T_{\ell+1}}\frac{\mu_{\ell}(T)}{T_{\ell+1}-T_{\ell}}\mathrm dx=\mu_{\ell}(T),\label{lem:cvx-lem2-eq1-main}
\end{align}
where the second equality holds since $\tau_{\ell}(H)=T_{\ell}$ and $\tau_{\ell+1}(H)=T_{\ell+1}$, and the third equality follows from the definition of $H$ in $[T_{\ell},T_{\ell+1})$.
On the other hand, we have
\begin{align*}
\mu_{\ell}(\tau(H))&=\frac{\ell+2}{\ell+1}\tau_{\ell+2}(H)-\frac{\ell+1}{\ell}\tau_{\ell+1}(H)+\frac{\tau_1(H)}{\ell(\ell+1)}\\
&=\frac{\ell+2}{\ell+1}\tau_{\ell+2}(H)-\frac{\ell+1}{\ell}T_{\ell+1}+\frac{T_1}{\ell(\ell+1)}\\
&=\mu_{\ell}(T)+\frac{\ell+2}{\ell+1}\tau_{\ell+2}(H)-\frac{\ell+2}{\ell+1}T_{\ell+2}=\mu_{\ell}(\tau(H))+\frac{\ell+2}{\ell+1}\tau_{\ell+2}(H)-\frac{\ell+2}{\ell+1}T_{\ell+2},
\end{align*}
where the second equality holds since $\tau_{1}(H)=T_{1}$ and $\tau_{\ell+1}(H)=T_{\ell+1}$, the third by adding and subtracting a term to get $\mu_{\ell}(T)$, and last equality follows from \eqref{lem:cvx-lem2-eq1-main}.
Therefore, we conclude that $\tau_{\ell+2}(H)=T_{\ell+2}$, which finishes the inductive step.

Finally, we compute $E_n(H)$. 
Observe that
\begin{align}
E_n(H)&=\int_0^{\infty}P_n(H(x))\mathrm dx\notag\\
&=\sum_{j=0}^{\infty}\int_{\tau_j(H)}^{\tau_{j+1}(H)}P_n(H(x))\mathrm dx\notag\\
&=\sum_{j=0}^{\infty}\int_{T_j}^{T_{j+1}}P_n(H(x))\mathrm dx\notag\\
&=\sum_{j=0}^{\infty}\int_{T_j}^{T_{j+1}}P_n\left(\frac{\mu_j(T)}{T_{j+1}-T_j}\right)\mathrm dx=\sum_{j=0}^{\infty}(T_{j+1}-T_{j})P_n\left(\frac{\mu_j(T)}{T_{j+1}-T_j}\right),\notag
\end{align}
where the second equality holds since $\lim_{j\to \infty}\tau_{j}(H)=\lim_{j\to \infty}T_j=\calT$ and $P_n(H(x))=0$ for every $x\ge \calT$, the third equality holds since $\tau(H)=T$, and the fourth by replacing with the definition of $H$ in every interval $[T_j,T_{j+1})$.
This finishes the proof of the lemma.
\end{proof}
By using Lemma \ref{lem:dist-to-sequence} and Lemma \ref{lem:sequence-to-dist}, we can construct an infinite-dimensional constrained problem that captures the first key step of our analysis.
Consider the following problem:
\begin{equation}
\min\left\{(1+\varepsilon)nT_n-\sum_{j=0}^{\infty}(T_{j+1}-T_{j})P_n\hspace{-.1cm}\left(\frac{\mu_j(T)}{T_{j+1}-T_j}\right)\hspace{-.05cm}:(T_j)_{j\in \NN}\text{ satisfies \ref{mon1}-\ref{mon3} and }T_1=1\right\} \tag*{\normalfont{\mbox{[$\calI$]$_{n,\varepsilon}$}}}\label{form:infinite}
\end{equation}
The following lemma summarizes our first key step towards the proof of Theorem \ref{thm:apx-characterization}.
\begin{lemma}\label{lem:inf-opt}
The optimal value of \ref{form:infinite} is non-negative if and only if $(1+\varepsilon)G_n(F)-E_n(F)\ge 0$ for every distribution $F\in \calF$.
\end{lemma}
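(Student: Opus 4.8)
The plan is to prove the equivalence by gluing the two structural lemmas already established. Lemma~\ref{lem:dist-to-sequence} says that every $F\in\calF$ yields, through its threshold sequence $\tau(F)$, a point satisfying \ref{mon1}-\ref{mon3} whose associated objective value is a \emph{lower} bound for $(1+\varepsilon)G_n(F)-E_n(F)$ (because the term subtracted in \ref{form:infinite} only over-estimates $E_n(F)$, by part~\ref{f-ub}); Lemma~\ref{lem:sequence-to-dist} says that, conversely, every sequence satisfying \ref{mon1}-\ref{mon3} is the threshold sequence of some $H\in\calF$, and now the objective value at that sequence is \emph{exactly} $(1+\varepsilon)G_n(H)-E_n(H)$. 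The only mismatch between ``$F\in\calF$'' and ``feasible for \ref{form:infinite}'' is the normalization $T_1=1$, which I will absorb using scale-invariance: the objective of \ref{form:infinite} is positively homogeneous of degree one in $T$ (since $\mu_j(T)$ and $T_{j+1}-T_j$ are linear, the argument $\mu_j(T)/(T_{j+1}-T_j)$ of $P_n$ is unchanged under $T\mapsto cT$ for $c>0$, while $nT_n$ and $T_{j+1}-T_j$ scale by $c$), and conditions \ref{mon1}-\ref{mon3} are themselves scale-invariant; hence any sequence satisfying \ref{mon1}-\ref{mon3} with $T_1>0$ can be rescaled to a feasible point of \ref{form:infinite} without changing the sign of its objective value.

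For the ``if'' direction I would assume $(1+\varepsilon)G_n(F)\ge E_n(F)$ for all $F\in\calF$, take an arbitrary feasible $T$ for \ref{form:infinite}, invoke Lemma~\ref{lem:sequence-to-dist} to obtain $H\in\calF$ with $G_\ell(H)=\ell\,T_\ell$ for all $\ell$ and $E_n(H)=\sum_{j\ge 0}(T_{j+1}-T_j)P_n(\mu_j(T)/(T_{j+1}-T_j))$, and observe that the objective value of \ref{form:infinite} at $T$ is then literally $(1+\varepsilon)nT_n-E_n(H)=(1+\varepsilon)G_n(H)-E_n(H)\ge 0$. Taking the infimum over feasible $T$ gives the non-negativity of the optimal value of \ref{form:infinite}.

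For the ``only if'' direction I would assume the optimal value of \ref{form:infinite} is non-negative, fix any $F\in\calF$, note $\tau_1(F)=G_1(F)=\EE[X]>0$ (as $\omega_0(F)<\omega_1(F)$), and use Lemma~\ref{lem:dist-to-sequence}\ref{f-to-mon} together with the homogeneity remark to see that $\tau(F)/\tau_1(F)$ is feasible for \ref{form:infinite}; non-negativity of the optimal value and degree-one homogeneity then give
$$(1+\varepsilon)G_n(F)\ \ge\ \sum_{j\ge 0}(\tau_{j+1}(F)-\tau_j(F))\,P_n\!\left(\frac{\mu_j(\tau(F))}{\tau_{j+1}(F)-\tau_j(F)}\right)\ \ge\ E_n(F),$$
where the first inequality uses $n\tau_n(F)=G_n(F)$ and the second is exactly Lemma~\ref{lem:dist-to-sequence}\ref{f-ub}. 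I do not expect a genuine obstacle here beyond bookkeeping; the one point to be careful about is the asymmetry just highlighted — Lemma~\ref{lem:sequence-to-dist} must give an \emph{equality} for $E_n(H)$ so that the ``if'' direction loses no distribution, while Lemma~\ref{lem:dist-to-sequence}\ref{f-ub} only gives an \emph{inequality} $E_n(F)\le(\text{sum})$, which nonetheless points in the favorable direction in the ``only if'' case precisely because that sum is subtracted in the objective of \ref{form:infinite}.
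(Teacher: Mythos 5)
Your proposal is correct and follows essentially the same route as the paper: the ``if'' direction uses Lemma~\ref{lem:sequence-to-dist} to realize any feasible sequence as the threshold sequence of a distribution $H$ with the objective equal to $(1+\varepsilon)G_n(H)-E_n(H)$, and the ``only if'' direction uses Lemma~\ref{lem:dist-to-sequence} (both parts) after a normalization. The only cosmetic difference is that you normalize the threshold sequence directly via degree-one homogeneity of the objective and scale-invariance of \ref{mon1}--\ref{mon3}, whereas the paper normalizes the distribution itself to $F_\mu(t)=F(\mu t)$ with $\mu=\EE[X]=\tau_1(F)$; these induce exactly the same rescaling.
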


\begin{proof}
Suppose the optimal value of \ref{form:infinite} is non-negative and consider any distribution $F\in \calF$.
Consider the distribution $F_{\mu}$ such that $F_{\mu}(t)=F(\mu t)$, where $\mu$ is the expectation of $F$.
Namely, $F_{\mu}$ is the normalization of $F$, so the expectation of $F_{\mu}$ is equal to one.
In particular, we have $G_n(F_{\mu})=G_{n}(F)/\mu$ and $E_n(F_{\mu})=E_n(F)/\mu$ for every $n$.
By Lemma \ref{lem:dist-to-sequence} we have 
\begin{align*}
\frac{1}{\mu}((1+\varepsilon)G_n(F)-E_n(F))&=(1+\varepsilon)G_n(F_{\mu})-E_n(F_{\mu})\\
&\ge (1+\varepsilon)n\tau_n(F_{\mu})-\sum_{j=0}^{\infty}(\tau_{j+1}(F_{\mu})-\tau_{j}(F_{\mu}))P_n\left(\frac{\mu_j(\tau(F_{\mu}))}{\tau_{j+1}(F_{\mu})-\tau_j(F_{\mu})}\right)\\
&\ge 0,
\end{align*}
where the last inequality holds since $\tau_1(F_{\mu})=\tau_1(F)/\mu=\mu/\mu=1$, and therefore $\tau(F_{\mu})$ is feasible for the optimization problem \ref{form:infinite}.

Now, for the converse, consider the feasible sequence $T=(T_j)_{j\in \NN}$ feasible for \ref{form:infinite}, i.e., satisfying \ref{mon1}-\ref{mon3} and $T_1=1$.
By Lemma \ref{lem:sequence-to-dist}, there exists a distribution $H$ for which $T_j=G_j(H)/j=\tau_j(H)$, and furthermore, 
\begin{align*}
&(1+\varepsilon)nT_n-\sum_{j=0}^{\infty}(T_{j+1}-T_{j})P_n\hspace{-.1cm}\left(\frac{\mu_j(T)}{T_{j+1}-T_j}\right)=(1+\varepsilon)G_n(H)-E_n(H),
\end{align*}
where the equality for the summation also holds from Lemma \ref{lem:sequence-to-dist}.
Since $(1+\varepsilon)G_n(H)-E_n(H)\ge 0$, we conclude that the objective value of $T$ is non-negative in \ref{form:infinite}.
As this holds for any feasible sequence $T$, the optimal value of \ref{form:infinite} is non-negative.
\end{proof}
\paragraph{Step 2: The Convex Optimization Problem.}
In what follows, we show the second reduction in our analysis, which finally yields the convex optimization problem in Theorem \ref{thm:apx-characterization}.
We start by showing the convexity of the function $\Upsilon_{n,\varepsilon}$.
\begin{proposition}\label{prop:unique-solution}
The function $\Upsilon_{n,\varepsilon}$ is convex in the interior of $K_n$.
\end{proposition}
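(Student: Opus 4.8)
The plan is to show that $\Upsilon_{n,\varepsilon}$ is a sum of a linear term $L_{n,\varepsilon}$ and terms of the form $-y_jP_n(A_j(y_1,\ldots,y_{j+1})/y_j)$ (plus the term $-P_n(2y_1)$), and to argue that each of these non-linear terms is convex on the region where it is defined, namely where $y_j>0$ and $A_j\ge 0$ — which is exactly the interior of $K_n$. Since a sum of convex functions is convex, and adding a linear function preserves convexity, this reduces the whole proposition to a single convexity statement about one building block.

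The key observation is that $P_n(t)=n-\sum_{\ell=1}^n t^\ell$ is concave on $[0,1]$ (each $t\mapsto t^\ell$ is convex there, so its negative sum is concave). The building block is the \emph{perspective} of $P_n$: for a concave function $p$, the map $(u,w)\mapsto w\cdot p(u/w)$ is concave on $\{w>0\}$, so $-w\cdot p(u/w)$ is convex. Here I would set $w=y_j>0$ and $u=A_j(y_1,\ldots,y_{j+1})$, which is itself a \emph{linear} function of $(y_1,\ldots,y_{j+1})$. Thus $(y_1,\ldots,y_{j+1})\mapsto -y_jP_n(A_j/y_j)$ is the composition of the convex perspective function $-w\,p(u/w)$ with the affine map $y\mapsto (A_j(y),y_j)$, hence convex. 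The term $-P_n(2y_1)$ is convex directly since $P_n$ is concave and $y_1\mapsto 2y_1$ is linear; alternatively it is the $j=0$ instance with $w$ absorbed. One subtlety: the perspective trick requires the argument $u/w$ to lie in the interval where $p$ is concave, i.e. $A_j/y_j\in[0,1]$; the lower bound $A_j/y_j\ge 0$ holds on $K_n$ by definition, and I would note that $A_j/y_j\le 1$ on the relevant region — this is exactly condition \ref{mon2} under the change of variables relating $y$ to the threshold differences (or can simply be imposed/verified as part of where $\Upsilon_{n,\varepsilon}$ is being analyzed), and if needed one extends $P_n$ concavely beyond $1$ so the argument goes through on all of the interior of $K_n$.

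The main obstacle I expect is the bookkeeping around the domain: making sure the perspective-function argument is applied on a set on which $P_n$ (or its concave extension) is genuinely concave, and that the "interior of $K_n$" matches the set $\{y\in\RR_+^{n-1}: y_j>0,\ A_j(y)\ge 0\}$ on which all perspective compositions are well-defined. Once the perspective-of-a-concave-function fact is invoked and the affine substitution $y\mapsto(A_j(y),y_j)$ is recorded, the remaining work is routine: cite closure of convexity under sums and addition of linear functions, and conclude. I would also remark that strict convexity in the relevant directions (needed later for the uniqueness claim in Theorem~\ref{thm:apx-characterization}\ref{tight-cvx-2}) follows from strict concavity of $t\mapsto t^\ell$ for $\ell\ge 2$, but for this proposition plain convexity suffices.
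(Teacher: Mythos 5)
Your proof is correct, but it takes a genuinely different route from the paper's. You invoke the perspective-function fact — for concave $p$, the map $(u,w)\mapsto w\,p(u/w)$ is concave on $\{w>0\}$, hence $-w\,p(u/w)$ is convex — and then compose with the affine map $y\mapsto (A_j(y),y_j)$. The paper instead expands $y_jP_n(A_j(y)/y_j)=ny_j-\sum_{\ell=1}^n (A_j(y))^{\ell}y_j^{1-\ell}$ and verifies convexity of each $f_{j,\ell}(y)=(A_j(y))^{\ell}y_j^{1-\ell}$ by an explicit Hessian computation: a linear change of variables to three coordinates, factoring out a positive scalar, and a Schur-complement argument showing the reduced Hessian is positive semidefinite (with a separate eigenvalue computation for $j=1$). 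Your approach is cleaner and subsumes the paper's computation, since $f_{j,\ell}$ is exactly the perspective of $t\mapsto t^{\ell}$ precomposed with $y\mapsto(A_j(y),y_j)$; the Schur complement vanishing in the paper's calculation is the fingerprint of the perspective function being linear along rays. One remark that actually dissolves the "main obstacle" you flag: $P_n(t)=n-\sum_{\ell=1}^n t^{\ell}$ is concave on all of $[0,\infty)$, not merely $[0,1]$, because each $t\mapsto t^{\ell}$ is convex on the whole non-negative ray. So no concave extension of $P_n$ beyond $1$ and no verification of $A_j(y)/y_j\le 1$ is needed; the perspective composition is convex on $\{y: y_j>0,\ A_j(y)\ge 0\}\supseteq \mathrm{int}\,K_n$, which is all the proposition asks. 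Your side remark about strict convexity is the one shaky point — perspectives are never strictly convex (they are linear along rays through the origin) — but you correctly note it is not needed here, and indeed the paper derives uniqueness in Theorem~\ref{thm:apx-characterization}\ref{tight-cvx-2} from the first-order system \ref{opt-alphas}, not from strict convexity.
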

\begin{proof}
Observe that the function $L_{n,\varepsilon}$ is linear, and $P_n(2y_1)$ is concave. Then, to show the convexity of the function $\Upsilon_{n,\varepsilon}$, it is sufficient to study the function $y_jP_{n}(A_j(y_1,\ldots,y_{j+1})/y_j)$ for each $j\in \{1,\ldots,n-2\}$.
Observe that
\begin{align*}
y_jP_{n}(A_j(y_1,\ldots,y_{j+1})/y_j)&=y_j\Big(n-\sum_{\ell=1}^n(A_j(y_1,\ldots,y_{j+1})/y_j)^{\ell}\Big)\\
&=ny_j-\sum_{\ell=1}^n(A_j(y_1,\ldots,y_{j+1}))^{\ell}y_{j}^{1-\ell},
\end{align*}
and therefore, it is sufficient to study for each $j\in \{1,\ldots,n-2\}$, and each $\ell\in \{1,\ldots,n\}$ the function $f_{j,\ell}(y_1,\ldots,y_{j+1})=(A_j(y_1,\ldots,y_{j+1}))^{\ell}y_{j}^{1-\ell}$.
We will show in what follows that $f_{j,\ell}$ is convex in the region 
$$R_{j}=\Big\{y\in \RR^{j+1}:A_j(y_1,\ldots,y_{j+1})> 0,y_1,\ldots,y_{j+1}>0\Big\}.$$
In particular, this implies that $f_{n-2,\ell}$ is convex in $R_n\supset K_n$ for each $\ell\in \{1,\ldots,n\}$, which implies the convexity of $\Upsilon_{n,\varepsilon}$ in $K_n$.

We perform one more reduction to simplify the analysis of $f_{j,\ell}$. 
In what follows we fix $j\ge 2$ and $\ell\in \{1,\ldots,n\}$; we study at the end the case of $j=1$. 
When $\ell=1$, we have $f_{j,\ell}=A_j$, which is a linear function, therefore convex; so we suppose $\ell\ge 2$ in what follows.
Consider the linear function given by $Q(y_1,\ldots,y_{j+1})=(\sum_{\ell=1}^{j-1}y_{\ell},y_j,y_{j+1})$, and let $g(u,v,w)=(h(u,v,w))^{\ell}v^{1-\ell}$ where $h(u,v,w)=h_uu+h_vv+h_ww$ with $h_u=h_v=-1/(j(j+1))$ and $h_w=(j+2)/(j+1)$.
Then, we have $f_{j,\ell}(y_1,\ldots,y_{j+1})=g(Q(y_1,\ldots,y_{j+1}))$.
Since $Q$ is linear, to conclude the convexity of $f_{j,\ell}$ in $R_j$ it is sufficient to check that $g$ is convex in $S=\{(u,v,w)\in \RR^3:h(u,v,w)> 0,u,v,w>0\}$.

We observe that all the second derivatives of $g$ have an strictly positive common factor of $b(u,v,w)=\ell(\ell-1)(h(u,v,w))^{\ell-2}v^{-1-\ell}$, and therefore we will study the simplified matrix $H(u,v,w)=\nabla^2 g(u,v,w)/b(u,v,w)$.
We denote by $C(u,v,w)$ the submatrix of $H(u,v,w)$ corresponding to the derivatives respect to $v$ and $w$, namely,
\begin{align*}
C(u,v,w)&=\frac{1}{b(u,v,w)}\begin{bmatrix} \partial_{vv}g(u,v,w)&\partial_{vw}g(u,v,w)\\\partial_{wv}g(u,v,w)&\partial_{ww}g(u,v,w)\end{bmatrix}\\
&=\begin{bmatrix}h_v^2v^2-2h_vh(u,v,w)v+h^2(u,v,w)&h_vh_wv^2-h_wh(u,v,w)v\\h_vh_wv^2-h_wh(u,v,w)v&h_w^2v^2\end{bmatrix}\\
&=\begin{bmatrix} (h_uu+h_ww)^2&-h_wh_uuv-h_w^2wv\\-h_wh_uuv-h_w^2wv&h_w^2v^2\end{bmatrix}.
\end{align*}
On the other hand, $H_{uu}(u,v,w)=h_u^2v^2>0$ in $S$, and therefore, $H$ is positive semidefinite if and only if its Schur complement 
$$F(u,v,w)=C(u,v,w)-\frac{1}{H_{uu}(u,v,w)}B^{\top}(u,v,w)B(u,v,w)$$
is positive semidefinite~\cite[Appendix A.5]{boyd2004convex}, where 
\begin{align*}
B(u,v,w)&=\begin{bmatrix}H_{uv}(u,v,w)&H_{uw}(u,v,w)\end{bmatrix}\\
&=\frac{1}{b(u,v,w)}\begin{bmatrix}\partial_{uv}g(u,v,w)&\partial_{uw}g(u,v,w)\end{bmatrix}\\
&=\begin{bmatrix}-h_u^2uv-h_uh_wwv & h_uh_wv^2 \end{bmatrix}.
\end{align*}
Then, by replacing with the explicit expression for $B(u,v,w)$, the second term in the Schur complement $F(u,v,w)$ is equal to 
\begin{align*}
&\frac{1}{h_u^2v^2}B^{\top}(u,v,w)B(u,v,w)\\
&=\frac{1}{h_u^2v^2}\begin{bmatrix}(h_u^2uv+h_uh_wwv)^2&-(h_u^2uv+h_uh_wwv)h_uh_wv^2\\-(h_u^2uv+h_uh_wwv)h_uh_wv^2&h_u^2h_w^2v^4\end{bmatrix}\\
&=\begin{bmatrix}(h_uu+h_ww)^2&-(h_uuv+h_wwv)h_w\\-(h_uuv+h_uh_wwv)h_w&h_w^2v^2\end{bmatrix}=C(u,v,w),
\end{align*}
and therefore the Schur complement $F(u,v,w)$ is equal to zero matrix. We therefore conclude that $H(u,v,w)$ is positive semidefinite in $S$, i.e., $g$ is convex in $S$.

Finally, we go back to the case of $j=1$ and $\ell\ge 2$; the case of $\ell=1$ is direct as $f_{1,1}=A_1$ is a linear function and therefore convex. 
We have $f_{1,\ell}(y_1,y_2)=2^{-\ell}(3y_2-y_1)^{\ell}y_{1}^{\ell}$.
In this case, by direct computation, we get that the eigenvalues of $\nabla^2f_{1,\ell}(y_1,y_2)$ are $\lambda_1(y_1,y_2)=0$ and $\lambda_2(y_1,y_2)=9 \ell(\ell-1) y_1^{- 1-\ell} (y_1^2 + y_2^2) (3 y_2 - y_1)^{\ell-2}$, and $\lambda_2(y_1,y_2)$ is always non-negative when $A_1(y_1,y_2)=(3/2)y_2-(1/2)y_1> 0$ and $y_1,y_2>0$, i.e., in $R_1$. This concludes the proof.
\end{proof}
In the following lemma, we provide a way of lower-bounding the tail contribution to the objective in the optimization problem \ref{form:infinite}, that only depends on the values $T_1,\ldots,T_n$. 
This allows, in particular, to reduce the number of variables for the problem.
\begin{lemma}\label{lem:truncating-sequence}
For every sequence $T=(T_j)_{j\in \NN}$ with $T_1=1$ that satisfies \ref{mon1}-\ref{mon3}, and such that $\mu_{n-1}(T)/(T_{n}-T_{n-1})<1$, there exists $\delta<1$ such that 
$$\sum_{j=n-1}^{\infty}(T_{j+1}-T_{j})P_n\left(\frac{\mu_j(T)}{T_{j+1}-T_j}\right)\le \frac{P_n(\delta)}{1-\delta}\left(\frac{n}{n-1}T_n-T_{n-1}-\frac{1}{n-1}\right).$$
\end{lemma}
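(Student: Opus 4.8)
The plan is to exploit the monotonicity condition \ref{mon3} to control the tail of the series. Since $\mu_j(T)/(T_{j+1}-T_j)$ is non-decreasing in $j$ (condition \ref{mon3}) and bounded above by one (condition \ref{mon2}), the tail ratios $\mu_j(T)/(T_{j+1}-T_j)$ for $j\ge n-1$ all lie in the interval $[\mu_{n-1}(T)/(T_n-T_{n-1}),1]$. Set $\delta=\mu_{n-1}(T)/(T_n-T_{n-1})$, which is strictly less than one by hypothesis. The function $P_n$ is concave with $P_n(1)=n-\sum_{\ell=1}^n 1=0$, and it is non-negative and non-increasing on $[0,1]$ near the right endpoint; more precisely, since $P_n$ is concave and vanishes at $1$, for any $t\in[\delta,1]$ we have $P_n(t)\le \frac{P_n(\delta)}{1-\delta}(1-t)$, i.e. $P_n$ lies below the chord from $(\delta,P_n(\delta))$ to $(1,0)$. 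This is the key inequality.

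Applying this termwise, for each $j\ge n-1$ I would write
\begin{align*}
(T_{j+1}-T_j)P_n\!\left(\frac{\mu_j(T)}{T_{j+1}-T_j}\right)&\le (T_{j+1}-T_j)\cdot\frac{P_n(\delta)}{1-\delta}\left(1-\frac{\mu_j(T)}{T_{j+1}-T_j}\right)\\
&=\frac{P_n(\delta)}{1-\delta}\bigl((T_{j+1}-T_j)-\mu_j(T)\bigr),
\end{align*}
where the inequality uses that the argument $\mu_j(T)/(T_{j+1}-T_j)\in[\delta,1]$ and the chord bound above, together with $T_{j+1}-T_j>0$. Summing over $j\ge n-1$, the problem reduces to evaluating the telescoping-type sum $\sum_{j=n-1}^{\infty}\bigl((T_{j+1}-T_j)-\mu_j(T)\bigr)$. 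Recalling $\mu_j(T)=\frac{j+2}{j+1}T_{j+2}-\frac{j+1}{j}T_{j+1}+\frac{T_1}{j(j+1)}$ for $j\ge 1$, and that $T_1=1$, I would compute the partial sum up to index $J$, observe that it telescopes (the $\frac{j+1}{j}T_{j+1}$ and $\frac{j+2}{j+1}T_{j+2}$ terms cancel across consecutive $j$, while $\sum \frac{1}{j(j+1)}=\frac{1}{n-1}-\frac{1}{J+1}$ telescopes), and then take $J\to\infty$ using $\lim_{j\to\infty}(T_{j+1}-T_j)=0$ and $\lim_{j\to\infty}T_j/j=0$ from condition \ref{mon1} to kill the boundary terms at infinity. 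The surviving terms should collapse to exactly $\frac{n}{n-1}T_n-T_{n-1}-\frac{1}{n-1}$.

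The main obstacle I anticipate is bookkeeping the telescoping sum cleanly: the coefficients $\frac{j+1}{j}$ and $\frac{j+2}{j+1}$ must be tracked carefully so that the cancellation is exact, and one must verify that the boundary contributions as $J\to\infty$ genuinely vanish — this is where condition \ref{mon1} (both $T_{j+1}-T_j\to 0$ and $T_j/j\to 0$) is essential, since the coefficient $\frac{J+1}{J}T_{J+1}$ needs the difference going to zero while terms like $\frac{1}{J}T_{J+1}$ need $T_j/j\to 0$. A secondary point requiring a line of justification is that $\delta<1$ indeed makes $\frac{P_n(\delta)}{1-\delta}$ a valid (finite, non-negative) constant and that $P_n$ is non-negative on $[\delta,1]$ — which follows since $P_n$ is concave with $P_n(0)=n>0$ and $P_n(1)=0$, hence non-negative on $[0,1]$. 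I would also double-check the chord inequality orientation, since getting the direction wrong would flip the bound.
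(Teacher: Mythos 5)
There is a genuine gap: your ``key inequality'' is stated in the wrong direction. Writing $r_j=\mu_j(T)/(T_{j+1}-T_j)$, you set $\delta=r_{n-1}$, so by \ref{mon3} every tail ratio satisfies $r_j\ge\delta$, and you then claim $P_n(t)\le \frac{P_n(\delta)}{1-\delta}(1-t)$ for $t\in[\delta,1]$. But a concave function lies \emph{above} its chord between the chord's endpoints, not below; concretely, $P_n(t)/(1-t)=\sum_{\ell=1}^n(1+t+\cdots+t^{\ell-1})$ is \emph{increasing} on $[0,1)$, so $P_n(r_j)/(1-r_j)\ge P_n(\delta)/(1-\delta)$ whenever $r_j\ge\delta$. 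Your termwise bound is therefore exactly reversed: with your choice of $\delta$ you obtain a lower bound on each summand, not an upper bound, and the argument collapses at its central step. (Your telescoping computation of $\sum_{j\ge n-1}\bigl((T_{j+1}-T_j)-\mu_j(T)\bigr)=\frac{n}{n-1}T_n-T_{n-1}-\frac{1}{n-1}$ is correct, as is the use of \ref{mon1} to kill the boundary terms.)

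The approach is salvageable, but only by choosing $\delta$ as an \emph{upper} bound on the tail ratios rather than a lower bound: since $(r_j)_{j\ge n-1}$ is non-decreasing and bounded by $1$, it has a limit $\calL\le 1$, and if $\calL<1$ you may take $\delta=\calL$, for which $r_j\le\delta$ and the (correctly oriented) monotonicity of $P_n(t)/(1-t)$ gives the termwise bound you want. You would still need a separate argument when $\calL=1$, since then no fixed $\delta<1$ dominates all tail ratios; one can argue that the weighted average $\sum_j w_j\,P_n(r_j)/(1-r_j)\big/\sum_j w_j$ with weights $w_j=(T_{j+1}-T_j)(1-r_j)$ is strictly below the supremum $n(n+1)/2$ and hence is matched by $P_n(\delta)/(1-\delta)$ for some $\delta<1$, but this requires care. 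Note also that the paper proves the lemma by an entirely different route: it shows the tail functional is monotone in each tail difference (Claim \ref{claim:increasing}), replaces $T$ by a majorizing sequence with constant ratio $\delta$ on the tail, and evaluates the resulting sum exactly by the same telescoping you use. Your termwise route, once the direction of $\delta$ is fixed, is more elementary, but as written the proof does not go through.
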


\begin{proof}
Consider the function $\calG$ defined over the space of sequences $D=(D_{j})_{j\in \NN}$ with $D_0=1$ such that 
$$\calG(D)=\sum_{j=n-1}^{\infty}D_{j}P_n\left(\frac{j+2}{j+1}\cdot \frac{D_{j+1}}{D_j}-\frac{1}{j(j+1)}\sum_{k=1}^{j-1}\frac{D_k}{D_j}-\frac{1}{j(j+1)}\right).$$
Given any sequence $(T_j)_{j\in \NN}$, if we define $\hat{T}_j=T_{j+1}-T_j$, we have
$$\calG(\hat{T})=\sum_{j=n-1}^{\infty}(T_{j+1}-T_{j})P_n\left(\frac{\mu_j(T)}{T_{j+1}-T_j}\right),$$
%\spcom{Typo? $T_j=S_j$?}
namely, the function $\calG$ corresponds to the contribution of the total summation after performing a change of variables and working with the consecutive differences of the sequence. 
In particular, if we denote 
$$\nu_j(D)=\frac{j+2}{j+1}\cdot \frac{D_{j+1}}{D_j}-\frac{1}{j(j+1)}\sum_{k=1}^{j-1}\frac{D_k}{D_j}-\frac{1}{j(j+1)},$$
%\spcom{ $D_1=1$, and $D_j\geq 0$ for all $j$.}
the conditions \ref{mon1}-\ref{mon3} translate into the following: (i) $\hat{T}_0=1$, (ii) $\nu_j(\hat{T})\le 1$ for every $j$, and (iii) $\nu_j(\hat{T})$ is non-decreasing in $j$.
We use the following property of $\calG$ to prove the lemma.
\begin{claim}\label{claim:increasing}
For every $j\ge n$, the function $\calG(D)$ is non-decreasing in $D_j$.
\end{claim}
We defer the proof of Claim \ref{claim:increasing} to Appendix \ref{app:cvx-optimal}, as it involves several derivatives computations.
To prove this lemma, given any sequence $T$ satisfying \ref{mon1}-\ref{mon3}, i.e., $\hat T$ satisfying (i)-(iii), we will show the existence of a sequence $R$ satisfying the following conditions:
\begin{enumerate}[itemsep=0pt,label=\normalfont{(\Alph*)}]
    \item $\hat R_j=\hat T_j$ for every $j\in \{0,1,\ldots,n-1\}$.\label{tstarA}
    \item $\hat R_j\ge \hat T_j$ for every $j\ge n-1$.\label{tstarB}
    \item $\nu_j(\hat R)=\delta$ for every $j\ge n-1$ and some $\delta\in (0,1)$. \label{tstarC}
\end{enumerate}
Before proving the existence of such sequence $R$, we show how to conclude the lemma.
By Claim \ref{claim:increasing}, we have that
\begin{align}
\sum_{j=n-1}^{\infty}(T_{j+1}-T_{j})P_n\left(\frac{\mu_j(T)}{T_{j+1}-T_j}\right)&=\calG(\hat T)\notag\\
&\le \calG(\hat R)=\sum_{j=n-1}^{\infty}(R_{j+1}-R_{j})P_n\left(\frac{\mu_j(R)}{R_{j+1}-R_j}\right).\label{lem7:calG1}
\end{align}
Since the sequence $R$ satisfies condition \ref{tstarC}, for every $j\ge n-1$ we have
$$\mu_j(R)=\frac{j+2}{j+1}R_{j+2}-\frac{j+1}{j}R_{j+1}+\frac{R_1}{j(j+1)}=\delta(R_{j+1}-R_j).$$
If we let $\calS^{\star}=\sum_{j=n-1}^{\infty}(R_{j+1}-R_j)$, by rearranging terms in the above equality and taking summation, we get
\begin{align*}
\delta \calS^{\star}&=\sum_{j=n-1}^{\infty}(R_{j+2}-R_{j+1})+\sum_{j=n-1}^{\infty}\left(\frac{1}{j+1}R_{j+2}-\frac{1}{j}R_{j+1}+\frac{R_1}{j(j+1)}\right)\\
&=\calS^{\star}-(R_n-R_{n-1})+\sum_{j=n-1}^{\infty}\left(\frac{1}{j+1}(R_{j+2}-R_1)-\frac{1}{j}(R_{j+1}-R_1)\right)\\
&=\calS^{\star}-(R_n-R_{n-1})-\frac{1}{n-1}(R_n-R_1),
\end{align*}
and therefore $\calS^{\star}=(1-\delta)^{-1}((n/(n-1))R_n-R_{n-1}-R_1/(n-1))$.
By using this in \eqref{lem7:calG1} we get
$$\sum_{j=n-1}^{\infty}(T_{j+1}-T_{j})P_n\left(\frac{\mu_j(T)}{T_{j+1}-T_j}\right)\le \calS^{\star}P_n(\delta)=\frac{P_n(\delta)}{1-\delta}\left(\frac{n}{n-1}R_n-R_{n-1}-\frac{R_1}{n-1}\right).$$
We conclude by observing that \ref{tstarA} guarantees that $R_1=T_1=1$, $R_{n-1}=T_{n-1}$, and $R_n=T_n$.

We now show the existence of the sequence $R$ satisfying \ref{tstarA}-\ref{tstarC}.
Suppose that $T$ does not satisfy these properties.
Then, it must be that \ref{tstarC} is violated, and let $k\ge n$ be the smallest integer such that $\nu_k(\hat T)>\nu_{k-1}(\hat{T})$.
In particular, $\nu_j(\hat{T})=\nu_{n-1}(\hat{T})$ for every $j\in \{n-1,\ldots,k-1\}$.
We will construct a new sequence $U$ satisfying conditions \ref{mon1}-\ref{mon3}, for which \ref{tstarA}-\ref{tstarB} hold, and $\nu_j(\hat U)=\nu_{n-1}(\hat{U})$ for every $j\in \{n-1,\ldots,k\}$, that is, all these ratios are now equal up to $k$.
The final sequence $R$ that satisfies \ref{tstarC} is obtained by applying this same construction iteratively.

For each $j\ge n-1$, consider the function
$$\overline{\nu}_j(D_0,D_1,\ldots,D_{j+1})=\frac{j+2}{j+1}\cdot \frac{D_{j+1}}{D_j}-\frac{1}{j(j+1)}\sum_{k=1}^{j-1}\frac{D_k}{D_j}-\frac{1}{j(j+1)}.$$
Let $\Delta=\nu_k(\hat T)$ and define $\hat U=(\hat U_j)_{j\in \NN}$ as follows.
Let $\hat U_j=\hat T_j$ for every $j\in \{0,1,\ldots,n-1\}$, and $\hat U_j=\hat T_j$ for every $j\ge k+1$. 
For each $j\in \{n,\ldots,k\}$, and given $\theta\in (\nu_{n-1}(\hat T),\Delta)$, we define $y_j(\theta)$ according to the following: $y_n(\theta)$ is such that
\begin{equation}
\overline{\nu}_{n-1}(\hat T_0,\hat T_1,\ldots,\hat T_{n-1},y_n(\theta))=\theta,\label{eq:theta-base}
\end{equation}
and for each $j\in \{n+1,\ldots,k\}$, the value $y_j(\theta)$ is such that
\begin{equation}
\overline{\nu}_{j-1}(\hat T_0,\hat T_1,\ldots,\hat T_{n-1},y_n(\theta),\ldots,y_{j-1}(\theta),y_j(\theta))=\theta.\label{eq:theta-indu}
\end{equation}
The function $\overline{\nu}_{n-1}$ is linear in $D_n$, and since $\overline{\nu}_{n-1}(\hat T_0,\ldots,\hat T_n)=\nu_{n-1}(\hat T)<\Delta$, there exists a unique value $y_n(\theta)\ge \hat T_n$ such that \eqref{eq:theta-base} holds.
Then, by using the same argument, for each $j\in \{n+1,\ldots,k\}$, and given the values $y_n(\theta),\ldots,y_{j-1}(\theta)$, there exists a unique value $y_j(\theta)\ge \hat T_j$ such that \eqref{eq:theta-indu} holds.
Since $h(\theta)=\overline{\nu}_k(\hat T_0,\ldots,y_k(\theta),\hat T_{k+1})$ is strictly decreasing in $\theta$, there exists a value $\theta^{\star}\in (\nu_{n-1}(\hat T),\Delta)$ such that $\overline{\nu}_{k-1}(\hat T_0,\ldots,y_{k}(\theta^{\star}))=\theta^{\star}=\overline{\nu}_{k}(\hat T_0,\ldots,y_{k}(\theta^{\star}),\hat T_{k+1})$.

Finally, we let $\hat U_j=y_j(\theta^{\star})$ for every $j\in \{n,\ldots,k\}$.
Observe that \ref{tstarA} and \ref{tstarB} are satisfied by construction, and \ref{tstarC} is satisfied for $j\in \{n-1,\ldots,k\}$ by the explicit construction shown before with $\delta=\theta^{\star}$.
This concludes the proof.
\end{proof}
Given $\varepsilon\ge 0$, consider the following system in the variables $\alpha=(\alpha_0,\alpha_1,\ldots,\alpha_{n-2})$: 
\begin{align}
P_n'(\alpha_{n-2})&=(n-1)(1+\varepsilon)-n(n+1)/2,\tag*{\normalfont{\mbox{[S]$_{n,\varepsilon}$}}}\label{opt-alphas}\\
P_n'(\alpha_{n-3})-P_n'(\alpha_{n-2})&=\left(\frac{n(n+1)}{2}-\beta_n(\alpha_{n-2})\right)\frac{n-2}{n-1},\notag\\
P_n'(\alpha_{j})-P_n'(\alpha_{j+1})&=\Big(\beta_n(\alpha_{j+2})-\beta_n(\alpha_{j+1})\Big)\frac{j+1}{j+2},\text{ for }j\in \{0,\ldots,n-4\},\notag
\end{align}
with $\beta_n(t)=P_n(t)-tP_n'(t)$.
In the following lemma, we provide a structural result of the previous system that we further use to analyze the optimization problem \ref{form:cvx}.
\begin{lemma}\label{claim:alphas} There exists $\varepsilon'_n\ge \varepsilon_n$ such that for every $\varepsilon\in [\varepsilon_n,\varepsilon_n']$, the following holds:
\begin{enumerate}[itemsep=0pt,label=\normalfont{(\roman*)}]
    \item There is a unique non-negative vector $\alpha(\varepsilon,n)=(\alpha_0(\varepsilon,n),\ldots,\alpha_{n-2}(\varepsilon,n))$ that satisfies \ref{opt-alphas}.\label{alpha-i}
    \item $0<\alpha_0(\varepsilon,n)<\alpha_1(\varepsilon,n)<\cdots<\alpha_{n-2}(\varepsilon,n)<1$.\label{alpha-ii}
    \item Let $y_{\ell}(\varepsilon,n)=x_{\ell+1}(\varepsilon,n)/(\ell+1)-x_{\ell}(\varepsilon,n)/\ell$ for each $\ell\in \{1,\ldots,n-1\}$, where 
    \begin{align}
    x_1(\varepsilon,n)&=1,\tag*{\normalfont{\mbox{[FO]$_{n,\varepsilon}$}}}\label{opt-conditions}\\
    x_2(\varepsilon,n)&=2+\alpha_0(\varepsilon,n),\text{ and }\notag\\
    \frac{x_j(\varepsilon,n)}{j-1}&=\alpha_{j-2}(\varepsilon,n)\left(\frac{x_{j-1}(\varepsilon,n)}{j-1}-\frac{x_{j-2}(\varepsilon,n)}{j-2}\right)+\frac{x_{j-1}(\varepsilon,n)}{j-2}-\frac{x_1(\varepsilon,n)}{(j-1)(j-2)}\notag
\end{align}
for every $j\in \{3,\ldots,n\}$. Then, $y(\varepsilon,n)$ is in the interior of $K_n$ and $\nabla \Upsilon_{n,\varepsilon}(y(\varepsilon,n))=0$.\label{alpha-iii}
\end{enumerate}
\end{lemma}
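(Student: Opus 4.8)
The plan is to reduce the stationarity condition $\nabla\Upsilon_{n,\varepsilon}(y)=0$ to the system \ref{opt-alphas}, analyze \ref{opt-alphas} on its own, and translate the solution back through \ref{opt-conditions}. First I would install the change of variables: for $y$ in the interior of $K_n$ put $a_0=2y_1$ and $a_j=A_j(y_1,\dots,y_{j+1})/y_j$ for $j\in\{1,\dots,n-2\}$, so that $a_j$ is precisely the argument of the $j$-th $P_n$-term of $\Upsilon_{n,\varepsilon}$, and check that, writing $T_\ell=x_\ell/\ell$, the recursion \ref{opt-conditions} is equivalent to the identities $y_\ell=T_{\ell+1}-T_\ell$, $\mu_0(T)=a_0$, and $\mu_j(T)=a_j(T_{j+1}-T_j)$ for $j\ge1$ (using $A_j(y)=\mu_j(T)$ and $(j+1)^2-j(j+2)=1$); hence $y\leftrightarrow\alpha$ is a bijection under which $\alpha_j=a_j$. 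Then I would differentiate $\Upsilon_{n,\varepsilon}$ directly: for $k\in\{1,\dots,n-2\}$ one gets
\begin{align*}
\partial_{y_k}\Upsilon_{n,\varepsilon}(y)=(1+\varepsilon)n-\frac{n(n+1)}{2(n-1)}-\frac{k+1}{k}P_n'(a_{k-1})-\beta_n(a_k)+\frac{P_n'(a_k)}{k(k+1)}+\sum_{j=k+1}^{n-2}\frac{P_n'(a_j)}{j(j+1)},
\end{align*}
where for $k=1$ the term $\tfrac{k+1}{k}P_n'(a_{k-1})$ means $2P_n'(2y_1)$, the contribution of $-P_n(2y_1)$; and $\partial_{y_{n-1}}\Upsilon_{n,\varepsilon}(y)=(1+\varepsilon)n-\tfrac{n^2(n+1)}{2(n-1)}-\tfrac{n}{n-1}P_n'(a_{n-2})$. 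From here, $\partial_{y_{n-1}}\Upsilon_{n,\varepsilon}=0$ is (after scaling by $\tfrac{n-1}{n}$) the first line of \ref{opt-alphas}; $\partial_{y_{n-2}}\Upsilon_{n,\varepsilon}=0$ combined with it gives the second (clearing coefficients with $(n-1)^2=n^2-2n+1$ and $\beta_n(1)=n(n+1)/2$); and for $k\in\{1,\dots,n-3\}$ the difference $\partial_{y_k}\Upsilon_{n,\varepsilon}-\partial_{y_{k+1}}\Upsilon_{n,\varepsilon}=0$ telescopes, via $\tfrac1{k(k+1)}+\tfrac{k+2}{k+1}=\tfrac{k+1}{k}$, to $P_n'(a_{k-1})-P_n'(a_k)=\tfrac{k}{k+1}(\beta_n(a_{k+1})-\beta_n(a_k))$, i.e.\ the $(k{-}1)$-th equation of \ref{opt-alphas}. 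Since this is an invertible linear reindexing of the $n-1$ stationarity equations, $\nabla\Upsilon_{n,\varepsilon}(y)=0$ and \ref{opt-alphas} cut out the same set under $y\leftrightarrow\alpha$.

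Next I would solve \ref{opt-alphas} by running it from $j=n-2$ downward. The inputs are that $P_n'$ is a strictly decreasing continuous bijection $[0,\infty)\to(-\infty,-1]$ with $P_n'(1)=-n(n+1)/2$, and that $\beta_n(t)=P_n(t)-tP_n'(t)$ is strictly increasing with $\beta_n(0)=n$, $\beta_n(1)=n(n+1)/2$. The first line determines $\alpha_{n-2}$ uniquely and puts it in $(0,1)$ exactly when $-1<\varepsilon<n/2$, which holds in particular at $\varepsilon=0$ for $n\ge2$. With the convention $\alpha_{n-1}:=1$, whenever $0<\alpha_{j+1}<\alpha_{j+2}\le1$ the right-hand side of the $j$-th equation equals $P_n'(\alpha_{j+1})+\tfrac{j+1}{j+2}(\beta_n(\alpha_{j+2})-\beta_n(\alpha_{j+1}))>P_n'(\alpha_{j+1})$, so its unique $P_n'$-preimage $\alpha_j$ satisfies $\alpha_j<\alpha_{j+1}<1$, and $\alpha_j\ge0$ holds iff that right-hand side is $\le-1$. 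I would then define $\varepsilon_n'$ as the supremum of $\varepsilon$ for which this last inequality holds at every step. As all these expressions are continuous (indeed monotone) in $\varepsilon$, the admissible set is an interval inside $(-1,n/2)$ containing $0$, and there $\alpha(\varepsilon,n)$ exists, is unique, and obeys $0<\alpha_0<\cdots<\alpha_{n-2}<1$. Granting that $\varepsilon_n$ lies in this interval (addressed below), and using $\varepsilon_n\ge0$, we get $[\varepsilon_n,\varepsilon_n']$ contained in it, which is (i) and (ii).

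For (iii) I would define $y=y(\varepsilon,n)$ (equivalently $x$, and $T_\ell=x_\ell/\ell$) from $\alpha(\varepsilon,n)$ via \ref{opt-conditions} and show $y$ is interior to $K_n$. By construction $T_1=1$ and $T_2-T_1=\alpha_0/2>0$; the identity from the first step gives $T_{j+2}-T_{j+1}=\tfrac{T_{j+1}-T_1}{j(j+2)}+\tfrac{j+1}{j+2}\mu_j(T)$ with $\mu_j(T)=\alpha_j(T_{j+1}-T_j)$, so induction yields $T_1<T_2<\cdots<T_n$; hence $y_\ell>0$ for all $\ell$ and $A_j(y)=\mu_j(T)=\alpha_j y_j>0$ for all $j$, i.e.\ $y$ is in the interior of $K_n$. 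Since there $a_j=\alpha_j$ and $\alpha$ solves \ref{opt-alphas}, the equivalence from the first step gives $\nabla\Upsilon_{n,\varepsilon}(y)=0$, and combined with the convexity of $\Upsilon_{n,\varepsilon}$ on the interior of $K_n$ (Proposition~\ref{prop:unique-solution}) this exhibits $y$ as the unique global minimizer of \ref{form:cvx}.

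The hard part will be controlling the admissible $\varepsilon$-range — concretely, showing $\varepsilon_n'\ge\varepsilon_n$ so that the statement is non-vacuous. My plan is to tie this to the optimization problem: by Lemma~\ref{lem:inf-opt} together with Lemma~\ref{lem:truncating-sequence} the value of \ref{form:cvx} equals the scaled worst-case gap, which at $\varepsilon=\varepsilon_n$ is nonnegative, hence the convex objective is bounded below; moreover $\Upsilon_{n,\varepsilon}\to+\infty$ along boundary directions in which some coordinate $y_j\to0$ with $A_j$ bounded away from $0$, and is bounded along the complementary boundary directions, so a monotonicity argument in $\varepsilon$ (using that $\Upsilon_{n,\varepsilon}$ is increasing in $\varepsilon$ and convex) rules out the minimizer escaping to $\partial K_n$; the minimizer is therefore an interior stationary point, which by the first step is the image under \ref{opt-conditions} of a solution of \ref{opt-alphas}. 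This forces \ref{opt-alphas} to be solvable in $(0,1)^{n-1}$ at $\varepsilon=\varepsilon_n$, i.e.\ $\varepsilon_n'\ge\varepsilon_n$. Everything else — the explicit gradient computation and the monotone bookkeeping of the recursion — is lengthy but routine.
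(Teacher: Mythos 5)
Most of your proposal tracks the paper's own argument: the gradient computation and the telescoping identity $\tfrac{1}{k(k+1)}+\tfrac{k+2}{k+1}=\tfrac{k+1}{k}$ correctly identify $\nabla\Upsilon_{n,\varepsilon}(y)=0$ with \ref{opt-alphas} under $\alpha_j=A_j(y)/y_j$, the backward recursion gives uniqueness and the ordering $0<\alpha_0<\cdots<\alpha_{n-2}<1$, and the interiority of $y(\varepsilon,n)$ follows exactly as you describe. Two of your assertions are left unverified, though, where the paper does real work: that $\varepsilon=0$ is admissible (you check only $\alpha_{n-2}(0,n)\in(0,1)$; the paper runs an explicit induction using $\beta_n(t)>n$ to show the right-hand side stays below $-1$ at every step), and that the admissible set of $\varepsilon$ is an interval (the paper proves each $\alpha_j$ is differentiable and decreasing in $\varepsilon$ by implicitly differentiating the recursion; "indeed monotone" is not free).

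The genuine gap is your final step, $\varepsilon_n'\ge\varepsilon_n$. First, invoking "the value of \ref{form:cvx} equals the scaled worst-case gap, which at $\varepsilon=\varepsilon_n$ is nonnegative" is circular: the direction of Theorem~\ref{thm:apx-characterization}\ref{tight-cvx-1} you need (value of \ref{form:infinite} $\le$ value of \ref{form:cvx}) is proved in the paper by constructing the feasible sequences $X_\eta$ \emph{from the solution of \ref{opt-alphas}}, i.e.\ from the lemma you are proving; only the reverse inequality (via Lemma~\ref{lem:truncating-sequence}) is available to you here. Second, even granting boundedness below, existence of a minimizer on the unbounded polyhedron $K_n$ is not established, and your plan to exclude boundary minimizers via "a monotonicity argument in $\varepsilon$" is a placeholder, not an argument — and it cannot work as stated, because the boundary is exactly where the minimizer goes: the paper shows (Claim~\ref{claim:y_linearly_dep}) that $y_j(\varepsilon,n)=\alpha_0(\varepsilon,n)\hat y_j(\varepsilon,n)$ with $\hat y_j$ bounded away from $0$ and $\infty$, so as $\varepsilon\uparrow\varepsilon_n'$ the stationary point collapses to the boundary point $y=0$ of $K_n$. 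The paper's actual mechanism is quantitative: using this linear dependence it computes $\lim_{\varepsilon\uparrow\varepsilon_n'}\Upsilon_{n,\varepsilon}(y(\varepsilon,n))=(1+\varepsilon_n')n-P_n(0)=\varepsilon_n'n>0$, whence $\min_{K_n}\Upsilon_{n,\varepsilon_n'}>0$ by continuity of the concave value function $v(\varepsilon)$; combined with the non-circular direction of Lemma~\ref{lem:inf-opt} and Lemma~\ref{lem:truncating-sequence}, this forces $\varepsilon_n\le\varepsilon_n'$. Without some substitute for this limit computation at the degenerating boundary, your argument does not close.
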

Since the proof of Lemma~\ref{claim:alphas} is highly technical,
we present it after finishing the proof of Theorem~\ref{thm:apx-characterization}.
Given $x_1(\varepsilon,n),\ldots,x_{n}(\varepsilon,n)$ as in \ref{opt-conditions}, for every value $\eta \in (0,1)$ and every $\varepsilon\in [\varepsilon_n,\varepsilon_n']$, consider the sequence $X_{\eta}(\varepsilon,n)=(X_{j,\eta}(\varepsilon,n))_{j\in \NN}$ defined as follows: $X_{0,\eta}(\varepsilon,n)=0$, $X_{j,\eta}(\varepsilon,n)=x_j(\varepsilon,n)/j$ for every $j\in \{1,\ldots,n\}$, and 
\begin{equation*}
X_{j+2,\eta}(\varepsilon,n)=\frac{j+1}{j+2}\left(\eta\Big(X_{j+1,\eta}(\varepsilon,n)-X_{j,\eta}(\varepsilon,n)\Big)+\frac{j+1}{j}X_{j+1,\eta}(\varepsilon,n)-\frac{x_{1}(\varepsilon,n)}{j(j+1)}\right),\label{eq:sequence-eta}
\end{equation*}
for every integer $j\ge n-1$.
In the following lemma, we show that the sequences $(X_{j,\eta}(\varepsilon,n))_{j\in \NN}$ are feasible for \ref{form:infinite}.
%, which, in particular, provides an explicit way to analyze the value of \ref{form:infinite} through the optimal solution of \ref{form:cvx}.
%\spcom{Check here.}
\begin{lemma}\label{lem:extension-to-infinite}
For every $\varepsilon\in [\varepsilon_n,\varepsilon'_n]$ there exists $\eta_0(\varepsilon,n)\in (0,1)$ such that for every $\eta\in [\eta_0(\varepsilon,n),1)$, the sequence $(X_{j,\eta}(\varepsilon,n))_{j\in \NN}$ satisfies conditions \ref{mon1}-\ref{mon3}.   
%For every $\alpha$, extend the optimal solution of the convex problem, and show that the obtained sequences satisfies the monotonicity conditions.
\end{lemma}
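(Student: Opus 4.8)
The proof will hinge on computing the ratio $\rho_j:=\mu_j(X_\eta(\varepsilon,n))/(X_{j+1,\eta}(\varepsilon,n)-X_{j,\eta}(\varepsilon,n))$ explicitly for every $j$; throughout write $X_j:=X_{j,\eta}(\varepsilon,n)$, $x_j:=x_j(\varepsilon,n)$, and recall $X_0=0$, $X_j=x_j/j$ for $1\le j\le n$, and $x_1=1$. Two identities drive everything. First, for the head $j\in\{0,\ldots,n-2\}$: starting from the elementary rewriting $\mu_j(X)=\tfrac{j+2}{j+1}(X_{j+2}-X_{j+1})-\tfrac{X_{j+1}-X_1}{j(j+1)}$ (valid for $j\ge1$ and any sequence), substituting the consecutive differences $y_\ell=X_{\ell+1}-X_\ell$ identifies the right-hand side with $A_j(y_1,\ldots,y_{j+1})$, and then unwinding the first-order system \ref{opt-conditions} under $X_j=x_j/j$ shows that in fact $\mu_j(X)=\alpha_j(\varepsilon,n)\,(X_{j+1}-X_j)$ for each such $j$ (the $j=0$ case being read off directly from $x_2=2+\alpha_0$, $X_1=1$, $X_0=0$). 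Second, for the tail $j\ge n-1$: plugging the defining recursion for $X_{j+2}$ into $\mu_j(X)=\tfrac{j+2}{j+1}X_{j+2}-\tfrac{j+1}{j}X_{j+1}+\tfrac{X_1}{j(j+1)}$ and using $X_1=x_1=1$ collapses everything to $\mu_j(X)=\eta\,(X_{j+1}-X_j)$ — this is precisely the identity the recursion was built to enforce. Consequently $\rho_j=\alpha_j(\varepsilon,n)$ for $j\le n-2$ and $\rho_j=\eta$ for $j\ge n-1$.

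Given these identities, conditions \ref{mon2} and \ref{mon3} are immediate once we fix $\eta_0(\varepsilon,n):=\alpha_{n-2}(\varepsilon,n)$: by Lemma~\ref{claim:alphas}\ref{alpha-ii} this lies in $(0,1)$ and $0<\alpha_0(\varepsilon,n)<\cdots<\alpha_{n-2}(\varepsilon,n)<1$, so for any $\eta\in[\eta_0(\varepsilon,n),1)$ the sequence $(\rho_j)_{j\in\NN}$ is non-decreasing (it increases along $\alpha_0,\ldots,\alpha_{n-2}$ and is thereafter constant equal to $\eta\ge\alpha_{n-2}$), which is \ref{mon3}, and every $\rho_j$ is strictly below $1$, which is \ref{mon2} provided the denominators $X_{j+1}-X_j$ are positive. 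For \ref{mon1}, $X_0=0$ is the definition; the strict increase for $0\le j\le n-1$ holds because $X_{j+1}-X_j=y_j(\varepsilon,n)>0$ for $j\ge1$ — these differences are the coordinates of the point $y(\varepsilon,n)\in\operatorname{int}(K_n)$ furnished by Lemma~\ref{claim:alphas}\ref{alpha-iii} — together with $X_1-X_0=1>0$; and for $j\ge n-1$ it follows by induction from the tail identity rewritten as $\tfrac{j+2}{j+1}(X_{j+2}-X_{j+1})=\eta(X_{j+1}-X_j)+\tfrac{X_{j+1}-1}{j(j+1)}$, since under the hypotheses $X_{j+1}-X_j>0$ and $X_{j+1}>1$ the right-hand side is positive, whence $X_{j+2}>X_{j+1}>1$; the base case $j=n-1$ uses $X_n=1+\sum_{\ell=1}^{n-1}y_\ell(\varepsilon,n)>1$.

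The main remaining point, and the one I expect to be the crux, is the decay part of \ref{mon1}: $\lim_j X_j/j=0$ and $\lim_j(X_{j+1}-X_j)=0$. Set $W_j:=X_j-1$ (positive for $j\ge2$) and $e_j:=W_{j+1}-W_j=X_{j+1}-X_j>0$; the rewritten tail identity reads $e_{j+1}=\tfrac{\eta(j+1)}{j+2}\,e_j+\tfrac{W_{j+1}}{j(j+2)}$ for $j\ge n-1$. Dividing by $W_{j+1}$ and using $W_{j+1}\ge W_j>0$, the ratios $\phi_j:=e_j/W_j$ satisfy $\phi_{j+1}\le\eta\,\phi_j+\tfrac{1}{j(j+2)}$; since $\eta<1$ and $\sum_j 1/(j(j+2))<\infty$, summing this inequality over $j$ gives $\sum_j\phi_j<\infty$. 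Then $W_j=W_{n-1}\prod_{i=n-1}^{j-1}(1+\phi_i)$ converges to a finite limit, so $X_j$ converges, and both desired limits follow at once. The delicate feature is that the extra factor $\tfrac{j+1}{j+2}<1$ in the recursion keeps the effective multiplier in $\phi_{j+1}\le\eta\,\phi_j+\cdots$ strictly below $1$ for every $\eta<1$, so the argument never needs $\eta$ bounded away from $1$ — exactly what is required, since the statement must hold for all $\eta\in[\eta_0(\varepsilon,n),1)$. Finally, because Lemma~\ref{claim:alphas} supplies the vector $\alpha(\varepsilon,n)$ with these properties for every $\varepsilon\in[\varepsilon_n,\varepsilon_n']$, the choice $\eta_0(\varepsilon,n)=\alpha_{n-2}(\varepsilon,n)\in(0,1)$ works uniformly on that range, which completes the proof.
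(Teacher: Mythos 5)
Your proof is correct and follows the paper's argument almost step for step: you establish the same two identities $\mu_j(X_\eta)=\alpha_j\,(X_{j+1,\eta}-X_{j,\eta})$ for $j\le n-2$ and $\mu_j(X_\eta)=\eta\,(X_{j+1,\eta}-X_{j,\eta})$ for $j\ge n-1$ (the paper's \eqref{eq:formula-mu}), make the same choice $\eta_0(\varepsilon,n)=\alpha_{n-2}(\varepsilon,n)$, and deduce \ref{mon2}, \ref{mon3}, and the strict monotonicity exactly as the paper does, using Lemma~\ref{claim:alphas}. The one place you genuinely diverge is the decay part of \ref{mon1}: the paper simply invokes Claim~\ref{claim:limit-t} (already needed for Lemma~\ref{lem:sequence-to-dist}), which applies here because $\lim_j\mu_j(X_\eta)/(X_{j+1,\eta}-X_{j,\eta})=\eta<1$, whereas you give a self-contained argument via $W_j=X_j-1$, the ratios $\phi_j=e_j/W_j$, the contraction $\phi_{j+1}\le\eta\phi_j+1/(j(j+2))$, and the convergent product $W_j=W_{n-1}\prod_i(1+\phi_i)$. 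Both are valid; the paper's route is shorter given the claim is on hand, while yours avoids re-deriving the general convergence lemma and makes the mechanism for boundedness explicit (note, though, that your bound on $\sum_j\phi_j$ degrades like $1/(1-\eta)$, so the argument is per fixed $\eta<1$ rather than uniform — which is all the statement requires — and the summation step deserves the partial-sum bookkeeping $S_{N+1}\le\eta S_{N+1}+C+\phi_{n-1}$ to avoid assuming finiteness of the series up front). Minor indexing caveats (e.g.\ $W_{n-1}>0$ requires $n\ge 3$; for small $n$ start the product one index later) are cosmetic.
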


\begin{proof}
%For the proof, the following claim about the values $\alpha_0(\varepsilon,n),\ldots,\alpha_{n-2}(\varepsilon,n)$ in the system \ref{opt-conditions} will be useful.
From the definition of the sequence, we have
\begin{equation}
\mu_j(X_{\eta}(\varepsilon,n))
=\begin{cases}
\alpha_j(\varepsilon,n)(X_{j+1,\eta}(\varepsilon,n)-X_{j,\eta}(\varepsilon,n)) & \text{ for }j\in \{0,\ldots,n-2\},\\
\eta(X_{j+1,\eta}(\varepsilon,n)-X_{j,\eta}(\varepsilon,n))& \text{ for }j\ge n-1.
\end{cases}\label{eq:formula-mu}
\end{equation}
Let $\eta_0(\varepsilon,n)=\alpha_{n-2}(\varepsilon,n)$ and in what follows we consider $\eta\in [\alpha_{n-2}(\varepsilon,n),1)$; we remark that $\alpha_{n-2}(\varepsilon,n)<1$ by Lemma \ref{claim:alphas}.
We first show that the sequence $X_{\eta}(\varepsilon,n)$ satisfies condition \ref{mon1}.
Since $\lim_{j\to \infty}\mu_j(X_{\eta}(\varepsilon,n))/(X_{j+1,\eta}(\varepsilon,n)-X_{j,\eta}(\varepsilon,n))=\eta\in (0,1)$, by Claim \ref{claim:limit-t} there exists a finite value $\xi$ such that $\lim_{j\to \infty}X_{j,\eta}(\varepsilon,n)=\xi$.
In particular, this implies that $\lim_{j\to \infty}X_{j,\eta}(\varepsilon,n)/j=0$ and $\lim_{j\to \infty}(X_{j+1,\eta}(\varepsilon,n)-X_{j,\eta}(\varepsilon,n))=0$.

We proceed by induction to show that the sequence $X_{\eta}(\varepsilon,n)$ is strictly increasing.
By construction $X_{1,\eta}(\varepsilon,n)=1>X_{0,\eta}(\varepsilon,n)$, and suppose the sequence is strictly increasing up to $\ell+1$.
Together with the equality in \eqref{eq:formula-mu}, the fact that $X_{\ell+1,\eta}(\varepsilon,n)-X_{\ell,\eta}(\varepsilon,n)>0$ implies that
\begin{align*}
0<\mu_{\ell}(X_{\eta}(\varepsilon,n))=\frac{\ell+2}{\ell+1}X_{\ell+2,\eta}(\varepsilon,n)-\frac{\ell+1}{\ell}X_{\ell+1,\eta}(\varepsilon,n)+\frac{X_{1,\eta}(\varepsilon,n)}{\ell(\ell+1)},
\end{align*}
and therefore, we get
\begin{align*}
X_{\ell+2,\eta}(\varepsilon,n)&>\frac{(\ell+1)^2}{\ell(\ell+2)}X_{\ell+1,\eta}(\varepsilon,n)-\frac{1}{\ell(\ell+2)}=X_{\ell+1,\eta}(\varepsilon,n)+\frac{X_{\ell+1,\eta}(\varepsilon,n)-1}{\ell(\ell+2)}>0,
\end{align*}
where the last inequality holds since the sequence is strictly increasing up to $\ell+1$ and $1=X_{1,\eta}(\varepsilon,n)$.
Therefore, \ref{mon1} is satisfied.
 
Since, by Lemma \ref{claim:alphas}, $\alpha_j(\varepsilon,n)\in (0,1)$ for every $j\in \{0,1,\ldots,n-2\}$ and $\eta\in (0,1)$, from \eqref{eq:formula-mu} we get
$\mu_j(X_{\eta}(\varepsilon,n))\le X_{j+1,\eta}(\varepsilon,n)-X_{j,\eta}(\varepsilon,n)$ for every $j$, that is, \ref{mon2} is satisfied.
Finally, also from \eqref{eq:formula-mu}, observe that $\mu_j(X_{\eta}(\varepsilon,n))/ (X_{j+1,\eta}(\varepsilon,n)-X_{j,\eta}(\varepsilon,n))$ is equal to $\alpha_j(\varepsilon,n)$ for $j\le n-2$, and is equal to $\eta$ for $j\ge n-1$.
By Lemma \ref{claim:alphas} we have $\alpha_0(\varepsilon,n)<\cdots<\alpha_{n-2}(\varepsilon,n)\le \eta$ and we conclude that $\mu_j(X_{\eta}(\varepsilon,n))/ (X_{j+1,\eta}(\varepsilon,n)-X_{j,\eta}(\varepsilon,n))$ is non-decreasing.
Therefore, \ref{mon3} is satisfied.
This concludes the proof of the lemma.
\end{proof}

We are ready to prove the main result.
\begin{proof}[Proof of Theorem \ref{thm:apx-characterization}]
In what follows, recall that $y_{\ell}(\varepsilon)=x_{\ell+1}(\varepsilon,n)/(\ell+1)-x_{\ell}(\varepsilon,n)/\ell$ for each $\ell\in \{1,\ldots,n-1\}$, where $x_1(\varepsilon,n),\ldots,x_n(\varepsilon,n)$ are defined by \ref{opt-conditions}. Lemma \ref{claim:alphas}\ref{alpha-iii} guarantees that $\nabla \Upsilon_{n,\varepsilon}(y(\varepsilon,n))=0$ and $y(\varepsilon,n)$ is in the interior of $K_n$ for every $\varepsilon\in [\varepsilon_n,\varepsilon'_n]$. By Proposition \ref{prop:unique-solution}, the function $\Upsilon_{n,\varepsilon}$ is convex in the interior of $K_n$, and therefore the point $y^{\star}=y(\varepsilon,n)\in K_n$ must be the unique optimal solution, which yields \ref{tight-cvx-2}.

By Lemma \ref{lem:inf-opt}, to conclude \ref{tight-cvx-1} is sufficient to prove that when $\varepsilon\in [\varepsilon_n,\varepsilon'_n]$, the optimal value of \ref{form:infinite} is non-negative if and only if the optimal value of \ref{form:cvx} is non-negative.
First, we show that the optimal value of \ref{form:infinite} is upper-bounded by the optimal value of \ref{form:cvx}.
For every value $\eta\in (0,1)$ consider the sequence $(X_{j,\eta}(\varepsilon,n))_{j\in \NN}$ as defined in \eqref{eq:sequence-eta}.
By Lemma \ref{lem:extension-to-infinite}, the sequence $(X_{j,\eta}(\varepsilon,n))_{j\in \NN}$ satisfies the properties \ref{mon1}-\ref{mon3} for every $\eta\in [\eta_0(\varepsilon,n),1)$, and therefore, it is feasible for the optimization problem \ref{form:infinite}.
Observe that by construction, for each $\eta$, we have $\mu_j(X_{\eta}(\varepsilon,n))/(X_{j+1,\eta}(\varepsilon,n)-X_{j,\eta}(\varepsilon,n))=\eta$ for every $j\ge n-1$. 
\begin{claim}\label{claim:tailsum}
For each $\eta\in (0,1)$, we have 
\begin{align*}
\sum_{j=n-1}^{\infty}(X_{j+1,\eta}(\varepsilon,n)-X_{j,\eta}(\varepsilon,n))P_n(\eta)&=\frac{P_n(\eta)}{1-\eta}\cdot \frac{x_n(\varepsilon,n)-x_{n-1}(\varepsilon,n)-x_1(\varepsilon,n)}{n-1},
\end{align*}
\end{claim}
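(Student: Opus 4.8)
The plan is to peel off the constant factor $P_n(\eta)$ and reduce to evaluating the telescoping series $\mathcal S:=\sum_{j=n-1}^{\infty}\big(X_{j+1,\eta}(\varepsilon,n)-X_{j,\eta}(\varepsilon,n)\big)$; to streamline notation I would write $X_j$ for $X_{j,\eta}(\varepsilon,n)$ and $x_j$ for $x_j(\varepsilon,n)$. First I would record that $(X_j)_{j\in\NN}$ converges to a finite limit $\xi$ — this is exactly the fact established in the proof of Lemma~\ref{lem:extension-to-infinite} (via Claim~\ref{claim:limit-t}, using that $\mu_j(X_\eta(\varepsilon,n))/(X_{j+1}-X_j)\to\eta<1$ by construction). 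Consequently the series above converges, $\mathcal S=\xi-X_{n-1}$, and moreover $\tfrac1j(X_{j+1}-X_1)\to 0$, which is the only ``analytic'' input needed below.

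The core of the argument is a telescoping identity identical in form to the one already carried out for the sequence $R$ in the proof of Lemma~\ref{lem:truncating-sequence}, now with the constant ratio $\eta$ in place of $\delta$. By \eqref{eq:formula-mu} we have $\mu_j(X_\eta(\varepsilon,n))=\eta(X_{j+1}-X_j)$ for every $j\ge n-1$; expanding $\mu_j$ and using the elementary cancellation $\tfrac{X_1}{j+1}-\tfrac{X_1}{j}+\tfrac{X_1}{j(j+1)}=0$, this relation becomes
\[
(X_{j+2}-X_{j+1})+\frac{1}{j+1}(X_{j+2}-X_1)-\frac{1}{j}(X_{j+1}-X_1)=\eta(X_{j+1}-X_j).
\]
Summing over $j\ge n-1$, the term $X_{j+2}-X_{j+1}$ telescopes to $\xi-X_n=\mathcal S-(X_n-X_{n-1})$, while $\tfrac1{j+1}(X_{j+2}-X_1)-\tfrac1j(X_{j+1}-X_1)$ telescopes to $-\tfrac1{n-1}(X_n-X_1)$ because $\tfrac1j(X_{j+1}-X_1)\to0$. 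Equating the two sides gives $\eta\,\mathcal S=\mathcal S-(X_n-X_{n-1})-\tfrac1{n-1}(X_n-X_1)$, hence $(1-\eta)\,\mathcal S=\tfrac{n}{n-1}X_n-X_{n-1}-\tfrac1{n-1}X_1$.

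Finally, I would substitute $X_n=x_n/n$, $X_{n-1}=x_{n-1}/(n-1)$ and $X_1=x_1$, which hold by the definition of the sequence $X_\eta(\varepsilon,n)$ together with \ref{opt-conditions}, to collapse the right-hand side to $(x_n-x_{n-1}-x_1)/(n-1)$; thus $\mathcal S=\tfrac{1}{1-\eta}\cdot\tfrac{x_n-x_{n-1}-x_1}{n-1}$, and multiplying through by $P_n(\eta)$ yields the claim. I do not foresee a genuine obstacle here: the whole computation is a telescoping identity transplanted from the proof of Lemma~\ref{lem:truncating-sequence}, and the single delicate point — that the infinite sum may be rearranged and its tail boundary terms vanish — is supplied by the finiteness of $\lim_j X_{j,\eta}(\varepsilon,n)$ coming from Lemma~\ref{lem:extension-to-infinite}.
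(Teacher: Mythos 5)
Your proposal is correct and follows essentially the same route as the paper's proof: expand $\mu_j(X_\eta(\varepsilon,n))=\eta(X_{j+1,\eta}(\varepsilon,n)-X_{j,\eta}(\varepsilon,n))$ for $j\ge n-1$, telescope the two resulting sums (using finiteness of $\lim_j X_{j,\eta}(\varepsilon,n)$ to kill the boundary terms at infinity), and substitute $X_{n,\eta}=x_n/n$, $X_{n-1,\eta}=x_{n-1}/(n-1)$, $X_{1,\eta}=x_1=1$. The only difference is cosmetic — you carry $X_1$ symbolically where the paper writes $1$ — and your explicit justification of the vanishing tail terms is, if anything, slightly more careful than the paper's.
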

We defer the proof of Claim \ref{claim:tailsum} to Appendix \ref{app:cvx-optimal}.
Therefore, the objective value of $(X_{j,\eta}(\varepsilon,n))_{j\in \NN}$ in \ref{form:infinite} satisfies the following:
\begin{align}
&(1+\varepsilon)nX_{n,\eta}(\varepsilon,n)-\sum_{j=0}^{\infty}(X_{j+1,\eta}(\varepsilon,n)-X_{j,\eta}(\varepsilon,n))P_n\left(\frac{\mu_j(X_{\eta}(\varepsilon,n))}{X_{j+1,\eta}(\varepsilon,n)-X_{j,\eta}(\varepsilon,n)}\right)\notag\\
&=(1+\varepsilon)x_{n}(\varepsilon,n)-\sum_{j=0}^{n-2}(X_{j+1,\eta}(\varepsilon,n)-X_{j,\eta}(\varepsilon,n))P_n\left(\frac{\mu_j(X_{\eta}(\varepsilon,n))}{X_{j+1,\eta}(\varepsilon,n)-X_{j,\eta}(\varepsilon,n)}\right)\notag\\
&\hspace{5cm} - \frac{P_n(\eta)}{1-\eta}\cdot \frac{x_n(\varepsilon,n)-x_{n-1}(\varepsilon,n)-x_1(\varepsilon,n)}{n-1},\label{eq:eta-to-cvx}
\end{align}
where the equality holds since the summation from $n-1$ is given by Claim \ref{claim:tailsum}, and $X_{n,\eta}(\varepsilon,n)=x_n(\varepsilon,n)/n$.
Now we study the summation term between zero and $n-2$.
%\vvcom{Lo que esta rojo sera actualizado a la notacion de la function objetivo actual}
For $j=0$, we have
\begin{align*}
(X_{1,\eta}(\varepsilon,n)-X_{0,\eta}(\varepsilon,n))P_n\left(\frac{\mu_0(X_{\eta}(\varepsilon,n))}{X_{1,\eta}(\varepsilon,n)-X_{0,\eta}(\varepsilon,n)}\right)&=x_{1}(\varepsilon,n)P_n\left(\frac{x_{2}(\varepsilon,n)-2x_1(\varepsilon,n)}{x_{1}(\varepsilon,n)}\right)\\
&=P_n\left(x_{2}(\varepsilon,n)-2x_1(\varepsilon,n)\right)\\
&=P_n(2y_1(\varepsilon,n)),
\end{align*}
where the first equality holds since $x_1(\varepsilon,n)$. For each $j\in \{1,\ldots,n-2\}$, we have $X_{j+1,\eta}(\varepsilon,n)-X_{j,\eta}(\varepsilon,n)=y_{j}(\varepsilon,n)$, and 
\begin{align*}
\mu_j(X_{\eta}(\varepsilon,n))&=\frac{j+2}{j+1}X_{j+2,\eta}(\varepsilon,n)-\frac{j+1}{j}X_{j+1\eta}(\varepsilon,n)+\frac{X_{1,\eta}(\varepsilon,n)}{j(j+1)}\\
&=\frac{j+2}{j+1}y_{j+1}(\varepsilon,n)-\frac{1}{j(j+1)}\sum_{\ell=1}^jy_{\ell}(\varepsilon,n)=A_{j}(y_1(\varepsilon,n),\ldots,y_{j+1}(\varepsilon,n)).
\end{align*}
Therefore,
\begin{align*}
&(X_{j+1,\eta}(\varepsilon,n)-X_{j,\eta}(\varepsilon,n))P_n\left(\frac{\mu_j(X_{\eta}(\varepsilon,n))}{X_{j+1,\eta}(\varepsilon,n)-X_{j,\eta}(\varepsilon,n)}\right)\\
&=y_j(\varepsilon,n)P_n\left(A_{j}(y_1(\varepsilon,n),\ldots,y_{j+1}(\varepsilon,n))/y_j(\varepsilon,n)\right).
\end{align*}
By replacing in \eqref{eq:eta-to-cvx}, the objective value of $(X_{j,\eta}(\varepsilon,n))_{j\in \NN}$ in \ref{form:infinite} is equal to
\begin{align*}
V_{\eta}&=(1+\varepsilon)x_{n}(\varepsilon,n)-P_n(2y_1(\varepsilon,n))-\sum_{j=1}^{n-2}y_j(\varepsilon,n)P_n\left(A_{j}(y_1(\varepsilon,n),\ldots,y_{j+1}(\varepsilon,n))/y_j(\varepsilon,n)\right)\notag\\
&\hspace{5cm} - \frac{P_n(\eta)}{1-\eta}\cdot \frac{x_n(\varepsilon,n)-x_{n-1}(\varepsilon,n)-x_1(\varepsilon,n)}{n-1}\\
&=(1+\varepsilon)n\left(1+\sum_{j=1}^{n-1}y_j(\varepsilon,n)\right)-P_n(2y_1(\varepsilon,n))-\sum_{j=1}^{n-2}y_j(\varepsilon,n)P_n\left(\frac{A_{j}(y_1(\varepsilon,n),\ldots,y_{j+1}(\varepsilon,n))}{y_j(\varepsilon,n)}\right)\notag\\
&\hspace{5cm} - \frac{P_n(\eta)}{1-\eta}\cdot \frac{1}{n-1}\left(n\sum_{j=1}^{n-1}y_j(\varepsilon,n)-(n-1)\sum_{j=1}^{n-2}y_j(\varepsilon,n)\right)\\
&=\Upsilon_{n,\varepsilon}(y(\varepsilon,n))+\frac{x_n(\varepsilon,n)-x_{n-1}(\varepsilon,n)-x_1(\varepsilon,n)}{n-1}\left[\frac{n(n+1)}{2}-\frac{P_n(\eta)}{1-\eta}\right],
\end{align*}
where the second and third equalities hold by noting that $x_n(\varepsilon,n)-x_{n-1}(\varepsilon,n)-x_1(\varepsilon,n)=n\sum_{j=1}^{n-1}y_j(\varepsilon,n)-(n-1)\sum_{j=1}^{n-2}y_j(\varepsilon,n)$.
%(1+\varepsilon)n\left(1+\sum_{j=1}^{n-1}y_j\right)-\frac{n(n+1)}{2(n-1)}\left(n\sum_{j=1}^{n-1}y_j-(n-1)\sum_{j=1}^{n-2}y_j\right)
\begin{claim}\label{claim:mon-x}
$x_n(\varepsilon,n)\ge x_{n-1}(\varepsilon,n)+x_1(\varepsilon,n)$.
\end{claim}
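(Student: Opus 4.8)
The plan is to read the claim directly off the last defining relation in \ref{opt-conditions}. Instantiating that relation at $j=n$ and using $x_1(\varepsilon,n)=1$, one multiplies through by $n-1$ and rearranges the term $\frac{n-1}{n-2}x_{n-1}(\varepsilon,n)=x_{n-1}(\varepsilon,n)+\frac{x_{n-1}(\varepsilon,n)}{n-2}$ to obtain
$$x_n(\varepsilon,n)-x_{n-1}(\varepsilon,n)=(n-1)\,\alpha_{n-2}(\varepsilon,n)\,y_{n-2}(\varepsilon,n)+\frac{x_{n-1}(\varepsilon,n)-1}{n-2},$$
where $y_{n-2}(\varepsilon,n)=x_{n-1}(\varepsilon,n)/(n-1)-x_{n-2}(\varepsilon,n)/(n-2)$. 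It then suffices to lower-bound the two summands: the first is non-negative since $\alpha_{n-2}(\varepsilon,n)>0$ by Lemma \ref{claim:alphas}\ref{alpha-ii} and $y_{n-2}(\varepsilon,n)>0$, and the second is at least $1=x_1(\varepsilon,n)$ as soon as $x_{n-1}(\varepsilon,n)\ge n-1$.

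Both of these remaining facts follow from the strict monotonicity of $j\mapsto x_j(\varepsilon,n)/j$ on $\{1,\ldots,n\}$. I would first observe that for $j\le n$ the quantity $x_j(\varepsilon,n)/j$ coincides with $X_{j,\eta}(\varepsilon,n)$ and is in particular independent of $\eta$; hence condition \ref{mon1} established for $(X_{j,\eta}(\varepsilon,n))_{j\in\NN}$ in Lemma \ref{lem:extension-to-infinite} yields $x_{n-1}(\varepsilon,n)/(n-1)>x_{n-2}(\varepsilon,n)/(n-2)$, i.e.\ $y_{n-2}(\varepsilon,n)>0$, and $x_{n-1}(\varepsilon,n)/(n-1)>x_1(\varepsilon,n)/1=1$, i.e.\ $x_{n-1}(\varepsilon,n)>n-1$, so $(x_{n-1}(\varepsilon,n)-1)/(n-2)>1$. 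Plugging back in gives $x_n(\varepsilon,n)-x_{n-1}(\varepsilon,n)>1=x_1(\varepsilon,n)$. The case $n=2$ is handled separately and is immediate: $x_2(\varepsilon,n)=2+\alpha_0(\varepsilon,n)>2=2x_1(\varepsilon,n)$ since $\alpha_0(\varepsilon,n)>0$.

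The conceptual reason the computation works is that the claim is nothing more than the super-additivity $G_n\ge G_{n-1}+G_1$ of the optimal POT value: by Lemma \ref{lem:extension-to-infinite} together with Lemma \ref{lem:sequence-to-dist} there is a distribution $H\in\calF$ with $G_\ell(H)=\ell X_{\ell,\eta}(\varepsilon,n)=x_\ell(\varepsilon,n)$ for $\ell\le n$, and the recurrence $G_n(H)=\EE[X]+\EE[\max(G_{n-1}(H),(n-1)X)]\ge \EE[X]+G_{n-1}(H)=G_1(H)+G_{n-1}(H)$ closes the argument; I would note this as an alternative route. I do not expect a genuine obstacle here — the only point needing care is justifying the use of the monotonicity from Lemma \ref{lem:extension-to-infinite}, which rests on the remark that $x_j(\varepsilon,n)/j=X_{j,\eta}(\varepsilon,n)$ is $\eta$-free for $j\le n$ (so the lemma applies for any admissible $\eta$), or else on a short induction from \ref{opt-conditions} reproving that $(x_j(\varepsilon,n)/j)_{j\le n}$ is strictly increasing.
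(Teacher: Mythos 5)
Your proposal is correct, and your primary argument takes a genuinely different route from the paper's. The paper proves the claim by invoking Lemma \ref{lem:extension-to-infinite} together with Lemma \ref{lem:sequence-to-dist} to produce a distribution $H$ with $G_j(H)=x_j(\varepsilon,n)$ for $j\le n$, and then reads off super-additivity from the dynamic-programming recurrence $G_{j+1}(H)=G_1(H)+\EE[\max(G_j(H),jZ)]\ge G_1(H)+G_j(H)$ --- i.e.\ exactly the ``alternative route'' you sketch in your last paragraph. Your main argument instead stays entirely at the level of the first-order conditions \ref{opt-conditions}: the identity $x_n-x_{n-1}=(n-1)\alpha_{n-2}(\varepsilon,n)y_{n-2}(\varepsilon,n)+(x_{n-1}-1)/(n-2)$ is a correct rearrangement of the defining relation at $j=n$, and the two lower bounds you need ($y_{n-2}(\varepsilon,n)>0$ and $x_{n-1}(\varepsilon,n)>n-1$) do follow from the strict monotonicity of $j\mapsto x_j(\varepsilon,n)/j=X_{j,\eta}(\varepsilon,n)$, which is condition \ref{mon1} established in Lemma \ref{lem:extension-to-infinite} (and, as you note, is $\eta$-free for $j\le n$, so there is no dependence issue); the base case $n=2$ is handled correctly. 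What each approach buys: the paper's proof is shorter and conceptually transparent (the claim \emph{is} super-additivity of the optimal POT value), but it routes through the full machinery of Lemmas \ref{lem:sequence-to-dist} and \ref{lem:extension-to-infinite}; your computation needs only the monotonicity portion of Lemma \ref{lem:extension-to-infinite} (or, as you observe, a short self-contained induction on \ref{opt-conditions}), and it yields the slightly stronger strict inequality together with an explicit expression for the gap $x_n-x_{n-1}-x_1$. Both are valid; no gap.
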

We defer the proof of Claim \ref{claim:mon-x} to Appendix \ref{app:cvx-optimal}.
Then, using the above equality, we get that the optimal value of \ref{form:infinite} is upper bounded by
$$\inf_{\eta\in (\eta_0(\varepsilon,n),1)}V_{\eta}=\Upsilon_{n,\varepsilon}(y(\varepsilon,n)),$$
where the equality holds by Claim \ref{claim:mon-x} and since $P_{n}(\eta)/(1-\eta)$ is at most $n(n+1)/2$ for $\eta\in (0,1)$.
This concludes the first part of the proof.

We show next that the optimal value of \ref{form:infinite} is lower bounded by the optimal value of \ref{form:cvx}.
Consider any sequence $T=(T_{j})_{j\in \NN}$ satisfying \ref{mon1}-\ref{mon3} with $T_1=1$, i.e., a feasible sequence for the problem \ref{form:infinite}.
Suppose that $\mu_{n-1}(T)/(T_n-T_{n-1})<1$.
By Lemma \ref{lem:truncating-sequence}, there exists $\delta<1$ such that the objective value of $T$ in \ref{form:infinite} can be lower bounded as follows:
\begin{align}
&(1+\varepsilon)nT_n-\sum_{j=0}^{\infty}(T_{j+1}-T_{j})P_n\hspace{-.1cm}\left(\frac{\mu_j(T)}{T_{j+1}-T_j}\right)\notag\\
&=(1+\varepsilon)nT_n-\sum_{j=0}^{n-2}(T_{j+1}-T_{j})P_n\hspace{-.1cm}\left(\frac{\mu_j(T)}{T_{j+1}-T_j}\right)-\sum_{j=n-1}^{\infty}(T_{j+1}-T_{j})P_n\hspace{-.1cm}\left(\frac{\mu_j(T)}{T_{j+1}-T_j}\right)\notag\\
&\ge (1+\varepsilon)nT_n-\sum_{j=0}^{n-2}(T_{j+1}-T_{j})P_n\hspace{-.1cm}\left(\frac{\mu_j(T)}{T_{j+1}-T_j}\right)-\frac{P_n(\delta)}{1-\delta}\left(\frac{n}{n-1}T_n-T_{n-1}-\frac{T_1}{n-1}\right).\label{eq:cvx-proof-final}
\end{align}
\begin{claim}\label{claim:mon-T}
$\frac{n}{n-1}T_n-T_{n-1}-\frac{T_1}{n-1}\ge 0$.
\end{claim}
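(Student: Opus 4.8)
The plan is to reduce Claim \ref{claim:mon-T} to showing that an auxiliary "running average'' sequence dominates $T$ term by term. Define $a_0=T_1$ and $a_j=((j+1)T_{j+1}-T_1)/j$ for $j\ge 1$. After multiplying the desired inequality by $n-1$ and using $a_{n-1}=(nT_n-T_1)/(n-1)$, the claim $\tfrac{n}{n-1}T_n-T_{n-1}-\tfrac{T_1}{n-1}\ge 0$ is \emph{exactly} the statement $a_{n-1}\ge T_{n-1}$. So it suffices to prove $a_j\ge T_j$ for all $j$.

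The first step is a purely algebraic identity: $\mu_j(T)=a_{j+1}-a_j$ for every $j\ge 0$. For $j\ge 1$ this is immediate by expanding $a_{j+1}-a_j=\tfrac{j+2}{j+1}T_{j+2}-\tfrac{j+1}{j}T_{j+1}+T_1(\tfrac1j-\tfrac1{j+1})$ and recognizing $\tfrac{1}{j}-\tfrac1{j+1}=\tfrac1{j(j+1)}$; for $j=0$ one checks directly that $a_1-a_0=(2T_2-T_1)-T_1=2(T_2-T_1)=\mu_0(T)$.

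The second step combines this identity with the monotonicity hypotheses. Condition \ref{mon2} gives $\mu_j(T)\le T_{j+1}-T_j$, which by the identity reads $a_{j+1}-a_j\le T_{j+1}-T_j$; hence the sequence $(a_j-T_j)_{j\ge 0}$ is non-increasing. Next I would compute its limit by writing $a_j-T_j=(T_{j+1}-T_j)+(T_{j+1}-T_1)/j$ and invoking condition \ref{mon1}, which provides $T_{j+1}-T_j\to 0$ and $T_{j+1}/j\to 0$, so $a_j-T_j\to 0$. A non-increasing sequence with limit $0$ is everywhere non-negative, so $a_j\ge T_j$ for all $j$, and in particular $a_{n-1}\ge T_{n-1}$, which is the claim.

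I do not expect a genuine obstacle: the only care points are the $j=0$ base case of the identity $\mu_j(T)=a_{j+1}-a_j$ and making sure the limit computation uses both halves of \ref{mon1}. As a shortcut one could instead invoke Lemma \ref{lem:sequence-to-dist} to obtain a distribution $H$ with $\tau_j(H)=T_j$ and then apply identity \eqref{recur:ident} to $H$ with $j=n-1$, giving $(nT_n-T_1)/(n-1)=\EE[\max(T_{n-1},X)]\ge T_{n-1}$ at once; but the self-contained argument above keeps the proof elementary and avoids re-invoking that construction.
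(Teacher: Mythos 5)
Your argument is correct, but it takes a genuinely different route from the paper. The paper proves the claim by invoking Lemma \ref{lem:sequence-to-dist} to produce a distribution $H$ with $T_j=G_j(H)/j$ and then reading the inequality off the dynamic-programming recursion, $nT_n=G_1(H)+\EE[\max(G_{n-1}(H),(n-1)Z)]\ge G_1(H)+G_{n-1}(H)$ --- i.e.\ exactly the ``shortcut'' you mention in your last paragraph. Your main argument instead stays entirely at the level of the sequence: the identity $\mu_j(T)=a_{j+1}-a_j$ with $a_j=((j+1)T_{j+1}-T_1)/j$ is the deterministic counterpart of the paper's identity \eqref{recur:ident} (where $a_j$ plays the role of $\EE[\max(\tau_j,X)]$), and you replace the trivial probabilistic bound $\EE[\max(\tau_j,X)]\ge \tau_j$ by observing that $a_j-T_j$ is non-increasing (via \ref{mon2}) and tends to $0$ (via both limits in \ref{mon1}), hence is non-negative. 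I checked the algebra, including the $j=0$ base case and the limit computation $a_j-T_j=(T_{j+1}-T_j)+(T_{j+1}-T_1)/j\to 0$; both are fine, and the conclusion $a_{n-1}\ge T_{n-1}$ is exactly the claim. What your version buys is self-containedness and weaker hypotheses: it uses only \ref{mon1}--\ref{mon2} and avoids re-running the construction of $H$ (which also needs \ref{mon3}); what the paper's version buys is brevity and uniformity, since the same one-line recursion argument is reused verbatim for Claim \ref{claim:mon-x}.
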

We defer the proof of Claim \ref{claim:mon-T} to Appendix \ref{app:cvx-optimal}.
For each $j\in \{1,\ldots,n-1\}$, let $z_j=T_{j+1}-T_j$.
In particular, $\calT_1=1$ since $T_1=1$.
Then, from Claim \ref{claim:mon-T}, and the fact that $P_n(\delta)/(1-\delta)\le n(n+1)/2$ for $\delta \in (0,1)$, we get that \eqref{eq:cvx-proof-final} can be lower bounded by
\begin{align*}
& (1+\varepsilon)nT_n-\sum_{j=0}^{n-2}(T_{j+1}-T_{j})P_n\hspace{-.1cm}\left(\frac{\mu_j(T)}{T_{j+1}-T_j}\right)-\frac{n(n+1)}{2}\left(\frac{n}{n-1}T_n-T_{n-1}-\frac{T_1}{n-1}\right)\\
&=\Upsilon_{n,\varepsilon}(z_1,\ldots,z_n)\ge \min_{y\in K_n}\Upsilon_{n,\varepsilon}(y),
\end{align*}
which implies that the optimal value of \ref{form:infinite} is lower-bounded by the optimal value of \ref{form:cvx}.
Observe that when $\mu_{n-1}(T)/(T_n-T_{n-1})=1$, condition \ref{mon3} implies that $\mu_{j}(T)/(T_{j+1}-T_{j})=1$ for every $j\ge n-1$, and therefore the second summation in the chain \eqref{eq:cvx-proof-final} is equal to zero.
Then, we can use the same arguments as before to lower-bound the objective of $T$ by the optimal value of \ref{form:cvx}.
This concludes the proof of the theorem.
\end{proof}

\begin{proof}[Proof of Lemma \ref{claim:alphas}] 
We first focus on proving that there is $\varepsilon'>0$ such that for any $\varepsilon\leq \varepsilon'$,~\ref{opt-alphas} has a solution $\alpha(\varepsilon,n)=(\alpha_0(\varepsilon,n),\ldots,\alpha_{n-2}(\varepsilon,n))$ such that $0<\alpha_{0}(\varepsilon,n)< \cdots < \alpha_{n-2}(\varepsilon,n)< 1$. Next, we show that the largest $\varepsilon'>0$ such that the above holds, denoted by $\varepsilon_n'$, holds that $\varepsilon_n'\geq \varepsilon_n$. Furthermore, $\alpha(\varepsilon,n)$ will be a unique non-negative solution to~\ref{opt-alphas} due to the monotonicity of $P_n'$ and $\beta_n$ in $\RR_+$. From here, we will obtain immediately~\ref{alpha-i} and~\ref{alpha-ii}. 

Before we go into the main proof, we prove two properties of non-negative solutions to~\ref{opt-alphas}. First, we show that $\alpha_j(\varepsilon,n) < \alpha_{j+1}(\varepsilon,n)$ for $j\in \{0,\ldots,n-3 \}$ and $\alpha_{n-2}(\varepsilon,n)< 1$. Second, we show that $\alpha_j(\varepsilon,n)$ is differentiable and decreasing in $\varepsilon$, for every $j\in \{0,\ldots,n-2\}$.

\noindent{\bf The $\alpha_j$'s are monotone in $j$.} Note that the function $P_{n}'(t)=-\sum_{\ell=1}^n \ell t^{\ell-1}$ is decreasing for $t\geq 0$, $P_{n}'(0)=-1$ and $\lim_{t\to 1}P_n'(t)=-n(n+1)/2$. 
Using the first equation in~\ref{opt-alphas}, we have
$P_n'(\alpha_{n-2}(\varepsilon,n)) = (n-1)(1+\varepsilon)- n(n+1)/2 >P_n'(1)$,
which implies that $\alpha_{n-2}(\varepsilon,n)<1$.
From the second equation in~\ref{opt-alphas}, we have
\[
P_{n}'(\alpha_{n-3}(\varepsilon,n)) = P_{n}'(\alpha_{n-2}(\varepsilon,n)) + \frac{n-2}{n-1}\left(  
 \frac{n(n+1)}{2} - \beta_n(\alpha_{n-2}(\varepsilon,n)) \right).
\]
The function $\beta_n(t)$ is strictly increasing in $(0,1)$, with $\beta_n(0)=n$ and $\lim_{t\to 1}\beta_n(t)=n(n+1)/2$. Since we have $P_{n}'(\alpha_{n-3}(\varepsilon,n))> P_{n}'(\alpha_{n-2}(\varepsilon,n))$, this implies that $\alpha_{n-3}(\varepsilon,n)< \alpha_{n-2}(\varepsilon,n)$. For $j\in \{0,1,\ldots,n-4\}$, observe that $P_n'(\alpha_{j}(\varepsilon,n))-P_n'(\alpha_{j+1}(\varepsilon,n))>0$ if and only if $\beta_n(\alpha_{j+2}(\varepsilon,n))-\beta_n(\alpha_{j+1}(\varepsilon,n))>0$.
Since $\beta_n$ is increasing, the latter holds inductively, which together with $P_n'$ being decreasing implies that $\alpha_j(\varepsilon,n)<\alpha_{j+1}(\varepsilon,n)$.
Second, we show that the $\alpha_j$ satisfying~\ref{opt-alphas} must be differentiable and monotonic in $\varepsilon$. 

\noindent{\bf Differentiability and monotonicity in $\varepsilon$.} Assume that the $\alpha_j$'s satisfying~\ref{opt-alphas} exist. Now, inductively for $j=n-2,\ldots,0$, we prove that $\alpha_j$ is differentiable and decreasing. From the first equation in~\ref{opt-alphas}, we observe that $\alpha_{n-2}(\varepsilon,n)$ is differentiable. This is because $P_n'$ is invertible, strictly monotone with differentiable inverse. Now, by deriving the same equation in $\varepsilon$, we obtain
\[
P_n''(\alpha_{n-2}(\varepsilon,n))\frac{\partial\,}{\partial \varepsilon}\alpha_{n-2}(\varepsilon,n)= n-1.
\]
Since $P_n''(\cdot)<0$, we deduce that $\partial \alpha_{n-2}(\varepsilon,n)/\partial\varepsilon <0$ and so $\alpha_{n-2}(\varepsilon,n)$ is decreasing in $\varepsilon$. We proceed similarly for $\alpha_{n-3}(\varepsilon,n)$. From the second equation in~\ref{opt-alphas}, we obtain immediately the differentiability of $\alpha_{n-3}(\varepsilon,n)$ in $\varepsilon$, and deriving the same equation gives us
\[
P_n''(\alpha_{n-3}(\varepsilon,n))\frac{\partial\,}{\partial \varepsilon}\alpha_{n-3}(\varepsilon,n) = \left( 1 + \left( \frac{n-2}{n-1} \right)\alpha_{n-2}(\varepsilon,n) \right) P_n''(\alpha_{n-2}(\varepsilon,n)) \frac{\partial \,}{\partial\varepsilon}\alpha_{n-2}(\varepsilon,n).
\]
Given that $\alpha_{n-2}(\varepsilon,n)$ is decreasing in $\varepsilon$, we have that the right-hand side of the equation is a positive term and from where we deduce that $\alpha_{n-3}(\varepsilon,n)$ is differentiable and decreasing in $\varepsilon$. Inductively, we assume that the functions $\alpha_{j+1}(\varepsilon,n),\ldots,\alpha_{n-2}(\varepsilon,n)$ are differentiable and decreasing in $\varepsilon$. We now prove that $\alpha_{j}(\varepsilon,n)$ is differentiable and decreasing in $\varepsilon$.

Using the equations linking the $\alpha_j$'s in~\ref{opt-alphas}, we can deduce the following recursion
\begin{align}
\left(\frac{j+2}{j+1} \right) P_n'(\alpha_j) = \sum_{i=j+1}^{n-3} \frac{P_n'(\alpha_i)}{i(i+1)} + \frac{P_n'(\alpha_{n-2})}{n-2}  - \beta_n(\alpha_{j+1})+ (n-1)(1+\varepsilon) \label{eq:recursion_cvx_opt}
\end{align}
for $j=0,\ldots,n-4$. The proof of this recursion is a simple rearrangement of~\ref{opt-alphas} and we skip it for brevity. (See also Appendix~\ref{app:asymptotic-cvx} for a similar deduction in the context of the asymptotic analysis of~\ref{opt-alphas}.)

By induction, the right hand-side of Equation~\eqref{eq:recursion_cvx_opt} is differentiable in $\varepsilon$; hence, $\alpha_j$ is differentiable in $\varepsilon$. Then, by deriving~\eqref{eq:recursion_cvx_opt} in $\varepsilon$, we obtain
\begin{align*}
\left( \frac{j+2}{j+1} \right)P_n''(\alpha_j)\frac{\partial \,}{\partial\varepsilon}\alpha_j = \sum_{i=j+1}^{n-3} \frac{P_n''(\alpha_i)}{i(i+1)}\frac{\partial }{\partial \varepsilon}\alpha_i + \frac{P_n''(\alpha_{n-2})}{n-2}\frac{\partial\,}{\partial\varepsilon}\alpha_{n-2} +\alpha_{j+1}P_n''(\alpha_{j+1})\frac{\partial\, }{\partial \varepsilon}\alpha_{j+1} + n-1.
\end{align*}
Again, by induction, the right-hand side of this equation is positive. Putting everything together, we conclude that $\alpha_j$ is decreasing in $\varepsilon$. Also, from this last equation, it is possible to show that there are constants $a_{j,n}>0$, $b_{j,n}$ such that $\alpha_j(\varepsilon,n)\leq -\varepsilon \cdot a_j + b_j$ for all $j$. This, in particular, shows that $\alpha_0(\varepsilon,n)$ is $0$ at some point. (Recall that up to this point, we have assumed that $\alpha_j(\varepsilon,n)>0$, so the system has a solution.)

We now move to prove the existence of a non-negative solution to~\ref{opt-alphas}. We do this by showing that for $\varepsilon=0$ there is a strictly positive solution to \ref{opt-alphas} ($\varepsilon=0$)---which in turn will be unique. Due to the differentiability of $\alpha$, we will be able to find a neighborhood $\varepsilon'>0$ where~\ref{opt-alphas} have a unique non-negative solution for $\varepsilon\leq \varepsilon'$.

\noindent{\bf For $\varepsilon=0$,~\ref{opt-alphas} has a strictly positive solution.} Fix $\varepsilon=0$. Using the first equation of~\ref{opt-alphas}, we have
\[
P_n'(\alpha_{n-2}(0,n)) = (n-1)-\frac{n(n+1)}{2} < -1 = P_n'(0)
\]
Since $P_n'$ is monotonic for $t\geq 0$, we obtain that there is a unique $\alpha_{n-2}(0,n)$ satisfying the equation. Furthermore, from the inequality, we obtain $\alpha_{n-2}(0,n)>0$. Now, before we proceed, note that for $t\in (0,1]$, we have $\beta_n(t) = n + \sum_{\ell=1}^n (\ell-1)t^\ell >n$. Hence, using the second equation in~\ref{opt-alphas}, we get
\begin{align*}
P_n'(\alpha_{n-3}(0,n))&=P_{n}'(\alpha_{n-2}(0,n)) + \frac{n-2}{n-1}\left( \frac{n(n+1)}{2} - \beta_n(\alpha_{n-2}(0,n))  \right) \\
& = (n-1) - \frac{n(n+1)}{2} + \left( 1 - \frac{1}{n-2} \right)\frac{n(n+1)}{2} - \frac{n-2}{n-1}\beta_n(\alpha_{n-2}(0,n)) \\
& < n-1 - \frac{n(n+1)}{2(n-2)} -\left( 1 - \frac{1}{n-2} \right) n \\
& = - 1 - \frac{n(n+1)/2-n}{n-2} \leq -1,
\end{align*}
where in the first inequality we used that $\beta_n(\alpha_{n-2}(0,n))>n$. From here, we obtain that there is $\alpha_{n-3}(0,n)>0$ satisfying the second equation in~\ref{opt-alphas}. Assume inductively that we have found $\alpha_{j+1}(0,n),\ldots,\alpha_{n-2}(0,n)>0$. Now, using~\eqref{eq:recursion_cvx_opt}, we have
\begin{align*}
\left(  \frac{j+2}{j+1} \right)P_n'(\alpha_j(0,n))& < -\sum_{i=j+1}^{n-3}\frac{1}{i(i+1)} - \frac{1}{n-2} -n +(n-1) = - \frac{1}{j+1} - 1 = - \frac{j+2}{j+1}
\end{align*}
where in the first inequality, we used the inductive hypothesis and so $P_n'(\alpha_i(0,n))<-1$ for $i=j+1,\ldots,n-2$, and $\beta_n(\alpha_{j+1}(0,n))>n$. From here, we obtain $P_n'(\alpha_{j}(0,n))<-1$ and so there is $\alpha_j(0,n)>0$ such that the equalities in~\ref{opt-alphas} indexed by $j,j+1,\ldots$ are satisfied.
The previous results all together, imply that there is $0<\varepsilon_n'$ such that there is a unique solution $\alpha(\varepsilon,n)$ to~\ref{opt-alphas} satisfying $0<\alpha_{0}(\varepsilon,n)< \alpha_{1}(\varepsilon,n)< \cdots < \alpha_{n-2}(\varepsilon,n)<1$ for all $\varepsilon <\varepsilon_n'$, and $\alpha_0(\varepsilon_n',n)=0$.

Now, we focus on proving that $\varepsilon_n'\geq \varepsilon_n$. First, we show that for any $\varepsilon\leq \varepsilon_n'$, $y$ constructed from $\alpha$ as in~\ref{opt-conditions}, is an optimal solution of $\Upsilon_{n,\varepsilon}(y)$ in $K_n$. Then, we use this fact to show that $\Upsilon_{n,\varepsilon_n'}(y(\varepsilon_n',n))  >0$; hence, $\varepsilon_n\leq \varepsilon_n'$ by minimality of $\varepsilon_n$.

\noindent{\bf For $\varepsilon < \varepsilon_n'$, $y(\varepsilon,n)$ is in the interior of $ K_n$.} Let $\varepsilon<\varepsilon'$. Let $y(\varepsilon,n)=(y_1,\ldots,y_{n-1})$ obtained from the $\alpha_j$'s. Then, upon rearranging terms, we obtain
\begin{align*}
y_1(\varepsilon,n) & = \alpha_0(\varepsilon,n)/2,\\
y_{j+1}(\varepsilon,n) & = \alpha_j(\varepsilon,n)\left( \frac{j+1}{j+2} \right)y_j(\varepsilon,n) + \frac{1}{j(j+2)}\sum_{k=1}^j y_k(\varepsilon,n) , & \text{for } j\in \{1,\ldots,n-2\}.
\end{align*}
By applying an induction in $j$, we have $y(\varepsilon,n) > 0$. Furthermore, $A_j(y_1,\ldots,y_{j+1})= \alpha_j(\varepsilon,n) y_j(\varepsilon,n) > 0$; hence, $y(\varepsilon,n)\in K_n$. For notational convenience, we set $y_0=1$.

\noindent{\bf For $\varepsilon\leq \varepsilon_n'$, we have $\nabla \Upsilon_{n,\varepsilon}(y(\varepsilon,n)) = 0$.} We note that~\ref{opt-conditions} is just the first-order conditions of the problem $\min_{y\geq 0} \Upsilon_{n,\varepsilon}(y)$ written in terms of the $x$ variables. From here, $\nabla \Upsilon_{n,\varepsilon}(y(\varepsilon,n))=0$. Due to the convexity of $\Upsilon_{n,\varepsilon}$ in $K_n$, we have that $y(\varepsilon,n)\in K_n$ is the unique minimizer of $\Upsilon_{n,\varepsilon}$ in $K_n$.
Now, if $\varepsilon_n'\geq 1$, then, we are done with the proof of~\ref{alpha-i},~\ref{alpha-ii} and~\ref{alpha-iii} as $\varepsilon_n\leq 1$. Let's assume that $\varepsilon_n'<1$. To show that $\varepsilon_n'\geq \varepsilon_n$, it is enough to show that $\Upsilon_{n,\varepsilon_n'}(y(\varepsilon_n',n))=\min_{y \in K_n} \Upsilon_{n,\varepsilon_n'}(y) > 0$, because from the definition of $\varepsilon_n$ we will obtain that $\varepsilon_n\leq \varepsilon_n'$.

Note that $\Upsilon_{n,\varepsilon}(y)$ is linear in $\varepsilon$ for any fixed $y$, and so $v(\varepsilon)=\min_{y \in K_n} \Upsilon_{n,\varepsilon}(y)$ is concave. In particular, $v(\varepsilon)$ is continuous. Now, let $\varepsilon^1\leq \varepsilon^2 \leq \cdots < \varepsilon_n'$ be a sequence of increasing values such that  $\varepsilon^k \to \varepsilon_n'$ as $k\to \infty$ and $\alpha_0(\varepsilon^k,n)=1/k$. The sequence $\{\varepsilon^k \}_{k\geq 1}$ exists since $\alpha_0(\cdot,n)$ is continuous. Note that $\alpha_0(\varepsilon^k,n)\to 0$ as $k\to \infty$. We claim that there is $c>0$ such that $\alpha_1(\varepsilon^k,n)\geq c$ for any $k$. By contradiction, if $\alpha_{1}(\varepsilon^k,n)\to \alpha_1(\varepsilon_n',n)=0$, then using~\ref{opt-alphas}, we can obtain that $\alpha_j(\varepsilon_n',n)=0$ for all $j$. However, this is impossible for $j=n-2$ because $P_n'(\alpha(\varepsilon_n',n))=(n-1)(1+\varepsilon_n)-n(n+1)/2<-1$ as $\varepsilon_n'<1$. Hence, $\alpha_1(\varepsilon_n',n)\geq c>0$ for some constant $c$.
The following claim states that the $y_j$'s are linearly dependent on $\alpha_0$ for $j\geq 1$. We defer its proof to Appendix~\ref{app:cvx-optimal}.

\begin{claim}\label{claim:y_linearly_dep}
For $j\geq1$, there are $\hat{y}_j(\varepsilon,n)>0$ such that $y_j(\varepsilon,n)=\alpha_0(\varepsilon,n)\hat{y}_j(\varepsilon,n)$.
\end{claim}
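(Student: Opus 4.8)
The plan is to prove Claim~\ref{claim:y_linearly_dep} by a direct induction on $j$, starting from the recursion for $y(\varepsilon,n)$ written in the $y$-variables that was derived just above in this proof, namely $y_1(\varepsilon,n)=\alpha_0(\varepsilon,n)/2$ and, for $j\in\{1,\ldots,n-2\}$,
\[
y_{j+1}(\varepsilon,n)=\alpha_j(\varepsilon,n)\left(\frac{j+1}{j+2}\right)y_j(\varepsilon,n)+\frac{1}{j(j+2)}\sum_{k=1}^{j}y_k(\varepsilon,n).
\]
Define $\hat y_1(\varepsilon,n)=1/2$ and, for $j\in\{1,\ldots,n-2\}$,
\[
\hat y_{j+1}(\varepsilon,n)=\alpha_j(\varepsilon,n)\left(\frac{j+1}{j+2}\right)\hat y_j(\varepsilon,n)+\frac{1}{j(j+2)}\sum_{k=1}^{j}\hat y_k(\varepsilon,n),
\]
that is, the same recursion with the common factor $\alpha_0(\varepsilon,n)$ stripped from the initial term. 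I would then show, by induction on $j$, that $y_j(\varepsilon,n)=\alpha_0(\varepsilon,n)\,\hat y_j(\varepsilon,n)$ for every $j\ge 1$. The base case $j=1$ is immediate from $y_1(\varepsilon,n)=\alpha_0(\varepsilon,n)/2$; for the inductive step, assuming $y_k(\varepsilon,n)=\alpha_0(\varepsilon,n)\hat y_k(\varepsilon,n)$ for all $k\le j$, one substitutes into the displayed recursion for $y_{j+1}(\varepsilon,n)$ and factors out $\alpha_0(\varepsilon,n)$, obtaining exactly $\alpha_0(\varepsilon,n)\hat y_{j+1}(\varepsilon,n)$.

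It remains to verify that $\hat y_j(\varepsilon,n)>0$ for every $j\ge 1$, which is again a one-line induction: $\hat y_1(\varepsilon,n)=1/2>0$, and in the inductive step each summand on the right-hand side is a product of strictly positive quantities, since the coefficients $(j+1)/(j+2)$ and $1/(j(j+2))$ are positive, the $\hat y_k(\varepsilon,n)$ with $k\le j$ are positive by the inductive hypothesis, and $\alpha_j(\varepsilon,n)>0$. For $\varepsilon<\varepsilon'_n$, the positivity of every $\alpha_j(\varepsilon,n)$ is part of Lemma~\ref{claim:alphas}\ref{alpha-ii}; at the endpoint $\varepsilon=\varepsilon'_n$, where $\alpha_0(\varepsilon'_n,n)=0$, one still has $\alpha_j(\varepsilon'_n,n)\ge\alpha_1(\varepsilon'_n,n)>0$ for all $j\ge 1$, using the monotonicity of the $\alpha_j$'s in $j$ together with the bound $\alpha_1(\varepsilon'_n,n)\ge c>0$ established just above, so the induction still goes through in this case as well.

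The only conceptual point — and the reason the statement is useful rather than an obstacle — is that $\hat y_{j+1}(\varepsilon,n)$ depends only on $\alpha_1(\varepsilon,n),\ldots,\alpha_j(\varepsilon,n)$ (through $\hat y_1,\ldots,\hat y_j$) and not on $\alpha_0(\varepsilon,n)$ itself; this is what makes the factorization $y_j=\alpha_0\hat y_j$ meaningful and, in particular, keeps $\hat y_j(\varepsilon,n)$ bounded away from $0$ as $\alpha_0(\varepsilon,n)\to 0$. I do not expect any genuine difficulty here: once the recursion from \ref{opt-conditions} is transcribed into the $y$-variables, the claim is a routine double induction (one for the identity, one for positivity).
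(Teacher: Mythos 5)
Your proof is correct and follows essentially the same route as the paper: both start from the recursion $y_1=\alpha_0/2$, $y_{j+1}=\alpha_j\frac{j+1}{j+2}y_j+\frac{1}{j(j+2)}\sum_{k\le j}y_k$, define $\hat y_j$ by the same recursion with the common factor stripped, and factor out $\alpha_0(\varepsilon,n)$ by induction. Your added verification of $\hat y_j>0$ (which the paper leaves implicit) is fine, and is in fact even easier than you state, since the term $\frac{1}{j(j+2)}\sum_{k\le j}\hat y_k$ alone is strictly positive whenever $\alpha_j\ge 0$.
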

From the claim, we obtain immediately that $n>\hat{y}_j(\varepsilon_n',n)\geq c'>0$ for some constant $c'$. Now,
\begin{align*}
\Upsilon_{n,\varepsilon^k}(y(\varepsilon^k,n))& = (1+\varepsilon^k)n\sum_{j=0}^{n-1}y_j(\varepsilon^k,n) - \sum_{j=0}^{n-2}y_j(\varepsilon^k,n)P_n(A_j(y(\varepsilon^k,n))/y_j(\varepsilon^k,n))\\
& \quad - \frac{n(n+1)}{2}\left(  \frac{n}{n-1}\sum_{j=0}^{n-1}y_j(\varepsilon^k,n) - \sum_{j=0}^{n-2}y_j(\varepsilon^k,n) - \frac{y_0(\varepsilon^k,n)}{n-1} \right) \\
& = (1+\varepsilon^k)n - P_n(2y_1(\varepsilon^k,n)) - \frac{1}{k}\sum_{j=1}^{n-2} \hat{y}_j(\varepsilon^k,n)P_n\left(A_j(y(\varepsilon^k,n))/y_j(\varepsilon^k,n)) \right)\\
& \quad - \frac{1}{k}\frac{n(n+1)}{2}\left( \frac{n}{n-1} \sum_{j=1}^{n-1} \hat{y}_j(\varepsilon^k,n) -\sum_{j=1}^{n-2} \hat{y}_j(\varepsilon^k,n)  \right)
\end{align*}
where in the second line we used that $y_0(\varepsilon,n)=1$ and the claim. Now, note that $$
\frac{A_j(y(\varepsilon^k,n))}{y_j(\varepsilon^k,n)}=\frac{j+2}{j+1}\frac{\hat{y}_{j+1}(\varepsilon^k,n)}{\hat{y}_j(\varepsilon^k,n)} -\sum_{\ell=1}^j \frac{\hat{y}_\ell(\varepsilon^k,n)}{\hat{y}_j(\varepsilon^k,n)}$$
and for any $k$, this value remains non-negative and bounded. Hence, taking the limit in $k$, we obtain $\Upsilon_{n,\varepsilon^k}(y(\varepsilon^k,n))\to (1+\varepsilon_n')n - P_n(0)=\varepsilon_n' n >0$. From here, we obtain that $\varepsilon_n\leq \varepsilon_n'$. Now,~\ref{alpha-i},~\ref{alpha-ii}, and~\ref{alpha-iii} follow immediately.
\end{proof}

\section{Random Order Model}\label{sec:RO_arrival}

In this section, we examine the more restrictive setting where no informational values are provided. There are $n$ unknown values $u_1 > u_2 > \cdots > u_n\geq 0$. The optimal offline value is then $\OPT=n u_1$. 
We observe that no algorithm can obtain a constant approximation ratio if the order in which the values are presented is adversarial.   
Indeed, consider $u_i=n^{n-i}$ for $i=1,\ldots,n$. Note that $\OPT = n^n$. However, if we present the values sequentially to any sequential algorithm in the order $u_n, \ldots, u_1$, no algorithm can obtain a value larger than $n^{n-1} + n \cdot n^{n-2}= 2 n^{n-1}$. 

In what follows, the decision-maker observes the values sequentially according to a uniformly chosen random order. 
%\subsection{The Sample-then-Select-Forever Algorithm}
In the classic secretary problem, the optimal algorithm has the structure of sampling the first $\theta n$ observed value, and then in the remaining values select the first one that is better than the ones observed in the sampling phase. We combine this algorithm with the structure of the optimal policy for the POT problem to design the Sample-the-Select-Forever algorithm. In a nutshell, we divide the algorithm into two phases: the sampling phase and then the exploitation phase. In the first phase, the algorithm sample the first $\tau = \lfloor \theta n\rfloor$ observed values and record the maximum value observed, say $u^*$. In the remaining $t=\tau+1,\ldots,n$ values, the algorithm accepts the first value that surpasses $u^*$ for the next $n-t+1$ units of time. A formal description of the algorithm is presented in Algorithm~\ref{alg:S-t-S}. We denote the value obtained by the Sample-then-Select-Forever algorithm via $\ALG_{\theta}$.

{\begin{algorithm}\small
\caption{Sample-then-Select-Forever}\label{alg:S-t-S}
\begin{algorithmic}[1]
\State $u^*=0$
\For {$t=1,\ldots, \tau$}
\State Let $v_t$ be the value observed at $t$. 
\State Set $u^* \leftarrow \max\{ u^*, v_t \}$.
\EndFor
\For {$t=\tau+1,\ldots,n$}
\State Let $v_t$ be the valued observed at $t$.
\If {$v_t \geq u^*$} \State Accept $v_t$ for the remaining $n-t+1$ units of time.
\Else \State Accept $v_t$ for only $1$ unit of time.
\EndIf
\EndFor
\end{algorithmic}
\end{algorithm}}

\begin{lemma}
    For any $\theta\in (0,1)$, the approximation ratio of the sample-then-select algorithm is asymptotically at least $\theta (\theta - 1 - \ln \theta)$.
\end{lemma}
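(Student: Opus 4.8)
The plan is to lower-bound $\EE[\ALG_\theta]$ using only the event that the largest value $u_1$ is the one accepted for the remaining units of time during the exploitation phase, discarding every other (nonnegative) contribution. Since at most one value is ever accepted forever, I would start from
$$\EE[\ALG_\theta]\ \ge\ \sum_{t=\tau+1}^{n}(n-t+1)\,u_1\,\Pr\big[u_1\text{ is accepted forever at time }t\big].$$

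The first substantive step is to identify the event $\{u_1\text{ accepted forever at }t\}$. Writing $u^\ast$ for the maximum of the values in positions $1,\dots,\tau$, a value is accepted forever precisely at the first position $t>\tau$ whose value is at least $u^\ast$. If $u_1$ occupies position $t$ (which forces $t>\tau$), then $v_t=u_1\ge u^\ast$ automatically, so $u_1$ is accepted forever at $t$ if and only if none of the positions $\tau+1,\dots,t-1$ carries a value $\ge u^\ast$; by distinctness of the $u_i$ this is exactly the statement that the maximum among positions $1,\dots,t-1$ is attained inside the first $\tau$ positions. I expect this characterization to be the one place that needs care.

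Next I would evaluate the probability purely by symmetry of the uniform random order: $\Pr[u_1\text{ at position }t]=1/n$, and conditionally on that, the remaining $n-1$ values are placed uniformly, so the maximum of positions $1,\dots,t-1$ is equally likely to lie in any one of those $t-1$ positions, giving conditional probability $\tau/(t-1)$ (note $\tau\le t-1$ for every $t$ in the exploitation phase). Hence $\Pr[u_1\text{ accepted forever at }t]=\tfrac{1}{n}\cdot\tfrac{\tau}{t-1}$, and therefore $\EE[\ALG_\theta]\ge u_1\cdot\tfrac{\tau}{n}\sum_{t=\tau+1}^{n}\tfrac{n-t+1}{t-1}$. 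Dividing by $\OPT=n u_1$ gives an instance-independent lower bound on the approximation ratio, namely $\tfrac{\tau}{n^2}\sum_{t=\tau+1}^{n}\tfrac{n-t+1}{t-1}$.

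Finally, the asymptotics are routine: after the substitution $s=t-1$ the sum equals $n(H_{n-1}-H_{\tau-1})-(n-\tau)$ in terms of harmonic numbers, and with $\tau=\lfloor\theta n\rfloor$ we have $H_{n-1}-H_{\tau-1}\to-\ln\theta$ and $\tau/n\to\theta$, so the lower bound converges to $\theta(-\ln\theta-1+\theta)=\theta(\theta-1-\ln\theta)$, which is the claimed guarantee; since the bound holds for every $n$, the $\liminf$ of the approximation ratio is at least this value. If a sharper constant were wanted one could also retain the contributions of $u_2,u_3,\dots$, where $u_i$ enters with the extra hypergeometric factor $\binom{n-t}{i-1}/\binom{n-1}{i-1}$ coming from the requirement that $u_1,\dots,u_{i-1}$ all land after position $t$, but this refinement is not needed for the stated bound.
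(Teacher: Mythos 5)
Your proposal is correct and follows essentially the same route as the paper: lower-bound the algorithm by the event that $u_1$ is accepted forever at time $t$, compute its probability as $\tfrac{1}{n}\cdot\tfrac{\tau}{t-1}$ via the uniform location of the prefix maximum, and pass to the limit. The only (immaterial) difference is that you evaluate the resulting sum via harmonic numbers while the paper bounds it below by the integral $\int_{\tau/n}^{1}\frac{1-x}{x}\,\mathrm{d}x$, which additionally yields a finite-$n$ bound.
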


\begin{proof}
    Consider the following random variables, $X_t$ is the index of the value observed at time $t$, $Y_t$ is the relative position of the value at time $t$ among the first $t$ observed values. Hence, $\Pr(X_t=\ell)=1/n$ for any $\ell=1,\ldots,n$ and $\Pr(Y_{t+1}=k\mid Y_t=\ell)=1/(t+1)$ for any $k\in [t+1]$ and $\ell\in [t]$. Then, the expected value of the algorithm can be lower bounded by the chance of picking the maximum value $u_1$; hence,
    \begin{align*}
        \frac{\ALG_{\theta}}{\OPT} & \geq  \frac{1}{n u_1}\sum_{t=\tau+1}^n (n-t+1)u_1 \Pr(X_t=1\mid Y_t=1) \Pr(Y_t=1, Y_{t'}> 1 \text{ for all } t'\in \{\tau+1,\ldots,t\})  \\
        & = \sum_{t=\tau+1}^n \left( 1 - \frac{t-1}{n} \right) \frac{t}{n}\frac{\tau}{ t(t-1)} = \frac{\tau}{n} \sum_{t=\tau+1}^n \frac{(1-(t-1)/n)}{(t-1)/n} \frac{1}{n}
    \end{align*}
    The function $x\mapsto (1-x)/x$ is decreasing; hence $(1-(t-1)/n)/((t-1)/n) \geq (1-x)/x $ for $x\in [(t-1)/n,t/n]$. From here, we obtain
    \[
    \frac{\ALG_\theta}{\OPT} \geq \frac{\tau}{n} \sum_{t=\tau+1}^n \int_{(t-1)/n}^{t/n} \frac{(1-x)}{x}\mathrm{d}x = \frac{\tau}{n} \int_{\tau/n}^1 \frac{(1-x)}{x}\, \mathrm{d}x = \frac{\tau}{n}\left( \frac{\tau}{n} - 1 - \ln \left(\frac{\tau}{n} \right) \right).
    \]
    Note this last bound is independent of $u_1,\ldots,u_n$. For $n\to \infty$, we obtain the desired result.
\end{proof}

The function $\theta \in [0,1]\mapsto \theta(\theta-1-\ln \theta)$ attains its maximum in $\theta^*=-(1/2) W_0(-2/e^2)\approx 0.2039$, where $W_0$ is the principal branch of the Lambert function\footnote{The principal branch of the Lambert function $W_0$ satisfies that $we^w = z$ iff $w=W_0(z)$ for $z\geq 0$.}, with a value of $\approx 0.16190$.
See Proposition~\ref{prop:maximum_lambert} in Appendix~\ref{app:RO_arrival} for details.

We can show that our analysis is tight. We can construct an instance where the optimal policy is almost like an ordinal policy.\footnote{That is, the policy makes decision based on comparisons rather than the actual values.} Specifically, up to a small vanishing error in $n$, we can show that the best ordinal policy is as good as the best online policy that observes the values. With this, we can compute the optimal value of an ordinal policy via dynamic programming. Furthermore, this dynamic program has an explicit solution with same structure as Sample-then-Select-Forever. This result, together with the vanishing error in $n$, shows that Sample-then-Select-Forever is optimal under the optimal choice of $\theta$. We defer the details to Appendix~\ref{app:Upper_bound_secre}.

\bibliographystyle{abbrvnat}
\bibliography{references}

\appendix

\section{Proofs Deferred from Section~\ref{sec:small_thresholds}}\label{app:small_thresholds}

\begin{proof}[Proof of Proposition~\ref{prop:smooth_instances}]
    We denote by $\mathrm{val}(\pi,F,n)$ the value obtained by running policy $\pi$ in an instance with cumulative distribution $F$ and $n$ periods. We have that for any $F$ strictly increasing and infinitely differentiable, $\mathrm{val}(\pi,F,n)\geq \beta \cdot E_n(F)$. We aim to show that for any $\varepsilon>0$, there is a policy $\pi'$ (possibly different from $\pi$) that guarantees $\mathrm{val}(\pi',F,n)\geq (1-\varepsilon)\cdot\beta \cdot E_n(F)$ for every distribution $F\in \calF$.

    Let $F\in \calF$ and $X_1,\ldots,X_n$ the $n$ random variables observed sequentially and distributed according to $F$. Using standard arguments, we can assume that $F$ is infinitely differentiable (see, e.g.,~\cite{Liu2021}).\footnote{For instance, by adding a small exponential noise to each observation, we can show that the resulting distribution is infinitely differentiable and there is a small loss in the approximation guarantee that can be made arbitrarily small.} 
    %Therefore, we assume that $F$ is infinitely differentiable.
    Furthermore, by rescaling the random variables, we can assume that $E_n(F)=\sum_{\ell=1}^n \EE[\max_{i\leq \ell}X_i]=1$. Now, let $\varepsilon>0$ small and define
    \[
    \hat{F}(x) = \frac{1}{1+\varepsilon/n^3} F(x) + \frac{\varepsilon/n^3}{1+\varepsilon/n^3}(1-e^{-x}).
    \]
    This is the distribution of the random variable $\hat{X}_i$ that with probability $1/(1+\varepsilon/n^3)$ follows $F$ and otherwise it follows an exponential distribution with parameter $1$. Note that $\hat{F}' (x) >0$; hence $\hat{F}$ is strictly increasing. Then, $\mathrm{val}(\pi,\hat{F},n)\geq \beta \cdot E_n(\hat{F})$.
    Now,
    \begin{align*}
    &E_n(\hat{F}) \\
    &= \sum_{\ell=1}^n \EE \left[  \max\{  \hat{X}_1,\ldots,\hat{X}_\ell \} \right] \\
    & = \sum_{\ell=1}^n \int_0^\infty x \cdot \left(  \hat{F}(x) \right)^{\ell-1} \, \mathrm{d}\hat{F}(x)\\
    & = \sum_{\ell=1}^n \int_0^\infty x \left( \frac{1}{1+\varepsilon/n^3} F(x)  + \frac{\varepsilon/n^3}{1+\varepsilon/n^3} (1-e^{-x}) \right)^{\ell-1} \left( \frac{1}{1+\varepsilon/n^3} \frac{\mathrm{d}F}{\mathrm{d}x}(x)  +  \frac{\varepsilon/n^3}{1+\varepsilon/n^3} e^{-x} \right)\, \mathrm{d}x\\
    & \geq \sum_{\ell=1}^n \frac{1}{(1+\varepsilon/n^3)^\ell} \int_0^\infty x F(x)^{\ell-1}\, \mathrm{d}F(x)\\
    & \geq \left( 1 - \frac{\varepsilon}{n^2} \right) \sum_{\ell=1}^n \EE\left[ \max\{ X_1,\ldots,X_\ell\} \right]\geq \left( 1 - \frac{\varepsilon}{n^2} \right) E_n(F).
    \end{align*}
    On the other side, consider the following policy $\hat{\pi}$ running on $F$ for an instance with $n$ periods: At time $t$, with probability $1/(1+\varepsilon/n^2)$ observe value $X_t$; otherwise sample $E_t$ from an exponential distribution of parameter $1$, and let $\hat{X}_i$ be the value observed; run $\pi$ on that value. If $\pi$ doesn't accept $\hat{X}_t$, then don't accept $X_t$; otherwise, if $\hat{X}_t$ is coming from $X_t$, accept $X_t$ while if $\hat{X}_t$ is coming from $E_t$, then stop the process and don't receive any value. Note that policy $\hat{\pi}$ gets a reward only when $\pi$ accepts $\hat{X}_t$ and $\hat{X}_t$ comes from $X_i$. Hence, the value collected by $\hat{\pi}$ can be bounded as
%    \vvcom{lo que esta en rojo no lo entiendo bien}
%    \spcom{Hice unas modificaciones al parrafo}
    \begin{align*}
    \mathrm{val}(\hat{\pi},F,n) & \geq \frac{1}{1+\varepsilon/n^3} \mathrm{val}(\pi,\hat{F},n) - n^2\frac{\varepsilon/n^3}{1+\varepsilon/n^3} \\
    & \geq \beta \left( 1 - \frac{\varepsilon}{n^2} \right)^2 E_n(F) - \frac{\varepsilon}{n}\geq  \beta\left( 1 - \frac{3\varepsilon}{n}\right) E_n(F),
    \end{align*}
    where we used that $E_n(F)=1$. Since we can make $\varepsilon$ arbitrarily small, the result follows.
\end{proof}

\begin{proof}[Proof of Proposition~\ref{prop:monotonicity_g_n}]
Note that $g_n'(v)= {(1-(1-v)^n(1+nv))}/{v^2}\geq 0$ by using that $(1-v)^n (1+nv)\leq e^{-nv} e^{nv}=1$ with equality only occuring at $v=0$.
By deriving $g_n(v)/v$ we get
        \[
        %\left( \frac{g_n}{v}\right)'(v)= 
        \frac{1}{v^3}\Big(2 - 2 (1 - v)^n - (1 + n - (1 - v)^n + n (1 - v)^n) v\Big).
        \]
        We now show that this derivative is non-positive, equivalently,
        \[
        2 \leq 2 (1 - v)^n + (1 + n - (1 - v)^n + n (1 - v)^n) v.
        \]
        Let $f_n(v)= 2 (1 - v)^n + (1 + n - (1 - v)^n + n (1 - v)^n) v$. We now show that $f_n(v)\geq 2$ for all $v$.
First, we have $f_n(0)=2$.
Now we show that $f_n'(v)= ((1 + n) (1 - (1 - v)^n - v + (1 - v)^n v - n (1 - v)^n v))/(1-v) \geq 0$.
        Indeed, for $v\in (0,1)$, we have $f_n'(v) \geq 0$ if and only if $1-v -(1-v)^n + v(1-v)^n - nv(1-v)^n \geq 0,$
        which in turns is equivalent to $1 \geq (1-v)^{n-1}(1+v(n-1))$,
        which holds by the standard Bernoulli's inequality.
\end{proof}

\begin{proof}[Proof of Proposition~\ref{prop:limit_of_functions_limited_thresholds}]
    We have
    \begin{align*}
        \frac{A_{\phi n, \theta n}(\alpha/n)}{n^2} & = \frac{(1-\alpha/n)^{\theta n}( 1-(\phi-\theta+1/n)\alpha  )+(\phi +1/n)\alpha -1  }{\alpha^2} \\
        & \to \frac{e^{-\alpha \theta} (1 - (\phi - \theta)\alpha) + \phi \alpha -1 }{\alpha^2} = \Bar{A}_{\phi,\theta}(\alpha).
    \end{align*}
    We also have 
    \begin{align*}
        \frac{g_n(\lambda/n)}{n} & = \left( 1 - (1-\lambda/n)\left(\frac{1-(1-\lambda/n)^n}{\lambda} \right)  \right) \to \Bar{g}(\lambda).\qedhere
    \end{align*}
\end{proof}

\begin{proof}[Proof of Proposition~\ref{prop:Formula_En_hard_dist_1_threshold}]
    Call $a=2(e^2-3)/(e^2+1)$ and $b=4/(e^2+1)$. Then,
    \begin{align*}
        E_n & = \sum_{t=1} \int_0^1 f(u) t (1-u)^{t-1} \, \mathrm{d}u \\
        & =  an \int_0^{1/n^3} \sum_{t=1}^n t (1-u)^{t-1}\, \mathrm{d}u + \frac{b}{n} \int_{1/n^3}^{1/n^3+\beta/n} \sum_{t=1}^n t(1-u)^{t-1}\, \mathrm{d}u \\
        & = a n E_{n,1} + \frac{b}{n} E_{n,2}
    \end{align*}
    First, we have $1-1/n^2\leq (1-1/n^3)^{n}\leq (1-u)^t \leq 1$ for $u\in [0,1/n^3]$, where in the first inequality we used Bernoulli's inequality. Hence,
    \[
    \left( 1 - \frac{1}{n^2} \right) \frac{n+1}{2n^2} \leq E_{n,1} \leq \frac{n+1}{2n^2}.
    \]
    and so $an E_{n,1}\to a/2$ when $n\to \infty$. On the other hand,
    \begin{align*}
        E_{n,2}& =\sum_{t=1}^n \int_{1/n^3}^{1/n^3+\beta/n} t (1-u)^{t-1}\, \mathrm{d}u = \sum_{t=1}^n \left( 1- \frac{1}{n^3} \right)^t - \left( 1- \frac{1}{n^3} - \frac{\beta}{n} \right)^t
    \end{align*}
    By using that $1-1/n^2 \leq (1-1/n^3)^t \leq 1$, we have the following bounds on $E_{n,2}$,
    \begin{align*}
        &\left( 1 - \frac{1}{n^2} \right)n  - \left( 1 - \frac{1}{n^3} - \frac{\beta}{n} \right)\left( \frac{1- (1-1/n^3-\beta/n)^n }{1/n^3+\beta/n} \right)\\
        &\leq E_{n,2} \leq n - \left( 1 - \frac{1}{n^3} - \frac{\beta}{n} \right)\left( \frac{1- (1-1/n^3-\beta/n)^n }{1/n^3+\beta/n} \right)
    \end{align*}
    From here, we get that $(b/n)E_{n,2}\to b\left( 1- (1-e^{-\beta})/\beta\right)$.
\end{proof}

\begin{proof}[Proof of Proposition~\ref{prop:UB_ALG_1_threshold}]
    Fix $n$. We use the same notation $a=(e^2-3)/(e^2+1)$ and $b=4/(e^2+1)$ as in the previous proposition. We compute first,
    \[
    \int_0^q f(u)\, \mathrm{d}u=\begin{cases}
        an q & q\in[0,1/n^3], \\
        a/n^2+ (b/n)(q-1/n^3) & q\in [1/n^3,1/n^3+\beta/n),\\
        a/n^2 + {b \beta}/{n^2} & q\in [1/n^3+\beta/n,1].
    \end{cases}
    \]
    and
    \[
    \int_q^1 f(u)\, \mathrm{d}u = \begin{cases}
        an(1/n^3-q) + b\beta/n^2 & q\in [0,1/n^3), \\
        (b/n)(1/n^3+\beta/n-q) & q \in [1/n^3,1/n^3+\beta/n),\\
        0 & q\in [1/n^3+\beta/n,1].
    \end{cases}
    \]
    We now upper-bound
    \begin{align*}
    G_{n,1} &= \sup_{q\in [0,1]}\left\{  A_{n,n}(q)\int_0^q f(u)\, \mathrm{d}u + B_{n}(q)\int_q^1 f(u)\, \mathrm{d}u   \right\}\\
    & =\max\left\{ G_{n}^{[0,1/n^3]}, G_{n}^{[1/n^3,1/n^3+\beta/n]}, G_{n}^{[1/n^3+\beta/n,1]}  \right\},   
    \end{align*}
    where $G_n^{[a,b]}=\sup_{q\in [a,b]}\{  A_{n,n}(q)\int_0^q f(u)\, \mathrm{d}u + B_{n}(q)\int_q^1 f(u)\, \mathrm{d}u\}$. We now analyze each term in $G_{n,1}$.

    For $G_{n}^{[0,1/n^3]}$, we have
    \begin{align*}
        G_{n}^{[0,1/n^3]}& = \sup_{q\in [0,1/n^3]}\left\{ A_{n,n}(q) anq + B_{n}(q) (an(1/n^3-q) + b\beta/n^2  )   \right\}.
    \end{align*}
    Note that \[
    \frac{1}{n^2}B_n(q) = \frac{1}{n^2}\sum_{t=0}^{n-1} (1-q)^{t} \leq \frac{1}{n},
    \]
    thus,
    \[
    \sup_{q\in [0,1/n^3]}\left\{ an q A_{n,n}(q)\right\} \leq G_{n}^{[n,1/n^3]} \leq \sup_{q\in [0,1/n^3]}\left\{ an qA_{n,n}(q)\right\} + \frac{a}{n} + \frac{b\beta}{n}.
    \]
    We can use that the function $A_{n,n}(q) q =  q\sum_{t=0}^{n-1}(n-t)(1-q)^{t}$
    is increasing for $q< 1/(n+1)$ to obtain that
    \[
    \sup_{q\in [0,1/n^3]}\left\{ anq A_{n,n}(q) \right\} = \frac{a}{n^2}\sum_{t={0}}^{n-1}(n-t)\left(1-1/n^3 \right)^t \in \left[ \left( 1- \frac{1}{n^2} \right)\frac{a(n+1)}{2n}, \frac{a(n+1)}{2n}  \right].
    \]
    Hence, $G_{n}^{[0,1/n^3]}\to a/2$.
    For $G_{n}^{[1/n^3,1/n^3+\beta/n]}$, we have
    \begin{align*}
        G_{n}^{[1/n^3,1/n^3+\beta/n]} &= \sup_{q\in [1/n^3,1/n^3+\beta/n]}\left\{A_{n,n}(q)\left( \frac{a}{n^2} + \frac{b}{n}\left( q - \frac{1}{n^3} \right)\right) + B_{n}(q) \frac{b}{n} \left( \frac{1}{n^3} +\frac{\beta}{n} - q \right)    \right\}. 
    \end{align*}
    By doing the change of variable $q=\lambda/n$, we have $\lambda\in [1/n^2,1/n^2+\beta]$ and we obtain
    \begin{align*}
        \lim_{n\to \infty}G_{n,n}^{[1/n^3,1/n^3+\beta/n]}& = \max_{\lambda\in [0,\beta]}\left\{  \Bar{A}_{1,1}(\lambda)(a+\lambda b)   \right\}.
    \end{align*}
    For $G_{n}^{[1/n^3+\beta/n,1]}$, we have
    \begin{align*}
        G_{n}^{[1/n^3,1/n^3+\beta/n,1]} & = \sup_{q\in [1/n^3+\beta/n,1]}\left\{ A_{n,n}(q) \left( \frac{a}{n^2} + \frac{b\beta}{n^2} \right)   \right\}\\
        & = \frac{A_{n,n}(1/n^3+\beta/n)}{n^2}(a+b\beta) \to \Bar{A}_{1,1}(\beta)(a+b\beta)
    \end{align*}
    since the function $A_{n,n}(q)$ is decreasing.
\end{proof}

\begin{proposition}\label{prop_app:temp_function_1_threshold}
    The function $\lambda \mapsto (e^{-\lambda}+\lambda-1)(2(e^2-3)+4\lambda)/\lambda^2$ is increasing in $[0,2]$ and decreasing in $[2,+\infty)$.
\end{proposition}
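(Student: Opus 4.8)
Write $N(\lambda)=(e^{-\lambda}+\lambda-1)\,(2(e^2-3)+4\lambda)$, so the function in question is $f(\lambda)=N(\lambda)/\lambda^2$. The quotient rule gives $f'(\lambda)=\big(\lambda N'(\lambda)-2N(\lambda)\big)/\lambda^3$, hence on $(0,\infty)$ the sign of $f'$ equals the sign of
\[
g(\lambda):=\lambda N'(\lambda)-2N(\lambda).
\]
So it suffices to prove $g>0$ on $(0,2)$, $g(2)=0$, and $g<0$ on $(2,\infty)$; then $f'>0$ on $(0,2)$ and $f'<0$ on $(2,\infty)$, and since $f$ extends continuously to $\lambda=0$ with $f(0)=e^2-3$, this gives exactly that $f$ is increasing on $[0,2]$ and decreasing on $[2,\infty)$.

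\textbf{Closed form and low-order values.} Expanding $N'$ and collecting terms (routine algebra, grouping the coefficient of $e^{-\lambda}$ separately), one obtains
\[
g(\lambda)=(4e^2-12)+(10-2e^2)\lambda-e^{-\lambda}\Big((4e^2-12)+(2e^2-2)\lambda+4\lambda^2\Big).
\]
A direct substitution shows $g(0)=0$, $g'(0)=0$, and $g(2)=8-8=0$. Differentiating twice, the exponential–polynomial structure collapses to the factored form
\[
g''(\lambda)=-2\lambda e^{-\lambda}\big(2\lambda+e^2-9\big).
\]
Since $5<e^2<9$, the point $\lambda_0:=(9-e^2)/2$ lies in $(0,1)\subset(0,2)$, and $g''>0$ on $(0,\lambda_0)$ while $g''<0$ on $(\lambda_0,\infty)$; that is, $g$ is strictly convex on $(0,\lambda_0)$ and strictly concave on $(\lambda_0,\infty)$.

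\textbf{Concluding the sign of $g$.} On $(0,\lambda_0)$, from $g(0)=g'(0)=0$ and $g''>0$ we get $g'(\lambda)=\int_0^{\lambda}g''(s)\,\mathrm{d}s>0$, so $g$ is strictly increasing from $0$; hence $g>0$ on $(0,\lambda_0]$ and $g'(\lambda_0)>0$. On $[\lambda_0,\infty)$, $g$ is strictly concave with $g(\lambda_0)>0$, $g(2)=0$, and $\lambda_0<2$: writing any $\lambda\in(\lambda_0,2)$ as a convex combination of $\lambda_0$ and $2$ gives $g(\lambda)>0$, and writing $2$ as a convex combination of $\lambda_0$ and any $\lambda>2$ gives, by strict concavity, $g(\lambda)<0$. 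Combining with the first range yields $g>0$ on $(0,2)$, $g(2)=0$, $g<0$ on $(2,\infty)$, which is what was needed.

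\textbf{Main obstacle.} The only real work is computational: deriving the compact closed form for $g$ and, above all, the factored identity $g''(\lambda)=-2\lambda e^{-\lambda}(2\lambda+e^2-9)$. This requires careful sign bookkeeping (note $10-2e^2<0$), and the key cancellation is that the constant and linear parts of $g''$ share the common factor $(9-e^2)/2$, which is exactly what produces the clean convex-then-concave split at $\lambda_0\in(0,2)$. Once this identity is established, the remainder is the standard "convex then concave with two known zeros at $0$ and $2$" argument and uses no estimates beyond $5<e^2<9$.
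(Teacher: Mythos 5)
Your proof is correct, and I verified the two computational identities it hinges on: the closed form $g(\lambda)=(4e^2-12)+(10-2e^2)\lambda-e^{-\lambda}\bigl((4e^2-12)+(2e^2-2)\lambda+4\lambda^2\bigr)$ for $g=\lambda N'-2N$, and the factorization $g''(\lambda)=-2\lambda e^{-\lambda}(2\lambda+e^2-9)$, both of which check out (as do $g(0)=g'(0)=g(2)=0$ and $\lambda_0=(9-e^2)/2\approx 0.806\in(0,2)$). Your route is genuinely different from the paper's: the paper differentiates $d$ directly, obtains a transcendental expression $\bar d(\lambda)=6-e^{\lambda}(6-5\lambda)+e^{\lambda+2}(2-\lambda)+\lambda-2\lambda^2-e^2(2+\lambda)$, and then concludes the sign pattern ``by inspection,'' pointing to a plot of $\bar d$ — that is, the paper's argument is essentially numerical/graphical rather than a complete analytic proof. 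Your approach instead isolates the sign-determining quantity $g=\lambda N'-2N$, exploits the clean convex-then-concave structure of $g$ (convex on $(0,\lambda_0)$, concave beyond), and closes the argument with the two known zeros at $0$ and $2$ together with $g'(0)=0$. What this buys is a fully rigorous proof using only $5<e^2<9$ and elementary convexity, replacing the paper's figure-based inspection step; the cost is the extra algebra needed to reach the factored second derivative, which you carry out correctly.
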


\begin{proof}
    Let $d(\lambda)=(e^{-\lambda}+\lambda-1)(2(e^2-3)+4\lambda)/\lambda^2$. A simple calculation shows
    \[
    d'(\lambda)=  \frac{e^{-\lambda}}{\lambda^3}\left( 6-e^{\lambda}(6-5\lambda)+e^{\lambda+2}(2-\lambda)+\lambda-2\lambda^2-e^2(2+\lambda) \right).
    \]
    Note that the function $\bar{d}(\lambda)= 6-e^{\lambda}(6-5\lambda)+e^{\lambda+2}(2-\lambda)+\lambda-2\lambda^2-e^2(2+\lambda) $ is dominated by the term $-\lambda e^{\lambda +2}$ for large $\lambda$, so $\bar{g}(\lambda)\to -\infty$ as $\lambda -\infty $. In fact, this terms dominates $\bar{d}$ for $\lambda>2$ as we can observe in Figure~\ref{fig:figure_d}. By inspection, we conclude that $d'(\lambda)>0$ in $[0,2]$ and $d'(\lambda)<0$ in $[2,+\infty)$ which concludes the proof.
\end{proof}
    \begin{figure}[h!]
        \centering
        \includegraphics[width=0.5\textwidth]{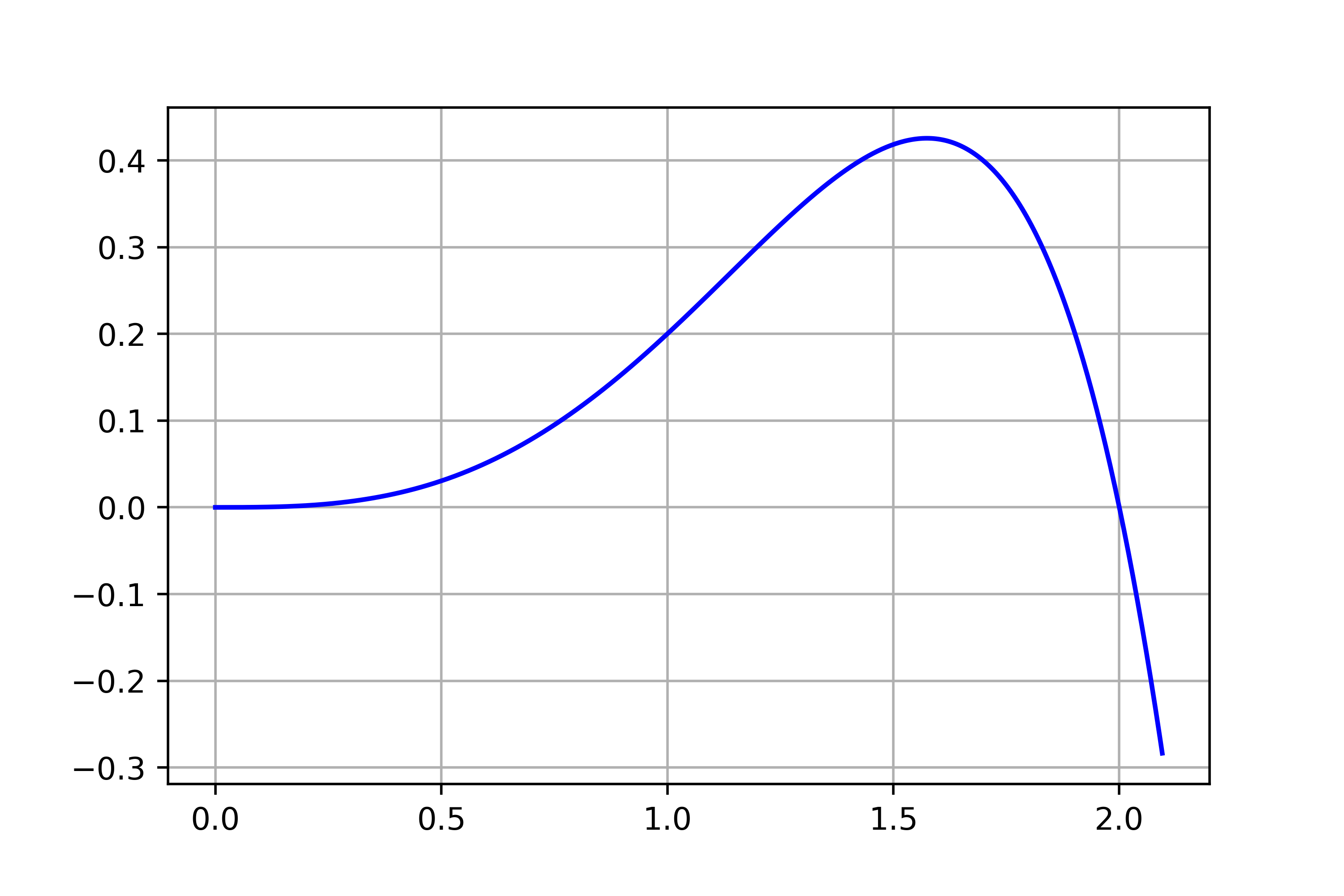}
        \caption{Plot of $\bar{d}$ in the range $[0,2.2]$. We note that $\bar{d}$ is $0$ at $\lambda=0$ and $\lambda=2$.}
        \label{fig:figure_d}
    \end{figure}

\section{Proofs Deferred from Section \ref{sec:cvx-optimal}}\label{app:cvx-optimal}

\begin{proof}[Proof of Claim \ref{claim:tau-conv}]
We have $G_j(F)/j^2\le \sum_{i=1}^j\EE[\max\{X_1,\ldots,X_i\}]/j^2\le \EE[\max\{X_1,\ldots,X_j\}]/j$, and $\EE[\max\{X_1,\ldots,X_j\}]=o(j)$ (see, e.g., \cite{correa2021asymptotic}, \cite{downey1990distribution}).
We conclude that $\lim_{j\to \infty}\tau_j(F)/j= 0$.
Observe that $\lim_{j\to \infty}(\tau_{j+1}(F)-\tau_{j}(F))=0$ directly when there exists a finite value $\tau$ such that $\tau_j(F)\to \tau$.
Otherwise, suppose that $G_j(F)/j=\tau_j(F)\to \infty$.
From the recurrence satisfied by the optimal policy, we have
\begin{align*}
\frac{j+1}{j}\tau_{j+1}(F)-\tau_j(F)&=\frac{G_{j+1}(F)}{j}-\frac{G_j(F)}{j}\\
&=\frac{1}{j}+\EE[\max(0,X-G_j(F)/j)]=\frac{1}{j}+\int_{\tau_j(F)}^{\infty}(1-F(s))\mathrm ds,
\end{align*}
where $X$ is distributed according to $F$.
Since the expectation of $F$ is finite, and $\tau_j(F)\to \infty$, we have that $\int_{\tau_j(F)}^{\infty}(1-F(s))\mathrm ds$ converges to zero as $j\to \infty$, and therefore we conclude that $(1+1/j)\tau_{j+1}(F)-\tau_j(F)\to 0$.
Since $\tau_{j+1}(F)/j=(1+1/j)\cdot \tau_{j+1}(F)/(j+1)\to 0$ by the first part of the claim, and $(1+1/j)\tau_{j+1}(F)-\tau_j(F)=\tau_{j+1}(F)/j+\tau_{j+1}(F)-\tau_j(F)$, we conclude that $\tau_{j+1}(F)-\tau_j(F)\to 0$.
\end{proof}

\begin{proof}[Proof of Claim \ref{claim:limit-t}]
Since the sequence $T$ is strictly increasing, proving that the sequence is upper-bounded is sufficient to show the convergence.
Since $\mu_j(T)\le \calL(T_{j+1}-T_j)$ for every $j\ge 0$, we have for every $k\ge 2$
\begin{align*}
\calL\sum_{j=1}^{k-1}(T_{j+1}-T_j)&\ge \sum_{j=1}^{k-1}\left(\frac{j+2}{j+1}T_{j+2}-\frac{j+1}{j}T_{j+1}+\frac{T_1}{j(j+1)}\right)\\
&=\frac{k+1}{k}T_{k+1}-2T_2+T_1-\frac{T_1}{k},
\end{align*}
and therefore, by expanding the telescopic sum on the left-hand side, we get 
$$\calL(T_k-T_1)\ge \frac{k+1}{k}T_{k+1}-2T_2+T_1-\frac{T_1}{k}.$$
Since $2T_2-2T_1=\mu_0(T)\le \calL T_1$, the following inequality holds for every $k\ge 2$:
$$\calL T_k\ge \frac{k+1}{k}T_{k+1}-T_1(1+1/k),$$
which further implies that $2T_1+\calL T_k\ge T_{k+1}$ for every $k\ge 2$.
Consider $M=2T_1/(1-\calL)+0.1$.
We are done if $T_j\le M$ for every $j$, as the sequence is therefore upper-bounded.
Otherwise, let $k(M)$ be the last time that the strictly increasing sequence $T$ is below $M$, namely, $T_{k(M)+1}\ge M$ and $T_{k(M)}<M$.
Then, since $2T_1+\calL T_{k(M)}\ge T_{k(M)+1}$, we get $2T_1+\calL M>M$, that is, $M<2T_1/(1-\calL)$, which is a contradiction.
We conclude that $T_j\le M$ for every $j$; therefore, the sequence is bounded.
\end{proof}

\begin{proof}[Proof of Claim \ref{claim:1and2}]
Just by expanding the expectation using the definition of $H$, we get
\begin{align}
G_{1}(H)&=\sum_{\ell=0}^{\infty}\int_{T_{\ell}}^{T_{\ell+1}}(1-H(x))\mathrm dx\notag\\
&=\sum_{\ell=0}^{\infty}\int_{T_{\ell}}^{T_{\ell+1}}\left(1-\frac{\mu_{\ell}(T)}{T_{\ell+1}-T_{\ell}}\right)\mathrm dx=\sum_{\ell=0}^{\infty}(T_{\ell+1}-T_{\ell}-\mu_{\ell}(T)),\label{lem:cvx-lem2-eq1}
\end{align}
On the other hand, by expanding $\mu_{\ell}(T)$, we get that for every $k$ the following holds:
\begin{align}
\sum_{\ell=1}^{k}(T_{\ell+1}-T_{\ell}-\mu_{\ell}(T))&=\sum_{\ell=1}^{k}\left(T_{\ell+1}-T_{\ell}-\left(\frac{\ell+2}{\ell+1}T_{\ell+2}-\frac{\ell+1}{\ell}T_{\ell+1}+\frac{T_1}{\ell(\ell+1)}\right)\right)\notag\\
&=\sum_{\ell=1}^{k}\left(T_{\ell+1}-T_{\ell}-\left(\frac{\ell+2}{\ell+1}T_{\ell+2}-\frac{\ell+1}{\ell}T_{\ell+1}\right)-\left(\frac{T_1}{\ell}-\frac{T_1}{\ell+1}\right)\right)\notag\\
&=T_{k+1}-T_1+2T_{2}-\frac{k+2}{k+1}T_{k+2}-T_1+\frac{T_1}{k+1}\label{lem:cvx-lem2-eq2},
\end{align}
and therefore, from \eqref{lem:cvx-lem2-eq1} and \eqref{lem:cvx-lem2-eq2} we get
\begin{align*}
G_1(H)&=\lim_{k\to \infty}\left(T_1-\mu_0(T)+T_{k+1}-T_1+2T_{2}-\frac{k+2}{k+1}T_{k+2}-T_1+\frac{T_1}{k+1}\right)\\
&=\lim_{k\to \infty}\left(T_1-(2T_2-2T_1)+T_{k+1}-T_1+2T_{2}-\frac{k+2}{k+1}T_{k+2}-T_1+\frac{T_1}{k+1}\right)\\
&=\lim_{k\to \infty}\left(T_1+T_{k+1}-\frac{k+2}{k+1}T_{k+2}+\frac{T_1}{k+1}\right)\\
&=\lim_{k\to \infty}\left(T_1+T_{k+1}-T_{k+2}+\frac{T_{k+2}}{k+1}+\frac{T_1}{k+1}\right)=T_1,
\end{align*}
where the last limit holds from conditions \ref{mon1}-\ref{mon2}.
This finishes the proof.
\end{proof}

\begin{proof}[Proof of Claim \ref{claim:increasing}]
Consider $f_k(D)=D_{k}P_n(\nu_k(D))$ for each $k\ge n-1$.
Then, from the definition of $\nu_k(D)$, we have
\begin{align*}
\frac{\partial f_k}{\partial D_j}(D)=
\begin{cases}
\frac{j+1}{j}P_n'(\nu_{j-1}(D))& \text{ when }k=j-1,\\
P_n(\nu_j(D))-P_n'(\nu_j(D))\Big(\nu_j(D)+\frac{1}{j(j+1)}\Big) & \text{ when }k=j,\\
-\frac{1}{k(k+1)}P_n'(\nu_k(D))&\text{ when }k\ge j+1,\\
\end{cases}
\end{align*}
and zero otherwise.
Therefore, overall, we have 
\begin{align*}
&\frac{\partial G}{\partial D_j}(D)\\
&=\frac{j+1}{j}P_n'(\nu_{j-1}(D))+P_n(\nu_j(D))-P_n'(\nu_j(D))\left(\nu_j(D)+\frac{1}{j(j+1)}\right)-\sum_{k=j+1}^{\infty}\frac{1}{k(k+1)}P_n'(\nu_k(D))\\
&=P_n(\nu_j(D))-P_n'(\nu_j(D))\nu_j(D)+\frac{j+1}{j}P_n'(\nu_{j-1}(D))-\sum_{k=j}^{\infty}\frac{1}{k(k+1)}P_n'(\nu_k(D))\\
&\ge P_n(\nu_j(D))-P_n'(\nu_j(D))\nu_j(D)+\frac{j+1}{j}P_n'(\nu_{j}(D))-\sum_{k=j}^{\infty}\frac{1}{k(k+1)}P_n'(\nu_j(D))\\
&=P_n(\nu_j(D))+(1-\nu_j(D))P_n'(\nu_j(D))\ge 0,
\end{align*}
where the first inequality holds since $P_n'$ is decreasing in $(0,1)$ and $(\nu_{\ell}(D))_{\ell\in \NN}$ is non-decreasing in $\ell$, and the last inequality follows since $P_n(x)+(1-x)P_n'(x)$ is strictly positive in $(0,1)$.
\end{proof}

\begin{proof}[Proof of Claim \ref{claim:tailsum}]
For every $j\ge n-1$ we have
$\mu_j(X_{\eta}(\varepsilon,n))=\eta(X_{j+1,\eta}(\varepsilon,n)-X_{j,\eta}(\varepsilon,n)).$
If we let $\calS^{\star}=\sum_{j=n-1}^{\infty}(X_{j+1,\eta}(\varepsilon,n)-X_{j,\eta}(\varepsilon,n))$, we get
\begin{align*}
\eta \calS^{\star}&=\sum_{j=n-1}^{\infty}(X_{j+2,\eta}(\varepsilon,n)-X_{j+1,\eta}(\varepsilon,n))+\sum_{j=n-1}^{\infty}\left(\frac{1}{j+1}X_{j+2,\eta}(\varepsilon,n)-\frac{1}{j}X_{j+1,\eta}(\varepsilon,n)+\frac{1}{j(j+1)}\right)\\
&=\calS^{\star}-(X_{n,\eta}(\varepsilon,n)-X_{n-1,\eta}(\varepsilon,n))+\sum_{j=n-1}^{\infty}\left(\frac{1}{j+1}(X_{j+2,\eta}(\varepsilon,n)-1)-\frac{1}{j}(X_{j+1,\eta}(\varepsilon,n)-1)\right)\\
&=\calS^{\star}-(X_{n,\eta}(\varepsilon,n)-X_{n-1,\eta}(\varepsilon,n))-\frac{1}{n-1}(X_{n,\eta}(\varepsilon,n)-1)\\
%&=\calS^{\star}-\left(\frac{x_{n}(\varepsilon,n)}{n}-\frac{x_{n-1}(\varepsilon,n)}{n-1}\right)-\frac{1}{n-1}\left(\frac{x_{n}(\varepsilon,n)}{n}-x_1(\varepsilon,n)\right)\\
&=\calS^{\star}-\left(\frac{x_{n}(\varepsilon,n)}{n-1}-\frac{x_{n-1}(\varepsilon,n)}{n-1}-\frac{x_{1}(\varepsilon,n)}{n-1}\right),
\end{align*}
and therefore $\calS^{\star}P_n(\eta)=\frac{P_n(\eta)}{1-\eta}\cdot \frac{x_n(\varepsilon,n)-x_{n-1}(\varepsilon,n)-x_1(\varepsilon,n)}{n-1}$.
\end{proof}

\begin{proof}[Proof of Claim \ref{claim:mon-x}]
Consider a value of $\eta\in (\eta_0(\varepsilon,n),1)$.
By Lemma \ref{lem:extension-to-infinite}, the sequence $X_{\eta}(\varepsilon,n)$ satisfies the conditions \ref{mon1}-\ref{mon3}.
Therefore, by Lemma \ref{lem:sequence-to-dist}, there exists a distribution $H$ such that $X_{j,\eta}(\varepsilon,n)=G_j(H)/j$ for every $j$.
In particular, this implies that $x_j(\varepsilon,n)=G_j(H)$ for every $j\le n$, and therefore $G_1(H)=x_1(\varepsilon,n)=1$.
On the other hand, from the recursion satisfied by the optimal policy, for every $j\in \{0,1,\ldots,n-1\}$, we have $x_{j+1}(\varepsilon,n)=G_{j+1}(H)=G_1(H)+\EE[\max(G_j(H),jZ)]\ge G_1(H)+G_j(H)=x_{1}(\varepsilon,n)+x_{j}(\varepsilon,n)$, with $Z$ distributed according to $H$.
This concludes the proof.
\end{proof}

\begin{proof}[Proof of Claim \ref{claim:mon-T}]
By Lemma \ref{lem:sequence-to-dist}, there exists a distribution $H$ such that $T_{j}=G_j(H)/j$ for every $j$.
On the other hand, from the recursion satisfied by the optimal policy, we have $nT_{n}=G_{n}(H)=G_1(H)+\EE[\max(G_{n-1}(H),(n-1)Z)]\ge G_1(H)+G_{n-1}(H)=T_1+(n-1)T_{n-1}$, with $Z$ distributed according to $H$.
This concludes the proof.
\end{proof}

\begin{proof}[Proof of Claim~\ref{claim:y_linearly_dep}]
For $j=1$, the result is immediately true with $\hat{y}_1(\varepsilon,n)=1/2$. Assume that the result is also true for $1,\ldots,j$ for some $j\geq 1$. Then,
\begin{align*}
y_{j+1}(\varepsilon,n) &= \alpha_j(\varepsilon,n)\left( \frac{j+1}{j+2} \right)y_j + \frac{1}{j(j+2)}\sum_{k=1}^j y_k\\
&= \alpha_j(\varepsilon,n)\left( \frac{j+1}{j+2} \right)\alpha_0(\varepsilon,n)\hat{y}_j(\varepsilon,n) + \frac{1}{j(j+2)}\sum_{k=1}^j \alpha_0(\varepsilon,n)\hat{y}_k(\varepsilon,n)\\
& = \alpha_0(\varepsilon,n) \left( \alpha_j(\varepsilon,n)\left( \frac{j+1}{j+2} \right)\hat{y}_j(\varepsilon,n) + \frac{1}{j(j+2)}\sum_{k=1}^j \hat{y}_k(\varepsilon,n) \right)
\end{align*}
where in the second line we use the inductive hypothesis. From the last line, the proof follows.
\end{proof}

\section{Proofs Deferred from Section~\ref{sec:RO_arrival}}\label{app:RO_arrival}

\begin{proposition}\label{prop:maximum_lambert}
    The function $\theta (\theta - 1 -\ln \theta)$ in $[0,1]$ attains its maximum at $\theta^{\star}= -(1/2)W_0(-2/e^2)$.
\end{proposition}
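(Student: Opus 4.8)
The plan is to analyze $d(\theta)=\theta(\theta-1-\ln\theta)=\theta^2-\theta-\theta\ln\theta$ by elementary calculus on $(0,1)$, noting first that $\lim_{\theta\to 0^+}\theta\ln\theta=0$, so $d$ extends continuously to $[0,1]$ with $d(0)=d(1)=0$, while $d(\theta)/\theta=\theta-1-\ln\theta$ is strictly decreasing on $(0,1)$ (its derivative is $1-1/\theta<0$) and hence strictly positive there; so the maximum of $d$ on $[0,1]$ is attained at an interior point. I would then compute $d'(\theta)=2\theta-2-\ln\theta$ and $d''(\theta)=2-1/\theta$, so $d''<0$ on $(0,1/2)$ and $d''>0$ on $(1/2,1)$; consequently $d'$ is strictly decreasing on $(0,1/2)$ and strictly increasing on $(1/2,1)$.

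Next I would locate the sign changes of $d'$. Since $\lim_{\theta\to0^+}d'(\theta)=+\infty$, $d'(1/2)=\ln 2-1<0$, and $d'(1)=0$, the monotonicity just established forces $d'$ to vanish exactly once on $(0,1)$, at some point $\theta^\star\in(0,1/2)$, with $d'>0$ on $(0,\theta^\star)$ and $d'<0$ on $(\theta^\star,1)$. Therefore $d$ is strictly increasing on $[0,\theta^\star]$ and strictly decreasing on $[\theta^\star,1]$, so $\theta^\star$ is the unique maximizer of $d$ on $[0,1]$ (in particular $d(\theta^\star)>0=d(0)=d(1)$; a quick confirmation is $d''(\theta^\star)=2-1/\theta^\star<0$ since $\theta^\star<1/2$).

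It then remains to solve for $\theta^\star$ in closed form. The equation $d'(\theta^\star)=0$ is $\ln\theta^\star=2\theta^\star-2$, i.e.\ $\theta^\star e^{-2\theta^\star}=e^{-2}$; multiplying by $-2$ gives $(-2\theta^\star)e^{-2\theta^\star}=-2/e^2$. Since $-2/e^2\in(-1/e,0)$, the equation $we^w=-2/e^2$ has exactly two real solutions, $w=W_0(-2/e^2)\in(-1,0)$ and $w=W_{-1}(-2/e^2)=-2$; because $\theta^\star\in(0,1/2)$ forces $-2\theta^\star\in(-1,0)$, we are on the principal branch, hence $-2\theta^\star=W_0(-2/e^2)$ and $\theta^\star=-\tfrac12 W_0(-2/e^2)$. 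The one subtlety to be careful about is precisely this branch selection: the function $d$ has a second critical point at $\theta=1$ (with $d(1)=0$), corresponding to the non-principal value $W_{-1}(-2/e^2)=-2$, and it is exactly the convexity analysis of $d'$ through $d''$ that pins the interior maximizer to $(0,1/2)$ and thus to the $W_0$ branch.
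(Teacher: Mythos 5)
Your proof is correct and follows essentially the same route as the paper: compute $d'(\theta)=2(\theta-1)-\ln\theta$ and $d''(\theta)=2-1/\theta$, locate the unique interior critical point, and identify it via the Lambert $W$ equation $(-2\theta)e^{-2\theta}=-2/e^2$. If anything, your treatment is slightly more careful than the paper's, since you explicitly justify the shape of $d'$ via the sign of $d''$ and explain why the $W_{-1}$ branch (corresponding to the critical point $\theta=1$, a local minimum) must be discarded, whereas the paper settles this "by inspection."
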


\begin{proof}
    Let $d(\theta)=\theta(\theta - 1 -\ln \theta)$. Then, $d'(\theta)= 2(\theta - 1) - \ln \theta$. Note that $2(\theta-1)$ is linear while $\ln \theta$ is strictly concave; therefore, $2(\theta-1)$ and $\ln \theta$ intersect in at most $2$ points. Clearly, $d'(1)= 0$. Now, suppose that $\theta \neq 1$, and note that
    $d'\left(\theta \right)  = 2(\theta-1) -\ln \theta = -\ln (2\theta e^{-2\theta}) - 2 +\ln 2  = -\ln (2\theta e^{-2\theta}) + \ln (2/e^2).$
    From here, we get $\theta$ is such that $d'(\theta)=0$, that is, $2\theta e^{-2\theta} = 2/e^2$. 
    
    Let $w=-2\theta$.
    Then, we have $we^w = -2/e^2$, and therefore $w=W_0(-2/e^2)$.
    This implies that $\theta = -(1/2)W_0(-2/e^2)$.
    Note that $d''(\theta) = 2 -1/\theta$. For $\theta=-(1/2)W_0(2/e^2)\approx 0.203$, we have $d''(\theta)<0$; hence, $\theta=-(1/2)W_0(2/e^2)$ is a local maximum. For $\theta=1$, we have $d''(1)=2>0$; hence, it is a local minimum. From here, we deduce that the maximum value of $d$ in $[0,1]$ occurs at $\theta^{\star}=-(1/2)W_0(2/e^2)$.
\end{proof}

\subsection{Tight Upper Bound}\label{app:Upper_bound_secre}

In this subsection, we show that the algorithm Sample-then-Select-Forever is optimal. We provide an instance where no algorithm can obtain an approximation larger than $0.16190$.
Consider the instance $u_1 = 1$ and $u_i=\varepsilon^i$ for $i\geq 2$ with $\varepsilon \leq 1/n^3$. Then, $\OPT = n$. 
Fix and algorithm $\ALG$. Then, the following inequality holds
\begin{align*}
\ALG &\leq \sum_{t=1}^n (n-t+1)\cdot \Pr(\ALG \text{ accept }u_1 \text{ at }t) + n \cdot \frac{1}{n^3} \\
& \leq \sum_{t=1}^n (n-t+1) \cdot \Pr(\ALG \text{ accepts the maximum value at }t) + \frac{1}{n^2}.
\end{align*}
This inequality tells us that we only need to focus on algorithms that, up to a small error, maximize the chances of selecting the largest value in the sequence. Note that this can be easily solved via an ordinal algorithm that makes decisions only based on the relative position of values observed and not their actual values. We can solve this later problem optimally via a dynamic program. Consider the dynamic program that gets as a reward $n-t+1$ if the maximum value is selected and $0$ otherwise. Let $v_t(1)$ be the optimal expected reward when at time $t$ the observed value is the best so far. Likewise, let $v_t(0)$ be the optimal expected reward when at time $t$ the observed value is not the best so far. Then, we obtaint the following optimality recursion:
\[
v_t(s) = \begin{cases}
\max\left\{  \frac{t}{n} (n-t+1), \frac{1}{t+1} v_{t+1}(1) + \frac{t}{t+1} v_{t+1}(0)    \right\}, & s=1, \\
\frac{1}{t+1} v_{t+1}(1) + \frac{t}{t+1} v_{t+1}(0), & s=0,
\end{cases}
\]
with $v_{n+1}(s)=0$ for any $s\in \{0,1\}$. Indeed, if $s=1$ at time $t$, the optimal policy has to decide between choosing the value and obtains as a reward $(n-t+1)\Pr(X_t=1\mid Y_t=1) = (n-t+1)(t/n)$, while if it decides not choosing the value, then it moves to $t+1$ and we have $1/(t+1)$ probability of observing $Y_{t+1}=1$ and $t/(t+1)$ probability of observing $Y_{t+1}\neq 1$. A similar argument works for $v_t(0)$. Stochastic dynamic programming theory~\citep{puterman2014markov} guarantees that the optimal policy has a threshold over time. That is, there is a $\tau\in [n]$ such that the policy does not accept any value between $1,\ldots,\tau$ and then, in the remaining $\tau+1,\ldots,n$, accepts the first value that is better than the previously observed ones. Hence, the analysis of the lower bound for $\ALG_{\theta}/\OPT$ presented in Section~\ref{sec:RO_arrival} is exactly the value of $v_t(s)$, up to a factor of $n$. Thus, 
\[
v_1(1) = \tau \sum_{t=\tau+1}^n\frac{1-(t-1)/n}{(t-1)/n}\cdot \frac{1}{n}. 
\]
Hence,
\[
\frac{\ALG}{\OPT}=\frac{\ALG}{n} \leq \frac{\tau}{n} \sum_{t=\tau+1}^n \frac{1-(t-1)/n}{(t-1)/n}\cdot \frac{1}{n} + \frac{1}{n^3},
\]
and if we take $\tau=\theta n$, we have that the right-hand side of this last inequality tends to $\theta (\theta -1 -\ln \theta)$ when $\theta \to \infty$. 
By the analysis in Section~\ref{sec:RO_arrival}, we obtain that $\lim_n \ALG/\OPT$ is upper bounded by $\approx 0.16190$, and this concludes that our analysis is tight and Sample-then-Select-Forever is an optimal algorithm.

\section{Limit Behavior of \ref{opt-alphas}}~\label{app:asymptotic-cvx}
In this section, we provide a limit analysis relating the first-order optimality conditions derived from \ref{opt-alphas} and the integro-differential equation~\eqref{ode:1}-\eqref{ode:3}. 
%We do this by analyzing the asymptotic behavior of the FO conditions of the convex problem in Section~\ref{sec:cvx-optimal} and showing that it converges to a solution to the integro-differential system. 
To keep the notation simple throughout this section, we write $P=P_n$, $\beta=\beta_n$, and $\alpha_j=\alpha_j(\varepsilon,n)$. Additionally, we avoid taking a convergent subsequence of $\varepsilon_n$ that converges to some $\varepsilon$ to keep the notation simple. That is, we assume that $\varepsilon_n \to \varepsilon$.
Now, rearranging \ref{opt-alphas}, we obtain
\begin{align}
P'(\alpha_{n-2} )&=(n-1)(1+\varepsilon)-n(n+1)/2, \label{eq:varphi_j_n-2} \\
\left( \frac{n-1}{n-2} \right)\left(P'(\alpha_{n-3} )-P'(\alpha_{n-2} )\right)&=\left(\frac{n(n+1)}{2}-\beta(\alpha_{n-2} )\right),\label{eq:varphi_j_n-3}\\
\left( \frac{j+2}{j+1} \right)\left(P'(\alpha_{j} )-P'(\alpha_{j+1} )\right)&=(\beta(\alpha_{j+2} )-\beta(\alpha_{j+1} )),\text{ for }j\in \{0,1,\ldots,n-4\}.\label{eq:varphi_j_general}
\end{align}
We now sum Equations~\eqref{eq:varphi_j_n-2},~\eqref{eq:varphi_j_n-3} and Equations~\eqref{eq:varphi_j_general} for $j=i,\ldots,n-4$ and obtain:
\begin{align*}
    &P'(\alpha_{n-2}) + \left( \frac{n-1}{n-2} \right)\left(P'(\alpha_{n-3} )-P'(\alpha_{n-2} )\right) + \cdots + \left( \frac{j+2}{j+1} \right)\left(P'(\alpha_{j} )-P'(\alpha_{j+1} )\right)\\
    &= (n-1)(1+\varepsilon) - \frac{n(n+1)}{2} + \left(\frac{n(n+1)}{2}-\beta(\alpha_{n-2} )\right)\\
    &\quad + \left( \beta(\alpha_{n-2}) -\beta (\alpha_{n-3})  \right) + \cdots + (\beta(\alpha_{j+2})-\beta(\alpha_{j+1})).
\end{align*}
Rearranging this equation and replacing $\alpha_j=1-u_j/n$ with $0\leq u_j\leq n$, we obtain
%\[
%P'(\alpha_{j}) + \sum_{i=j}^{n-3}\frac{1}{i+1}\left( P'(\alpha_i)-P'(\alpha_{i+1})  \right)= (n-1)(1+\varepsilon) - P(\alpha_{j+1}) +\alpha_{j+1}P'(\alpha_{j+1}) 
%\]
%We replace $\alpha_j = 1- u_j/n$ with $0\leq u_j\leq n$. Then, 
\begin{align}
    &\underbrace{P'\left( 1- \frac{u_j}{n} \right) - P'\left( 1- \frac{u_{j+1}}{n} \right)}_{A} + \underbrace{\sum_{i=j}^{n-3}\frac{1}{i+1}\left( P'\left( 1 - \frac{u_i}{n} \right)-P'\left( 1 - \frac{u_{i+1}}{n} \right)  \right)}_B\nonumber\\
    &= (n-1)(1+\varepsilon) - \underbrace{\left(P\left(1-\frac{u_{j+1}}{n} \right) + \frac{u_{j+1}}{n}P'\left( 1- \frac{u_{j+1}}{n} \right)\right)}_C \label{eq:almost_limit}
\end{align}
We normalize this equation by $1/n$ and analyze the asymptotic behavior of each term $A,B$, and $C$ separately. We let $j/n \to x\in (0,1)$ as $n\to \infty$. We let also $u(x)$ be the limit of the piece-wise function obtained by joining the points $u_1,\ldots,u_n$. Formally, for each $n$, we define the piece-wise linear function $u_n:[0,1]\to \R$ via $u_n(i/n)=u_i$ and the function is linear between $u_n((i-1)/n)$ and $u_n(i/n)$. Then $u(x)=\lim_{n\to \infty} u_n(i/n)$. Now, for $A$, we have
    {\begin{align*}
    & \quad \frac{1}{n}\left(P'\left( 1- \frac{u_j}{n} \right) - P'\left( 1- \frac{u_{j+1}}{n} \right)\right) \\ 
    & = \sum_{\ell=1}^n \frac{\ell}{n}\left( \left( 1-\frac{u_{j+1}}{n}\right)^{\ell-1} - \left( 1-\frac{u_{j}}{n}\right)^{\ell-1}  \right) \\
    & = \sum_{\ell=1}^n \frac{\ell}{n} \frac{\left( \left( 1-\frac{u_{j+1}}{n}\right)^{\ell-1} - \left( 1-\frac{u_{j}}{n}\right)^{\ell-1}  \right)}{1/n} \frac{1}{n} \\
    & = n \left(  \frac{1-(1-\frac{u_{j+1}}{n})^n - u_{j+1}(1-\frac{u_{j+1}}{n})^n}{u_{j+1}^2} - \frac{1-(1-\frac{u_{j}}{n})^n - u_{j}(1-\frac{u_{j}}{n})^n}{u_{j}^2}     \right) \\
    & \to \left(  \frac{1- e^{-u(t)}(1+u(x))}{u(x)^2}  \right)' = h(u(x))',
    \end{align*}}
    where $h(u)=\int_0^1 t e^{-u t} \, \mathrm{d}t= (1-e^{-u}(1+u))/u^2$.

For $B$, we have
    {\begin{align*}
        &\frac{1}{n}\sum_{i=j}^{n-3}\frac{1}{i+1}\left( P'\left( 1 - \frac{u_j}{n} \right)-P'\left( 1 - \frac{u_{j+1}}{n} \right)  \right) \\
        & = \sum_{i=j}^{n-3} \frac{1}{(i+1)/n} \left(\sum_{\ell=1}^n \frac{\ell}{n} \frac{\left( \left( 1-\frac{u_{i+1}}{n}\right)^{\ell-1} - \left( 1-\frac{u_{i}}{n}\right)^{\ell-1}  \right)}{1/n} \frac{1}{n}\right) \frac{1}{n}\\
        & \to \int_x^1 \frac{1}{s} \int_0^1 t (e^{-u(s)}t)' \, \mathrm{d}t \, \mathrm{d}s = \int_x^1 \frac{1}{s} h(u(s))'\, \mathrm{d}s.
    \end{align*}}
For $C$, we have
    \begin{align*}
        \frac{1}{n}\left(P\left(1-\frac{u_{j+1}}{n} \right) +\frac{u_{j+1}}{n}P'\left( 1- \frac{u_{j+1}}{n} \right) \right) & = 1 - \sum_{\ell=1}^n \left( 1- \frac{u_{j+1}}{n}\right)^\ell \frac{1}{n} - u_{j+1} \sum_{\ell}^n \frac{\ell}{n} \left( 1- \frac{u_{j+1}}{n} \right)^{\ell-1} \frac{1}{n}\\
        & \to 1 - \int_0^1 e^{-u(x)t}\, \mathrm{d}t - u(x)\int_0^1 t e^{-u(x)t}\, \mathrm{d}t  \\
        &= 1 - \int_0^1 e^{-u(x)t}\, \mathrm{d}t - u(x)h(u(x)). \end{align*} 
Then, taking limit in Equation~\eqref{eq:almost_limit}, we obtain
\begin{align}
h(u(x))' + \int_x^1 \frac{1}{s} h(u(s))'\, \mathrm{d}s - \int_0^1 e^{-u(x)t}\, \mathrm{d}t - u(x)h(u(x)) = \varepsilon, \label{ode:initial_on_u}
\end{align}
with the conditions $u(0)=+\infty$ and $u(1)=0$.
Evaluating~\eqref{ode:initial_on_u} in $x=1$, gives us $h(u)'(1) = 1+\varepsilon$. Now, if we derive~\eqref{ode:initial_on_u} in $x$, we obtain
\begin{align*}
    0 & = h(u(x))'' - \frac{1}{x} h(u(x))' +h(u(x)) u'(x) - u'(x) h(u(x)) - u(x)h(u(x))'\\
    & = h(u(x))'' - \frac{1}{x}h(u(x))' - u(x) h(u(x))'\\
    & = h(u(x))'' - h(u(x))' (1/x + u(x)). 
\end{align*}
Then, by rearranging terms, we obtain
\begin{align*}
    \frac{h(u)''}{h(u)'}= \frac{1}{x}+u &\implies \ln((1+\varepsilon))- \ln (h(u)'(x)) =  - \ln x + \int_x^1 u(s) \, \mathrm{d}s \\
    & \implies h(u)'(x)= (1+\varepsilon) xe^{-\int_x^1 u(s)\, \mathrm{d}s}. 
\end{align*}
From here, the change of variable $y(x)=e^{-u(x)}$ gives us the system~\eqref{ode:1}-\eqref{ode:3}.

\end{document}